\patchcmd{\thebibliography}{\section*}{\section}{}{}
\newtheorem{theorem}{Theorem}
\newtheorem{lemma}{Lemma}
\newtheorem{remark}{Remark}
\newtheorem{defa}{Definition}
\newtheorem{col}{Corollary}
\begin{document}

\vspace{20mm}
\begin{center}
	\large{\textbf{Effective actions, cutoff regularization,}\\ 
		\textbf{quasi-locality, and gluing of partition functions}}
\end{center}
\vspace{2mm}
\begin{center}
	\large{\textbf{Aleksandr V. Ivanov}}
\end{center}
\begin{center}
	St. Petersburg Department
	of Steklov Mathematical Institute
	of Russian Academy of Sciences,\\
	27 Fontanka, St. Petersburg, 191023, Russia
\end{center}
\begin{center}
	Saint Petersburg State University,\\ 	7/9 Universitetskaya Emb., St. Petersburg, 199034, Russia
\end{center}
\begin{center}
	E-mail: regul1@mail.ru
\end{center}
\vspace{5mm}
\begin{flushright}
	\large{\textbf{\textit{ To the 100-th anniversary of N.M.Kalita}}}
\end{flushright}
\vspace{10mm}

\textbf{Abstract.} The paper studies a regularization of the quantum (effective) action for a scalar field theory in a general position on a compact smooth Riemannian manifold. As the main method, we propose the use of a special averaging operator, which leads to a quasi-locality and is a natural generalization of a cutoff regularization in the coordinate representation in the case of a curved metric. It is proved that the regularization method is consistent with a process of gluing of manifolds and partition functions, that is, with the transition from submanifolds to the main manifold using an additional functional integration. It is shown that the method extends to other models, and is also consistent with the process of multiplicative renormalization. Additionally, we discuss issues related to the correct introduction of regularization and the locality.

\vspace{2mm}
\textbf{Key words and phrases:} effective action, quantum action, partition function, deformation, cutoff regularization, averaging operator, gluing of manifolds, gluing of partition functions, functional integral, scalar theory, quasi-locality, renormalization.

\newpage	
\tableofcontents

\section{Introduction}
\label{sk:sec:int}

Currently, the functional integration method \cite{sk-1,sk-2,3} is a widely used tool for studying quantum field models \cite{9,10} and, as a rule, implies a transition to an formal series expansion with respect to some auxiliary small parameter. Thus, in a suitable limit, the calculation of the "functional integral" is reduced to calculating the coefficients of the series, which can be represented in the form of well-studied Gaussian integrals. The use of such techniques is more of a necessity, since a complete mathematical description for the "functional integral" does not yet exist. Nevertheless, the method occupies a key position in modern mathematical and theoretical physics.

One of the problems associated with the study of the mentioned series is the presence of divergences, see \cite{105}. This is due to the fact that the coefficients of the series are constructed using Green's functions (fundamental solutions of a special kind, see \cite{Gelfand-1964,Vladimirov-2002}), nonlinear combinations of which can lead either to the appearance of non-integrable densities or to singular functionals at singular points. Regularization and elimination (renormalization) of such objects play an important role \cite{6,7}, since they allow not only to establish new mathematical relations or understand the internal structure of the coefficients, but also to give a clear physical meaning to the quantities under consideration.

Recently, the field of mathematics related to the functorial quantum field theory \cite{sk-3,sk-4,sk-5,sk-6,sk-7} has been actively developing. Despite the fact that at the moment there are many studied models \cite{sk-8,sk-9,sk-10}, examples related to the consideration of well-known quantum field densities on Riemannian manifolds are quite rare \cite{sk-11,sk-12,sk-4,sk-15,sk-13,sk-14,sk-16}. One of the main problems in this area is the task of regularization and elimination of divergences. In particular, it is unclear how to construct a regularization that would work in all dimensions and be consistent with the process of "gluing" of partition functions, that is, with the transition from two smooth Riemannian submanifolds to a third smooth Riemannian manifold by connecting along a common boundary.

This paper describes the construction of a universal quasi-local regularization based on a deformation of a classical action by adding a deforming operator to the term responsible for the interaction. It is shown that the proposed regularization works in all dimensions and is consistent with the process of gluing manifolds and partition functions. It is important to note that in this paper the described procedure is a generalization of a cutoff regularization in the coordinate representation for the case of a smooth Riemannian manifold with a boundary. This approach was first proposed in \cite{34} and then significantly improved in subsequent works \cite{Iv-2024-1,Ivanov-Akac,Ivanov-Kharuk-2020,Ivanov-Kharuk-20222,Ivanov-Kharuk-2023,Iv-Kh-2024,Kh-2024}. The peculiarity of this approach is the transparency of the spectral meaning, the connection with the averaging operator \cite{Ivanov-2022,Iv-2024}, the fulfillment of an "applicability condition", see also\footnote{The paper \cite{sk-b-20} discusses the possibility of presenting the answer for averages from \cite{Iv-2024} in elementary functions.} \cite{Iv-2024}, related to the non-negativity of the spectrum, a quasi-locality, as well as the possibility of performing explicit loop calculations.

At the same time, it is important to pay attention to the fact that other approaches to regularization are used when working with standard quantum field models in flat space.  Among the most popular types, an alternative version of a cutoff regularization can be distinguished \cite{w6,w7,w8,Khar-2020,sk-b-19}, implicit regularization \cite{chi-0,chi-1,chi-2}, Pauli--Villars regularization \cite{Pauli-Villars}, and also dimensional one \cite{19,555} and using higher covariant derivatives \cite{Bakeyev-Slavnov,29-st}. At the same time, the last two play an essential role in theoretical physics and are currently indispensable from the point of view of multi-loop calculations. However, all these regularizations are poorly compatible with the use of classical mathematical physics, and consistency with the process of gluing manifolds and partition functions is either absent or unclear.

Moreover, the approach to regularization proposed in this paper seems to be distinctive, since it not only has a strict mathematical formulation that preserves the applicability of classical mathematical physics, but also includes a large number of explicit loop calculations already performed within the framework of standard quantum field models. In particular, consistency has been demonstrated in various theories with the results obtained using other regularization methods widely used in physics and having problems on Riemannian manifolds. Thus, it can be argued that the proposed approach is consistent not only with the process of gluing of manifolds, but also, paying attention to the conceptual component, is consistent with the gluing of two fields of science, mathematical and theoretical physics, in the context of quantum field theory.

\vspace{2mm}
\textbf{The structure of the work.}

\vspace{2mm}
\textbf{Section \ref{sk:sec:def}} is devoted to the description of the main objects and their properties. In particular, we give the definitions of a classical action of theory under consideration, a Laplace operator, a basic manifold with boundary and its deformed submanifolds. Next, useful relations are formulated for Green's functions constructed for the Dirichlet problem with zero boundary conditions, as well as some characteristics in terms of local and global properties. In the last section, we introduce a quantum (effective) action in the form of an formal series in powers of a small parameter $\sqrt{\hbar}$ and in the form of a formal "functional integral".

\vspace{2mm}
\textbf{Section \ref{sk:sec:reg}} contains a description of the process of regularization of the quantum action and a partition function by deformation of the classical action. The first part of the section is devoted to the description of the already well-known method of a cutoff regularization in coordinate representation in Euclidean space. Next, the approach is generalized to the case of a smooth Riemannian manifold with a boundary. In the last part, we formulate the main recipe for the introduction of the regularization and also discuss a quasi-locality. The definition of the partition function is given as well.

\vspace{2mm}
\textbf{Section \ref{sk:sec:app}} is devoted to proving the consistency of the proposed regularization with respect to the process of gluing of the partition functions (quantum actions). The section contains a definition of the concept of "partition functions gluing", the proof of the main theorem, as well as several important consequences. Among the most significant results of this section are the following four items.
\begin{itemize}
	\item \textbf{Lemma \ref{sk-l-11}} contains the derivation of deformed "gluing" relations of Green's functions on the submanifolds.
	\item In \textbf{Theorem\ref{sk-t-1}}, the consistency of the proposed regularization with the process of "gluing" for the partition functions (and quantum actions) is proved.
	\item In \textbf{Corollary \ref{sk-c-10}}, the possibility of an operator structure of the coupling coefficients is discussed.
	\item In \textbf{Corollary \ref{sk-c-11}}, we discuss the issue of consistency with the multiplicative renormalization process.
\end{itemize}

\vspace{2mm}
\textbf{Section \ref{sk:sec:dis}} contains various useful comments, remarks and acknowledgements. Separately, it is worth mentioning the discussion of applicability and extension to a wider class of models.

\vspace{2mm}
The text is organized in such a way as to be understandable to both mathematicians and theoretical physicists. This is manifested, in particular, in the use of a mathematical style of presentation, which includes definitions of objects in a form familiar to physicists. For example, when introducing the classical action, not the language of differential forms is used, but the standard notation in local coordinates. This approach allows us to maintain mathematical rigor, as well as prepare the reader for subsequent calculations and possible studies of divergences in specific models in the future.

\section{The main definitions}
\label{sk:sec:def}

\subsection{Classical action}
\label{sk:sec:def:1}

Let $\mathcal{M}$ be a smooth compact orientable n-dimensional Riemannian manifold, see \cite{sk-b-1,Nakahara:2003nw}, with a boundary $\partial\mathcal{M}=Y$, which may be absent, that is, the situation $\partial\mathcal{M}=\emptyset$ is possible. Next, we introduce the atlas $\mathcal{A}=\{(U_\alpha,\varphi_\alpha)\}_{\alpha\in I}$, consisting of a finite\footnote{This is possible because the manifold is compact. In general, finiteness is not a necessary condition for the constructions under consideration and is assumed only for convenience.} number of coordinate charts. Note that the smooth function $\varphi_\alpha(\cdot)$ maps the set $U_\alpha\subset\mathcal{M}$ to $V_\alpha\subset\mathbb{R}^n$, and the element $p\in U_\alpha$ to the element $x\in V_\alpha$. The components $x=(x^1,\ldots,x^n)$ will be notated by the indices $\mu,\nu\in\{1,\ldots,n\}$. At the same time, for the repeated\footnote{Since the case with a non-trivial metric is considered, the "upper" and "lower" indices have a fundamental difference. There will always be one upper index and one lower index in the summation by the repeating indices.}  indices (only $\mu$ or $\nu$) we have the summation over all possible values.

It is clear that in each card $(U_\alpha,\varphi_\alpha)\in\mathcal{A}$, we can write out the metric tensor $g_\alpha^{\mu\nu}(x)$ and its determinant $g_\alpha(x)$. Additionally, we define the auxiliary partition of unity, that is, the set of functions $\{\chi_\alpha\in C^\infty(\mathcal{M},[0,1])\}_{\alpha\in I}$, having the properties
\begin{equation}\label{sk-2-4}
\sum_{\alpha\in I}\chi_\alpha(p)=1\,\,\,\mbox{for all}\,\,\,p\in\mathcal{M},\,\,\,
\mathop{\mathrm{supp}}_{\mathcal{M}}(\chi_\alpha)\subset U_\alpha\,\,\,\mbox{for all}\,\,\,\alpha\in I.
\end{equation}
Next, consider the function $\phi\in C^\infty(\mathcal{M},\mathbb{R})$. In this case, within each neighborhood $U_\alpha$, the mapping $\phi(\cdot)$ corresponds to a function $\phi_\alpha(x)\in C^\infty(V_\alpha,\mathbb{R})$ such that $\phi(p)=\phi_\alpha(\varphi_\alpha(p))$ for all $p\in U_\alpha$. Then we introduce two functionals, which are defined by the following expressions
\begin{equation}\label{sk-2-1}
S_0[\phi;\mathcal{M}]=\frac{1}{2}
\sum_{\alpha\in I}\int_{V_\alpha}\mathrm{d}^nx\,g^{1/2}_\alpha(x)
\bigg(g^{\mu\nu}_\alpha(x)
\Big(\partial_{x^\mu}\phi^{\phantom{1}}_\alpha(x)\Big)
\Big(\partial_{x^\nu}\phi^{\phantom{1}}_\alpha(x)\Big)+m^2\phi^2_\alpha(x)\bigg)\chi_\alpha^{\phantom{1}}(\varphi_\alpha^{-1}(x)),
\end{equation}
\begin{equation}\label{sk-2-5}
	S_{\mathrm{int}}[\phi;\mathcal{M}]=
	\sum_{k=3}^{+\infty}
	\sum_{\alpha\in I}\int_{V_\alpha}\mathrm{d}^nx\,t_k^{\phantom{'}}g^{1/2}_\alpha(x)
	\phi^k_\alpha(x)\chi_\alpha^{\phantom{1}}(\varphi_\alpha^{-1}(x)).
\end{equation}
Here the coefficients\footnote{The case of nonzero and real $t_1,t_2$ can be reduced to the form \eqref{sk-2-5} by using a shift and redefining the mass parameter. In this case, $m^2$ will not necessarily be positive. Generalization to an arbitrary case is discussed in Section \ref{sk:sec:app-3}.} $t_k$ are called coupling constants, and $m\geqslant0$ plays the role of a mass parameter\footnote{We will assume that $m^2>0$, see the discussion after Lemma \ref{sk-l-6}.}.
\begin{remark}\label{sk-r-1}
The set of parameters $\{t_k\in\mathbb{C}\}_{k=3}^{+\infty}$ cannot take arbitrary values, this is due to the fact that the sum in \eqref{sk-2-5} may lose convergence. Next, we will assume that the coefficients correspond to a convergent series. Moreover, we assume that the real part of the functional $S_{\mathrm{int}}[\,\cdot\,,\mathcal{M}]$ is bounded from below, that is, there exists $C\in\mathbb{R}$, for which
\begin{equation}\label{sk-2-2}
\Re(S_{\mathrm{int}}[\phi,\mathcal{M}])\geqslant C\,\,\,\mbox{for all}\,\,\,\phi\in C^\infty(\mathcal{M},\mathbb{R}).
\end{equation}
\end{remark}
\noindent Here we have three common examples.
\begin{itemize}
	\item Cubic model: $t_3\in i\mathbb{R}\setminus\{0\}$ and $t_k=0$ for all $k>3$, see \cite{sk-b-2,sk-b-3,34}.
	\item Quartic model: $t_3,t_4\in\mathbb{C}$, $\Re(t_4)>0$ and $t_k=0$ for all $k>4$, see \cite{Iv-2024-1,29-3,29-4}.
	\item Sextic model: $t_3,t_4,t_5,t_6\in\mathbb{C}$, $\Re(t_6)>0$ and $t_k=0$ for all $k>6$, see \cite{Kh-2024,NK-1,NK-2,NK-3,NK-4,NK-5}.
\end{itemize}
\begin{defa}\label{sk-d-6}
The classical action $S_{\mathrm{cl}}[\,\cdot\,;\mathcal{M}]$  of scalar theory on the manifold $\mathcal{M}$ is equal to the sum of two functionals $S_0[\,\cdot\,,\mathcal{M}]+S_{\mathrm{int}}[\,\cdot\,;\mathcal{M}]$.
\end{defa}

\subsection{Submanifolds}
\label{sk:sec:def:4}
Let the assumptions of Section \ref{sk:sec:def:1} be valid. Consider a smooth closed\footnote{A compact manifold without a boundary, that is, $\partial\Sigma=\emptyset$.} submanifold $\Sigma$ of codimension 1, that is, $\dim\Sigma=n-1$, which divides the manifold $\mathcal{M}$ into two smooth compact orientable $n$-dimensional Riemannian manifolds, which will be denoted by $\mathcal{M}_l$ and $\mathcal{M}_r$. In this case, the set $Y=\partial\mathcal{M}$ splits into two disconnected parts $Y=Y_l\cup Y_r$ in such a way that
\begin{equation}\label{sk-4-1}
\partial\mathcal{M}_l=Y_l\cup\Sigma,\,\,\,\partial\mathcal{M}_r=Y_r\cup\Sigma.
\end{equation}
A schematic view is shown in Figure \ref{pic:sk-1}. Note that when gluing the manifold $\mathcal{M}$ back together from the parts $\mathcal{M}_l$ and $\mathcal{M}_r$, one "extra" boundary should be removed. This procedure is notated as $\mathcal{M}=\mathcal{M}_l\cup_{\Sigma}\mathcal{M}_r$.
\begin{defa}\label{sk-d-8}
Let the number $1/\Lambda_1$ denote the minimum of the set $\{d(p,q):p\in\Sigma,q\in Y\}$, where the function $d(\,\cdot\,,\,\cdot\,)$ denotes the geodesic distance.
\end{defa}
\begin{remark}\label{sk-r-2}
It is further assumed that $\Lambda_1<+\infty$.
\end{remark}

Then, for the obtained sets, we introduce the corresponding atlases and unit partitions. Considering the fact that the smooth case is studied in the work, the mentioned objects can be obtained by narrowing of domains. Indeed, let us demonstrate the process using the example of $\mathcal{M}_l$. First, we define a set of indices
\begin{equation}\label{sk-4-2}
	I_l=\{\alpha\in I:U_\alpha\cap\mathcal{M}_l\neq\emptyset\},
\end{equation}
then, as an atlas $\mathcal{A}_l$ and a unit partition, we choose
\begin{equation}\label{sk-4-3}
\mathcal{A}_l^{\phantom{1}}=\{(U_{l,\alpha}^{\phantom{1}},\varphi_{l,\alpha}^{\phantom{1}})\}_{\alpha\in I_l}^{\phantom{1}},\,\,\,
\{V_{l,\alpha}^{\phantom{1}}\}_{\alpha\in I_l}^{\phantom{1}},
\,\,\,\mbox{and}\,\,\,
\{\chi_{l,\alpha}^{\phantom{1}}\}_{\alpha\in I_l}^{\phantom{1}},
\end{equation}
where
\begin{equation}\label{sk-4-4}
U_{l,\alpha}^{\phantom{1}}=U_{\alpha}^{\phantom{1}}\cap\mathcal{M}_l^{\phantom{1}}
,\,\,\,
\varphi_{l,\alpha}^{\phantom{1}}=\varphi_{\alpha}^{\phantom{1}}\big|_{\mathcal{M}_l}^{\phantom{1}}
,\,\,\,
V_{l,\alpha}^{\phantom{1}}=\varphi_{l,\alpha}^{\phantom{1}}\big(U_{l,\alpha}^{\phantom{1}}\big)
=\varphi_{\alpha}^{\phantom{1}}\big(U_{l,\alpha}^{\phantom{1}}\big)
,\,\,\,
\chi_{l,\alpha}^{\phantom{1}}=\chi_{\alpha}^{\phantom{1}}\big|_{\mathcal{M}_l}^{\phantom{1}}.
\end{equation}
Similarly, we obtain the corresponding atlas and partitions for the submanifold $\mathcal{M}_r$. They are notated as follows
\begin{equation}\label{sk-4-5}
I_r^{\phantom{1}},\,\,\,\mathcal{A}_r^{\phantom{1}}=\{(U_{r,\alpha}^{\phantom{1}},\varphi_{r,\alpha}^{\phantom{1}})\}_{\alpha\in I_r}^{\phantom{1}},\,\,\,
\{V_{r,\alpha}^{\phantom{1}}\}_{\alpha\in I_r}^{\phantom{1}},\,\,\,
\{\chi_{r,\alpha}^{\phantom{1}}\}_{\alpha\in I_r}^{\phantom{1}}.
\end{equation}
Continuing, we obtain the corresponding sets and functions for the boundaries of $Y_l$, $Y_r$, and $\Sigma$. Since the process is completely similar, we pay attention only to the choice of designations
\begin{align}
Y_l^{\phantom{1}}&\to \hat{I}_l^{\phantom{1}},\hat{U}_{l,\alpha}^{\phantom{1}},
\hat{\varphi}_{l,\alpha}^{\phantom{1}},
\hat{V}_{l,\alpha}^{\phantom{1}},
\hat{\chi}_{l,\alpha}^{\phantom{1}},\\
Y_r^{\phantom{1}}&\to \hat{I}_r^{\phantom{1}},\hat{U}_{r,\alpha}^{\phantom{1}},
\hat{\varphi}_{r,\alpha}^{\phantom{1}},
\hat{V}_{r,\alpha}^{\phantom{1}},
\hat{\chi}_{r,\alpha}^{\phantom{1}},\\
\Sigma&\to \hat{I}_\Sigma^{\phantom{1}},\hat{U}_{\Sigma,\alpha}^{\phantom{1}},
\hat{\varphi}_{\Sigma,\alpha}^{\phantom{1}},
\hat{V}_{\Sigma,\alpha}^{\phantom{1}},
\hat{\chi}_{\Sigma,\alpha}^{\phantom{1}}.
\end{align}
\begin{defa}\label{sk-d-7}
A deformation $\mathcal{M}_\Lambda$ of the manifold $\mathcal{M}$ is equal to a submanifold consisting of elements
\begin{equation}\label{sk-4-6}
\mathcal{M}_\Lambda=\{p\in\mathcal{M}:d(p,q)\geqslant1/\Lambda\,\,\,\mbox{for all}\,\,\,q\in\partial\mathcal{M}\}.
\end{equation}
\end{defa}
\begin{col}\label{sk-c-3}
For $\Lambda\to+\infty$ the submanifold $\mathcal{M}_\Lambda$ tends to $\mathcal{M}$.
\end{col}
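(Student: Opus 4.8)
The plan is to realize $\mathcal{M}_\Lambda$ as a superlevel set of the boundary-distance function and then read off the convergence from the monotone behaviour of that function as the cutoff $1/\Lambda$ shrinks. First I would introduce
\begin{equation}
\delta(p)=\min_{q\in\partial\mathcal{M}}d(p,q),\qquad p\in\mathcal{M},
\end{equation}
which is well defined because $\partial\mathcal{M}$ is a closed (hence compact) subset of the compact manifold $\mathcal{M}$ and $q\mapsto d(p,q)$ is continuous; moreover $\delta$ is $1$-Lipschitz, and in particular continuous. By Definition~\ref{sk-d-7} we then have the clean description $\mathcal{M}_\Lambda=\delta^{-1}\big([1/\Lambda,+\infty)\big)$, so that each $\mathcal{M}_\Lambda$ is closed in $\mathcal{M}$ and therefore compact.

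The second step is monotonicity and exhaustion. If $\Lambda\leqslant\Lambda'$ then $1/\Lambda\geqslant1/\Lambda'$, whence $\mathcal{M}_\Lambda\subseteq\mathcal{M}_{\Lambda'}\subseteq\mathcal{M}$; thus $\{\mathcal{M}_\Lambda\}$ is an increasing family of compact subsets. Taking the union,
\begin{equation}
\bigcup_{\Lambda>0}\mathcal{M}_\Lambda=\{p\in\mathcal{M}:\delta(p)>0\}=\mathcal{M}\setminus\partial\mathcal{M},
\end{equation}
the interior of $\mathcal{M}$. Since $\partial\mathcal{M}$ carries no $n$-dimensional volume, this already yields the statement in the sense most relevant to the paper: for any continuous density the integrals over $\mathcal{M}_\Lambda$ converge to the integral over $\mathcal{M}$ as $\Lambda\to+\infty$, by monotone (or dominated) convergence.

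To make ``tends to'' precise at the level of the manifolds themselves, I would upgrade this to convergence in Hausdorff distance. Using a collar of the boundary --- the inward normal exponential map $(y,s)\mapsto\exp_y(s\nu_y)$ gives a diffeomorphism of $\partial\mathcal{M}\times[0,\varepsilon)$ onto a neighbourhood of $\partial\mathcal{M}$ on which $\delta\big(\exp_y(s\nu_y)\big)=s$ --- one sees that any $p$ with $\delta(p)=s<1/\Lambda<\varepsilon$ lies on such a normal geodesic, while the point $\exp_y\!\big((1/\Lambda)\,\nu_y\big)$ belongs to $\mathcal{M}_\Lambda$ and is at distance at most $1/\Lambda-s<1/\Lambda$ from $p$. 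Points with $\delta(p)\geqslant1/\Lambda$ already lie in $\mathcal{M}_\Lambda$. Combined with $\mathcal{M}_\Lambda\subseteq\mathcal{M}$, this gives $d_H(\mathcal{M}_\Lambda,\mathcal{M})\leqslant1/\Lambda\to0$.

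The only genuine subtlety --- and hence the main obstacle --- is this collar step: one must know that $\delta$ coincides with the normal coordinate near $\partial\mathcal{M}$ and that its small level sets are smooth hypersurfaces parallel to the boundary, so that for large $\Lambda$ the set $\mathcal{M}_\Lambda$ is itself a smooth manifold with boundary $\delta^{-1}(1/\Lambda)$ diffeomorphic to $\mathcal{M}$. This is precisely the content of the tubular-neighbourhood (collar) theorem for a smooth compact Riemannian manifold with boundary, so no new estimate is required; everything else reduces to the elementary monotonicity of $1/\Lambda$.
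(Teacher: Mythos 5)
Your argument is correct. The paper itself offers no proof of this corollary: it is stated immediately after Definition \ref{sk-d-7} and treated as self-evident from the fact that the excised collar $\{p:\delta(p)<1/\Lambda\}$ shrinks to the boundary as $\Lambda\to+\infty$. What you have done is supply the content that the paper leaves implicit, and you have done it in the two senses that actually matter for the rest of the text: (i) convergence of integrals of continuous densities, which is what is used when the regularized interaction $S_{\mathrm{int}}^\Lambda[\,\cdot\,,\mathcal{M}]=S_{\mathrm{int}}[\mathrm{H}_\omega^\Lambda(\,\cdot\,),\mathcal{M}_\Lambda]$ is supposed to recover $S_{\mathrm{int}}$ in the limit, and this follows already from the monotone exhaustion $\bigcup_\Lambda\mathcal{M}_\Lambda=\mathcal{M}\setminus\partial\mathcal{M}$ together with $\mathrm{vol}(\partial\mathcal{M})=0$; and (ii) Hausdorff convergence via the collar theorem, which also justifies the paper's implicit assumption (used in Remark \ref{sk-r-10} and Figure \ref{pic:sk-1}) that for large $\Lambda$ the set $\mathcal{M}_\Lambda$ is itself a smooth compact manifold with boundary. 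The only omission, and it is harmless, is the degenerate case $\partial\mathcal{M}=\emptyset$ explicitly allowed in Section \ref{sk:sec:def:1}, where $\delta\equiv+\infty$ and $\mathcal{M}_\Lambda=\mathcal{M}$ for every $\Lambda$, so there is nothing to prove.
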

\noindent Additionally introducing into consideration the deformations $\mathcal{M}_{\Lambda,l}$ and $\mathcal{M}_{\Lambda,r}$ for mentioned above manifolds, fix the corresponding notations for them
\begin{align}
\mathcal{M}_\Lambda^{\phantom{1}}&\to I_{\Lambda}^{\phantom{1}},U_{\Lambda,\alpha}^{\phantom{1}},
\varphi_{\Lambda,\alpha}^{\phantom{1}},
V_{\Lambda,\alpha}^{\phantom{1}},
\chi_{\Lambda,\alpha}^{\phantom{1}},\\
\mathcal{M}_{l,\Lambda}^{\phantom{1}}&\to I_{\Lambda,l}^{\phantom{1}},U_{\Lambda,l,\alpha}^{\phantom{1}},
\varphi_{\Lambda,l,\alpha}^{\phantom{1}},
V_{\Lambda,l,\alpha}^{\phantom{1}},
\chi_{\Lambda,l,\alpha}^{\phantom{1}},\\
\mathcal{M}_{r,\Lambda}^{\phantom{1}}&\to I_{\Lambda,r}^{\phantom{1}},U_{\Lambda,r,\alpha}^{\phantom{1}},
\varphi_{\Lambda,r,\alpha}^{\phantom{1}},
V_{\Lambda,r,\alpha}^{\phantom{1}},
\chi_{\Lambda,r,\alpha}^{\phantom{1}}.
\end{align}
Schematically deformed submanifolds are shown in Figure \ref{pic:sk-1}.

\begin{figure}[h]
	\center{\includegraphics[width=0.62\linewidth]{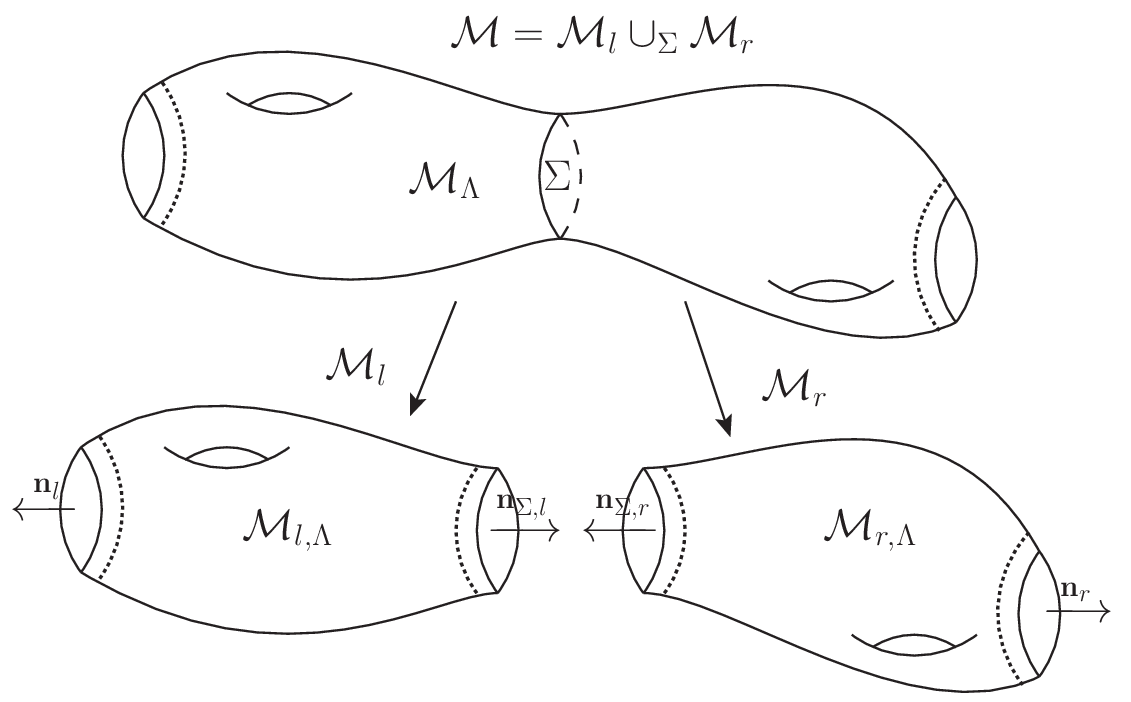}}
	\caption{The upper part we show the manifold $\mathcal{M}$ with the boundaries $Y_l$ on the left and $Y_r$ on the right, as well as the truncated (deformed) manifold $\mathcal{M}_\Lambda$, which differs from the original by the cut-off neighborhoods of the boundary with a thickness of $1/\Lambda$. Below we have two submanifolds $\mathcal{M}_{l}$ and $\mathcal{M}_{r}$, obtained by cutting the manifold $\mathcal{M}$ along $\Sigma$. Similarly, we obtain the deformed submanifolds  $\mathcal{M}_{l,\Lambda}$ and $\mathcal{M}_{r,\Lambda}$, as well as the corresponding vectors of external normals. }
	\label{pic:sk-1}
\end{figure}

\subsection{Operators and their properties}
\label{sk:sec:def:2}

In this section, we consider the introduction of a more concise entry for the classical action, see \eqref{sk-2-1} and \eqref{sk-2-5}, and also recall the basic properties of the operators included in it.
\begin{defa}\label{sk-d-9} Let $i\in\{l,r\}$. The symbols $\mathbf{n}_{i,\alpha}^\nu(x)$ and $\mathbf{n}_{\Sigma,i,\alpha}^\nu(x)$ denote the normalized external vectors of the normals to the boundaries $Y_i$ and $\Sigma$ of the submanifold $\mathcal{M}_i$ in the coordinate charts $(\hat{U}_{i,\alpha},\hat{\varphi}_{i,\alpha})$ and $(\hat{U}_{\Sigma,\alpha},\hat{\varphi}_{\Sigma,\alpha})$, respectively.
\end{defa}
\begin{col}\label{sk-c-4}
Taking into account Definition \ref{sk-d-9}, the relation $\mathbf{n}_{\Sigma,l,\alpha}^\nu(x)=-\mathbf{n}_{\Sigma,r,\alpha}^\nu(x)$ is correct.
\end{col}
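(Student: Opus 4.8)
The plan is to argue purely geometrically that along a two-sided hypersurface the two outward normals are antipodal, and then to check that the sign statement survives at the level of coordinate components. First I would fix a point $p\in\Sigma$ lying in the relevant charts, so that both $\mathbf{n}_{\Sigma,l,\alpha}^\nu$ and $\mathbf{n}_{\Sigma,r,\alpha}^\nu$ are expressed in one and the same coordinate system. This is legitimate because, by the restriction construction of Section \ref{sk:sec:def:4}, all the charts on $\Sigma$, $\mathcal{M}_l$ and $\mathcal{M}_r$ are obtained by narrowing the single atlas $\mathcal{A}$; hence the coordinates, and with them the metric used for normalization, agree on overlaps. Since $\dim\Sigma=n-1$, the tangent space splits as $T_p\mathcal{M}=T_p\Sigma\oplus N_p$ with $\dim N_p=1$, so there are exactly two metric-unit vectors orthogonal to $T_p\Sigma$, and they differ by an overall sign.

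Next I would identify which of these two unit vectors each normal is. By Definition \ref{sk-d-9}, $\mathbf{n}_{\Sigma,l,\alpha}^\nu$ is the \emph{outward} normal to $\partial\mathcal{M}_l$ along $\Sigma$, i.e.\ it points away from $\mathcal{M}_l$. Because $\mathcal{M}=\mathcal{M}_l\cup_\Sigma\mathcal{M}_r$ and $\Sigma$ separates the two pieces, a short geodesic segment issued from $p$ in the direction $\mathbf{n}_{\Sigma,l,\alpha}^\nu$ immediately leaves $\mathcal{M}_l$ and enters $\mathcal{M}_r$. By the same reasoning $\mathbf{n}_{\Sigma,r,\alpha}^\nu$ points away from $\mathcal{M}_r$ and into $\mathcal{M}_l$. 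Thus the two outward normals select the two opposite half-spaces determined by the hyperplane $T_p\Sigma$.

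Finally I would combine the two observations: both vectors are metric-unit, both are orthogonal to $T_p\Sigma$, and they lie on opposite sides of $T_p\Sigma$; since the orthogonal complement $N_p$ is one-dimensional, this forces $\mathbf{n}_{\Sigma,l,\alpha}^\nu(x)=-\mathbf{n}_{\Sigma,r,\alpha}^\nu(x)$ componentwise in the shared chart. The only point needing a little care—the main, if modest, obstacle—is to ensure the equality is read as an identity of components in a \emph{single} coordinate system, so that no Jacobian or transition factors intervene; this is exactly what the restriction construction of the atlases guarantees, after which the componentwise sign relation follows directly from the equality, up to sign, of the underlying geometric normal vectors.
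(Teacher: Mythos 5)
Your argument is correct and is exactly the reasoning the paper leaves implicit: the corollary is stated without proof as an immediate consequence of Definition \ref{sk-d-9}, and your elaboration (one-dimensional normal bundle of the codimension-one hypersurface $\Sigma$, opposite half-spaces for the two outward directions, and a common chart from the restricted atlases so that the componentwise identity makes sense) is the canonical justification. No gaps.
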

Next, we integrate the first term in formula \eqref{sk-2-1} by parts and use the auxiliary definitions, then we get
\begin{multline}\label{sk-3-2}
S_0[\phi;\mathcal{M}]=\frac{1}{2}\sum_{\alpha\in I}\int_{V_\alpha}\mathrm{d}^nx\,g^{1/2}_\alpha(x)
\Big(\phi^{\phantom{1}}_\alpha(x)A_\alpha^{\phantom{1}}(x)\phi^{\phantom{1}}_\alpha(x)\Big)\chi_\alpha^{\phantom{1}}\big(\varphi_\alpha^{-1}(x)\big)+\\+\frac{1}{2}
\sum_{\alpha\in \hat{I}_l}\int_{\hat{V}_{l,\alpha}}\mathrm{d}^{n-1}x\,g^{1/2}_\alpha(x)
\phi^{\phantom{1}}_\alpha(x)
\Big(\mathbf{n}^\nu_{l,\alpha}(x)\partial_{x^\nu}^{\phantom{1}}\phi^{\phantom{1}}_\alpha(x)\Big)\hat{\chi}_{l,\alpha}^{\phantom{1}}\big(\hat{\varphi}_{l,\alpha}^{-1}(x)\big)
+\\+\frac{1}{2}
\sum_{\alpha\in \hat{I}_r}\int_{\hat{V}_{r,\alpha}}\mathrm{d}^{n-1}x\,g^{1/2}_\alpha(x)
\phi^{\phantom{1}}_\alpha(x)
\Big(\mathbf{n}^\nu_{r,\alpha}(x)\partial_{x^\nu}^{\phantom{1}}\phi^{\phantom{1}}_\alpha(x)\Big)\hat{\chi}_{r,\alpha}^{\phantom{1}}\big(\hat{\varphi}_{r,\alpha}^{-1}(x)\big),
\end{multline}
where the differential operator in each coordinate chart $U_\alpha$ is defined in local coordinates as follows
\begin{equation}\label{sk-3-1}
A_\alpha^{\phantom{1}}(x)=-g^{-1/2}_\alpha(x)
\partial_{x^\mu}g^{\mu\nu}_\alpha(x)g^{1/2}_\alpha(x)\partial_{x^\nu}+m^2.
\end{equation}
\begin{defa}\label{sk-d-10}
The symbol $A$ denotes a second-order differential operator, which in each individual coordinate chart $(U_{\alpha},\varphi_{\alpha})$ of the atlas $\mathcal{A}$ is defined by local formula \eqref{sk-3-1}, that is, $A_\alpha(x)=A(p)$, where $\varphi_{\alpha}(p)=x$. On a submanifold from the set $\{\mathcal{M}_l,\mathcal{M}_r,\mathcal{M}_{l,\Lambda},\mathcal{M}_{r,\Lambda}\}$ the operator $A$ is defined in a similar way, but using the corresponding atlas, see Section \ref{sk:sec:def:4}.
\end{defa}
\begin{defa}\label{sk-d-11}
Let $i\in\{l,r\}$. The symbol $N_{i}$ denotes the differential operator of the normal derivative acting from $C^\infty(\mathcal{M},\mathbb{R})$ to $C^\infty(Y_i,\mathbb{R})$, which in each separate coordinate chart $(\hat{U}_{i,\alpha},\hat{\varphi}_{i,\alpha})$, where $\alpha\in\hat{I}_i$, is defined by local formula $N_{i}^{\phantom{1}}(q)=\mathbf{n}^\nu_{l,\alpha}(x)\partial_{x^\nu}^{\phantom{1}}$, where $\hat{\varphi}_{i,\alpha}(q)=x$. In this case, the partial derivative itself is calculated near the point under consideration inside $\mathcal{M}$  with a further transition to the boundary. The restriction $N_{i}$ to the subset $C^\infty(\mathcal{M}_i,\mathbb{R})$ is notated by the same symbol.  Similarly, we define the normal derivative operator $N_{\Sigma,i}$, acting from $C^\infty(\mathcal{M}_i,\mathbb{R})$ to $C^\infty(\Sigma,\mathbb{R})$.
\end{defa}

Taking into account the latest definitions, formula \eqref{sk-3-2} can be written in a more concise form. Indeed, we introduce the measure of integration on the manifold $\mathcal{M}$ as follows
\begin{equation}\label{sk-3-3}
\mathrm{d}^np=\sum_{\alpha\in I}\mathrm{d}^nx\,g^{1/2}_\alpha\big(\varphi_\alpha^{\phantom{1}}(p)\big)
\chi_\alpha^{\phantom{1}}(p),
\end{equation}
where $x=\varphi_\alpha(p)\in V_\alpha$ in the corresponding coordinate chart. Introducing the notation in a similar way\footnote{The left part of \eqref{sk-3-3} does not contain an index corresponding to a manifold under consideration, since the integral contains an integration domain, and thus confusion does not arise.} for the submanifolds, we get
\begin{equation}\label{sk-3-4}
	S_0[\phi;\mathcal{M}]=\frac{1}{2}\int_{\mathcal{M}}\mathrm{d}^np\,\phi(p)A(p)\phi(p)+\frac{1}{2}
	\sum_{i\in\{l,r\}}\int_{Y_i}\mathrm{d}^{n-1}q\,\phi(q)N_i(q)\phi(q).
\end{equation}
In the same way, we receive
\begin{equation}\label{sk-3-5}
S_{\mathrm{int}}[\phi;\mathcal{M}]=
\sum_{k=3}^{+\infty}t_k\int_{\mathcal{M}}\mathrm{d}^np\,
\phi^k(p).
\end{equation}

Let us move on to the discussion of the second-order differential operator $A$. It is known from the general theory that the spectral problem on the manifold $\mathcal{M}$ with zero Dirichlet conditions 
\begin{equation}\label{sk-3-6}
\begin{cases}
A(p)\psi_\lambda(p)=\lambda\psi_\lambda(p),\,\,\,p\in\mathcal{M}\setminus\partial\mathcal{M};\\
\psi_\lambda(p)=0,\,\,\,p\in\partial\mathcal{M},
\end{cases}
\end{equation}
is well posed. The spectrum for this problem is countable, bounded from below by $\lambda\geqslant m^2$, see Section $8$ in \cite{sk-b-5}, and the only accumulation point is infinity. The corresponding Green's function
\begin{equation}\label{sk-3-8}
G(p_1,p_2)=\sum_\lambda\psi_\lambda(p_1)\frac{1}{\lambda}\psi_\lambda(p_2)
\end{equation}
for this task is well defined on the set $\mathcal{M}\times\mathcal{M}$ without the diagonal $\{p\times p:p\in\mathcal{M}\}$, satisfies the differential equation\footnote{In the general position on a Riemannian manifold, $\delta(p_1,p_2)$ denotes the kernel of the unit operator. In the particular Euclidean case, such a kernel is proportional to the standard $n$-dimensional delta function that depends on the difference of the arguments.}
\begin{equation}\label{sk-3-7}
A(p_1)G(p_1,p_2)=\sum_\lambda\psi_\lambda(p_1)\psi_\lambda(p_2)=\delta(p_1,p_2),
\end{equation}
as well as zero conditions on the boundary. Functions for the manifolds $\mathcal{M}_l$ and $\mathcal{M}_r$ are constructed in a similar way.
\begin{defa}\label{sk-d-12}
The symbols $G(\,\cdot\,,\,\cdot\,)$, $G_l(\,\cdot\,,\,\cdot\,)$, and $G_r(\,\cdot\,,\,\cdot\,)$ denote Green's functions for the operator $A$ on the manifolds $\mathcal{M}$, $\mathcal{M}_l$, and $\mathcal{M}_r$, respectively, with zero Dirichlet conditions on the boundary.
\end{defa}
\begin{lemma}\label{sk-l-4}
Let $\eta\in C^\infty(Y,\mathbb{R})$ such that $\eta=\eta_l\in C^\infty(Y_l,\mathbb{R})$ on $Y_l$ and $\eta=\eta_r\in C^\infty(Y_r,\mathbb{R})$ on $Y_r$. Then the solution $\phi^\eta(\cdot)$ to the problem
\begin{equation}\label{sk-3-9}
\begin{cases}
A(p)\phi^\eta(p)=0,\,\,\,p\in\mathcal{M}\setminus\partial\mathcal{M};\\
\phi^\eta(p)=\eta_i(p),\,\,\,p\in Y_i,\,\,\,i\in\{l,r\},
\end{cases}
\end{equation}
belongs to the class $C^\infty(\mathcal{M},\mathbb{R})$ and is represented as the following integral
\begin{equation}\label{sk-3-10}
\phi^{\eta}(p)=-\sum_{i\in\{l,r\}}\int_{Y_i}\mathrm{d}^{n-1}q\,\Big(N_i(q)G(p,q)\Big)\eta_i(q).
\end{equation}
\end{lemma}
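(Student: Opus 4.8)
The plan is to obtain \eqref{sk-3-10} as a consequence of Green's second identity for the operator $A$, applied to the Green's function $G(p,\,\cdot\,)$ and the solution $\phi^\eta$. The identity is already essentially contained in \eqref{sk-3-2}: polarizing that integration by parts yields, for all $u,v\in C^\infty(\mathcal{M},\mathbb{R})$, a term symmetric in $u,v$ plus the boundary contribution $\sum_{i}\int_{Y_i}\mathrm{d}^{n-1}q\,u\,(N_i v)$. Writing this relation once for the pair $(u,v)$ and once for $(v,u)$ and subtracting, the symmetric part cancels and there remains
\begin{equation}
\int_{\mathcal{M}}\mathrm{d}^np\,\Big(u(p)\,A(p)v(p)-v(p)\,A(p)u(p)\Big)
=\sum_{i\in\{l,r\}}\int_{Y_i}\mathrm{d}^{n-1}q\,\Big(v(q)\,N_i(q)u(q)-u(q)\,N_i(q)v(q)\Big).
\end{equation}
First I would record the symmetry $G(p,q)=G(q,p)$, which is immediate from the spectral representation \eqref{sk-3-8}; together with \eqref{sk-3-7} it gives $A(q)G(p,q)=\delta(p,q)$, so that $G(p,\,\cdot\,)$ is a fundamental solution in its second argument as well, and it vanishes for $q\in\partial\mathcal{M}$.

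Next I would set $u(q)=G(p,q)$ and $v(q)=\phi^\eta(q)$ for a fixed interior point $p$. On the volume side, $A\phi^\eta=0$ in the interior by \eqref{sk-3-9}, whereas $v\,Au=\phi^\eta(q)\delta(p,q)$ integrates to $\phi^\eta(p)$, so that side collapses to $-\phi^\eta(p)$. On the boundary side, the Dirichlet condition $G(p,q)=0$ for $q\in Y_i$ annihilates the term $u\,N_i v$, and the boundary data $\phi^\eta(q)=\eta_i(q)$ on $Y_i$ turn the surviving term into $\sum_{i}\int_{Y_i}\mathrm{d}^{n-1}q\,\eta_i(q)\,N_i(q)G(p,q)$. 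Equating the two sides reproduces precisely $-\phi^\eta(p)=\sum_{i}\int_{Y_i}\mathrm{d}^{n-1}q\,\big(N_i(q)G(p,q)\big)\eta_i(q)$, which is \eqref{sk-3-10}.

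The hard part is the singular diagonal. Since $G(p,\,\cdot\,)$ blows up at $q=p$, Green's identity cannot be applied verbatim on all of $\mathcal{M}$, and the step $\int_{\mathcal{M}}\phi^\eta(q)\delta(p,q)\,\mathrm{d}^nq=\phi^\eta(p)$ must be justified rather than asserted. I would handle this by excising a small geodesic ball $B_\varepsilon(p)$, applying the identity on the smooth domain $\mathcal{M}\setminus B_\varepsilon(p)$ (where both $u$ and $v$ satisfy $Au=Av=0$), and letting $\varepsilon\to0$. This produces an additional integral over the sphere $\partial B_\varepsilon(p)$; using that $AG=\delta$ forces $G$ to carry the same leading short-distance singularity as the free fundamental solution, the piece $G\,N\phi^\eta$ vanishes in the limit, while the piece $\phi^\eta\,NG$ concentrates the value of $\phi^\eta$ at $p$, reconstructing exactly the contribution encoded in $A(q)G(p,q)=\delta(p,q)$. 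Finally, the regularity claim $\phi^\eta\in C^\infty(\mathcal{M},\mathbb{R})$, as well as existence and uniqueness of the solution to \eqref{sk-3-9}, is not delivered by the representation formula but by standard elliptic boundary regularity: $A$ is a second-order elliptic operator with smooth coefficients, $\partial\mathcal{M}$ is smooth, the data $\eta_i$ are smooth, and $m^2>0$ makes the associated quadratic form coercive; I would invoke this theory directly.
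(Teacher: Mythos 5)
Your argument is correct and follows essentially the same route the paper indicates: the paper itself does not write out a proof but states that \eqref{sk-3-10} "is obtained by integration by parts" while the smoothness "is proved nontrivially," deferring both to Section 5.1 of Taylor's monograph, and your Green's-identity computation with the excised geodesic ball (signs included) together with the appeal to elliptic boundary regularity is precisely that standard derivation written out. No gaps.
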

\noindent This statement is well-known. Belonging to the class is proved nontrivially, while formula \eqref{sk-3-10} itself is obtained by integration by parts. The proof and detailed calculations can be found in Section $5.1$ of monograph \cite{sk-b-4}. Note that the operator that maps a function to a given behavior on the boundary has a single continuous extension to the mapping from the space\footnote{Regarding Sobolev spaces, see Section $4$ in \cite{sk-b-4}.} $H^{s}(Y)$ to $H^{s+1/2}(\mathcal{M})$ for $s\geqslant 1/2$, see Proposition $1.7$ in the same section\footnote{In fact, it can be expanded to $s\geqslant-1/2$, see Proposition $1.8$.}.
\begin{defa}\label{sk-d-13}
Let $i\in\{l,r\}$, and $\eta\in C^\infty(\partial\mathcal{M}_i,\mathbb{R})$ such that $\eta=\eta_i\in C^\infty(Y_i,\mathbb{R})$ on $Y_i$ and $\eta=\eta_\Sigma\in C^\infty(\Sigma,\mathbb{R})$ on $\Sigma$. The symbol $\phi^\eta_i(\cdot)$ denotes the solution to the problem \eqref{sk-3-9} on the submanifold $\mathcal{M}_i$, which, according to Lemma \ref{sk-l-4}, has the following explicit form
\begin{equation}\label{sk-3-11}
\phi^{\eta}_i(p)=-\int_{Y_i}\mathrm{d}^{n-1}q\,\Big(N_i(q)G_i(p,q)\Big)\eta_i(q)
-\int_{\Sigma}\mathrm{d}^{n-1}q\,\Big(N_{\Sigma,i}(q)G_i(p,q)\Big)\eta_\Sigma(q).
\end{equation}
\end{defa}
\begin{remark}\label{sk-r-3}
Under the conditions of Definition \ref{sk-d-13}, the functions $\eta_i$ and $\eta_\Sigma$ are given on disconnected smooth submanifolds of the same dimension. It is clear that any of the functions can be extended to another disconnected piece by zero, thus expanding the domain of definition. For convenience, we will further assume that the sums $\eta_i+\eta_\Sigma$ and $\eta_l+\eta_r$ are smooth functions on the sets $Y_i\cup\Sigma$ and $Y_l\cup Y_r$, respectively. At the same time, a similar reasoning can also be extended to the solutions $C^\infty(\mathcal{M},\mathbb{R})\ni\phi^{\eta_l+\eta_r}_{\phantom{i}}=\phi^{\eta_l}_{\phantom{i}}+\phi^{\eta_r}_{\phantom{i}}$ and $C^\infty(\mathcal{M}_i,\mathbb{R})\ni\phi^{\eta_i+\eta_\Sigma}_i=\phi^{\eta_i}_i+\phi^{\eta_\Sigma}_i$.
\end{remark}

\begin{lemma}\label{sk-l-5} Let $i\in\{l,r\}$, $\eta_i\in C^\infty(Y_i,\mathbb{R})$, and $\eta_\Sigma\in C^\infty(\Sigma,\mathbb{R})$. Next, introduce several functionals
\begin{equation}\label{sk-3-12}
S_{l,r}(\eta_l,\eta_r;\mathcal{M})=\int_{Y_l}\mathrm{d}^{n-1}q_1\int_{Y_r}\mathrm{d}^{n-1}q_2\,
\eta_l(q_1)
\Big(N_{l}(q_1)N_{r}(q_2)G(q_1,q_2)\Big)
\eta_r(q_2)=S_{r,l}(\eta_r,\eta_l;\mathcal{M}),
\end{equation}
\begin{equation}\label{sk-3-13}
S_{i,\Sigma}(\eta_i,\eta_\Sigma;\mathcal{M}_i)=\int_{Y_i}\mathrm{d}^{n-1}q_1\int_{\Sigma}\mathrm{d}^{n-1}q_2\,
\eta_i(q_1)
\Big(N_{i}(q_1)N_{\Sigma,i}(q_2)G_i(q_1,q_2)\Big)
\eta_\Sigma(q_2)=S_{\Sigma,i}(\eta_\Sigma,\eta_i;\mathcal{M}_i).
\end{equation}
Then for all $\phi\in C^\infty_0(\mathcal{M},\mathbb{R})$ and $\phi_i\in C^\infty_0(\mathcal{M}_i,\mathbb{R})$, where the subscript means that the functions vanish at the boundary of the corresponding manifold, the following relations are true
\begin{equation}\label{sk-3-15}
S_0[\phi+\phi^{\eta_l+\eta_r},\mathcal{M}]=S_0[\phi,\mathcal{M}]+S_0[\phi^{\eta_l},\mathcal{M}]+S_0[\phi^{\eta_r},\mathcal{M}]-S_{l,r}(\eta_l,\eta_r;\mathcal{M}),
\end{equation}
\begin{equation}\label{sk-3-16}
	S_0[\phi_i+\phi^{\eta_i+\eta_\Sigma},\mathcal{M}_i]=S_0[\phi_i,\mathcal{M}_i]+S_0[\phi^{\eta_i},\mathcal{M}_i]+S_0[\phi^{\eta_\Sigma},\mathcal{M}_i]-
	S_{i,\Sigma}(\eta_i,\eta_\Sigma;\mathcal{M}_i).
\end{equation}
\end{lemma}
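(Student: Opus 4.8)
The plan is to use that $S_0[\,\cdot\,;\mathcal{M}]$ is a quadratic functional and therefore the diagonal of a unique symmetric bilinear form. Writing $S_0[\phi;\mathcal{M}]=B(\phi,\phi)$, where $B$ is obtained by polarizing the manifestly symmetric integrand of \eqref{sk-2-1}, and recalling from Remark \ref{sk-r-3} that $\phi^{\eta_l+\eta_r}=\phi^{\eta_l}+\phi^{\eta_r}$, bilinearity gives at once
\[
S_0[\phi+\phi^{\eta_l}+\phi^{\eta_r};\mathcal{M}]=S_0[\phi;\mathcal{M}]+S_0[\phi^{\eta_l};\mathcal{M}]+S_0[\phi^{\eta_r};\mathcal{M}]+2B(\phi,\phi^{\eta_l})+2B(\phi,\phi^{\eta_r})+2B(\phi^{\eta_l},\phi^{\eta_r}).
\]
Comparing with \eqref{sk-3-15}, the whole statement reduces to showing that the first two cross terms vanish and that the third equals $-S_{l,r}(\eta_l,\eta_r;\mathcal{M})$.

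For the cross terms I would use the integration by parts already performed in passing from \eqref{sk-2-1} to \eqref{sk-3-2} (the derivatives of the partition of unity cancel after summation, since $\sum_\alpha\chi_\alpha\equiv1$). This produces the asymmetric representation
\[
2B(\phi_1,\phi_2)=\int_{\mathcal{M}}\mathrm{d}^np\,\phi_1(p)A(p)\phi_2(p)+\sum_{i\in\{l,r\}}\int_{Y_i}\mathrm{d}^{n-1}q\,\phi_1(q)N_i(q)\phi_2(q),
\]
which agrees with \eqref{sk-3-4} on the diagonal and, by Green's second identity, is symmetric in $\phi_1,\phi_2$. The inputs are: $\phi\in C^\infty_0(\mathcal{M},\mathbb{R})$ vanishes on all of $\partial\mathcal{M}=Y_l\cup Y_r$, while, by Lemma \ref{sk-l-4}, $\phi^{\eta_l}$ and $\phi^{\eta_r}$ are $A$-harmonic with boundary traces $\phi^{\eta_l}|_{Y_l}=\eta_l$, $\phi^{\eta_l}|_{Y_r}=0$, and symmetrically for $\phi^{\eta_r}$. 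Evaluating $2B(\phi,\phi^{\eta_l})$ with the operator placed on the harmonic factor, the bulk integral vanishes because $A\phi^{\eta_l}=0$ and the boundary integral vanishes because $\phi|_{\partial\mathcal{M}}=0$; the same reasoning gives $2B(\phi,\phi^{\eta_r})=0$.

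It remains to compute $2B(\phi^{\eta_l},\phi^{\eta_r})$. Putting the operator on $\phi^{\eta_r}$ kills the bulk term, and since $\phi^{\eta_l}$ equals $\eta_l$ on $Y_l$ and $0$ on $Y_r$ only the $Y_l$ boundary integral survives, leaving $\int_{Y_l}\mathrm{d}^{n-1}q_1\,\eta_l(q_1)\,N_l(q_1)\phi^{\eta_r}(q_1)$. Substituting the explicit kernel representation \eqref{sk-3-10} for $\phi^{\eta_r}$ (only its $Y_r$ term is present) and differentiating under the integral sign yields exactly $-S_{l,r}(\eta_l,\eta_r;\mathcal{M})$ in the form \eqref{sk-3-12}; the stated symmetry $S_{l,r}=S_{r,l}$ is the manifestation of the symmetry of $B$ and serves as a consistency check. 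Collecting the three contributions gives \eqref{sk-3-15}. Identity \eqref{sk-3-16} follows by repeating the argument verbatim on $\mathcal{M}_i$, whose boundary is $Y_i\cup\Sigma$, with $\phi^{\eta_i}_i,\phi^{\eta_\Sigma}_i$, $G_i$ and $N_{\Sigma,i}$ replacing $\phi^{\eta_l},\phi^{\eta_r}$, $G$ and the second normal; the surviving cross term then reproduces \eqref{sk-3-13}.

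The algebra is routine; the point requiring care is the analytic justification of the integrations by parts. One must know that the harmonic extensions are smooth up to the boundary, so that their traces and normal derivatives $N_i\phi^{\eta_r}$ exist and Green's second identity applies — this is precisely the nontrivial regularity assertion of Lemma \ref{sk-l-4}. One must also justify differentiating under the integral sign in \eqref{sk-3-10}; this is harmless because the two arguments of $G$ are kept on disjoint, hence positively separated, boundary components ($Y_l$ and $Y_r$ in the first identity, and $Y_i$ and $\Sigma$, separated by at least $1/\Lambda_1$ per Definition \ref{sk-d-8}, in the second), so the diagonal singularity of $G$ is never reached.
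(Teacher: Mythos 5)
Your proposal is correct and is essentially the argument the paper intends: the paper's "proof" is a one-line citation to the calculations after formula $(3.5)$ of \cite{sk-16}, stating that the identities follow by substituting the functions into \eqref{sk-2-1}, \eqref{sk-3-4} and integrating by parts, which is exactly the polarization-plus-Green's-identity computation you carry out in detail (including the correct identification of the surviving cross term with $-S_{l,r}$, resp. $-S_{i,\Sigma}$, via \eqref{sk-3-10}). Your closing remarks on boundary regularity from Lemma \ref{sk-l-4} and on the positive separation of the boundary components match the justification the paper records in Remark \ref{sk-r-4}.
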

\noindent The last statement can be found in \cite{sk-16}, see calculations after formula $(3.5)$. It follows after substituting the functions in \eqref{sk-2-1} and \eqref{sk-3-4} and performing integration by parts.
\begin{remark}\label{sk-r-4}
Note that all the terms in the right hand sides of \eqref{sk-3-15} and \eqref{sk-3-16} are well defined and finite, since the solution of \eqref{sk-3-10} is smooth, and the assumption from the Remark \ref{sk-r-2} is fulfilled. The last condition guarantees that the non-integrable behaviour on the diagonal is not present in the kernels from formulas \eqref{sk-3-12} and \eqref{sk-3-13}.
\end{remark}
\begin{defa}\label{sk-d-14}
Let $i\in\{l,r\}$. The symbol $D(Y_i,\mathcal{M})(\cdot)$ denote a pseudodifferential operator of the first order acting from $C^\infty(Y_i,\mathbb{R})$ to $C^\infty(Y_i,\mathbb{R})$ according to the rule
\begin{equation}\label{sk-3-19}
D(Y_i,\mathcal{M})(q)\eta_i(q)=N_{i}(q)\phi^{\eta_i}(q)
\end{equation}
for all $\eta_i\in C^\infty(Y_i,\mathbb{R})$ and $q\in Y_i$. Similarly we define $D(Y_i,\mathcal{M}_i)(\cdot)$ and $D(\Sigma,\mathcal{M}_i)(\cdot)$.
\end{defa}
\begin{col}\label{sk-c-5}
Let $i\in\{l,r\}$. Taking into account the results of Lemmas \ref{sk-l-4} and \ref{sk-l-5} and Definition \ref{sk-d-14}, the relations are correct
\begin{align}\label{sk-3-20}
S_0[\phi^{\eta_i},\mathcal{M}]&=\frac{1}{2}\int_{Y_i}\mathrm{d}^{n-1}q\,\eta_i(q)D(Y_i,\mathcal{M})(q)\eta_i(q)
,\\\label{sk-3-21}
S_0[\phi^{\eta_i},\mathcal{M}_i]&=\frac{1}{2}\int_{Y_i}\mathrm{d}^{n-1}q\,\eta_i(q)D(Y_i,\mathcal{M}_i)(q)\eta_i(q)
,\\\label{sk-3-22}
S_0[\phi^{\eta_\Sigma^{\phantom{1}}},\mathcal{M}_i]&=\frac{1}{2}\int_{\Sigma}\mathrm{d}^{n-1}q\,\eta_{\Sigma}(q)D(\Sigma,\mathcal{M}_i)(q)\eta_{\Sigma}(q).
\end{align}
\end{col}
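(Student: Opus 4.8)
The plan is to evaluate the quadratic functional $S_0$ directly on each harmonic extension, using the integrated-by-parts representation \eqref{sk-3-4} together with the defining boundary value problem \eqref{sk-3-9}. The crucial observation is that each of $\phi^{\eta_i}$ and $\phi^{\eta_\Sigma}$ is $A$-harmonic in the interior, so the bulk contribution to $S_0$ drops out and only a single boundary integral survives.

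First I would treat \eqref{sk-3-20}. Taking $\phi=\phi^{\eta_i}$ on $\mathcal{M}$ in \eqref{sk-3-4}, the first (bulk) integral vanishes identically because $A(p)\phi^{\eta_i}(p)=0$ for $p\in\mathcal{M}\setminus\partial\mathcal{M}$ by \eqref{sk-3-9}. In the remaining boundary sum $\tfrac12\sum_{j\in\{l,r\}}\int_{Y_j}\mathrm{d}^{n-1}q\,\phi^{\eta_i}(q)N_j(q)\phi^{\eta_i}(q)$, the function $\phi^{\eta_i}$ carries the Dirichlet datum $\eta_i$ on $Y_i$ and vanishes on the complementary component $Y_j$, $j\neq i$, again by \eqref{sk-3-9}. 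Hence only the $Y_i$-term is nonzero, and there $\phi^{\eta_i}|_{Y_i}=\eta_i$ while $N_i\phi^{\eta_i}=D(Y_i,\mathcal{M})\eta_i$ by Definition \ref{sk-d-14}. Substituting reproduces exactly \eqref{sk-3-20}.

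For \eqref{sk-3-21} and \eqref{sk-3-22} I would repeat the computation on the submanifold $\mathcal{M}_i$, whose boundary is $Y_i\cup\Sigma$, so that the analog of \eqref{sk-3-4} now contains two boundary integrals, one over $Y_i$ and one over $\Sigma$. For $\phi^{\eta_i}$ (the extension of $\eta_i$ with zero datum on $\Sigma$) the bulk term vanishes, the $\Sigma$-integral drops since $\phi^{\eta_i}|_\Sigma=0$, and the $Y_i$-integral gives \eqref{sk-3-21} after using $N_i\phi^{\eta_i}=D(Y_i,\mathcal{M}_i)\eta_i$. Symmetrically, for $\phi^{\eta_\Sigma}$ (zero datum on $Y_i$) only the $\Sigma$-integral survives and yields \eqref{sk-3-22} via $N_{\Sigma,i}\phi^{\eta_\Sigma}=D(\Sigma,\mathcal{M}_i)\eta_\Sigma$.

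There is no serious analytic obstacle here; the only points requiring care are bookkeeping ones. One must invoke Lemma \ref{sk-l-4} to guarantee that $\phi^{\eta_i}$ and $\phi^{\eta_\Sigma}$ are smooth up to the boundary, so that the boundary integrands and the action of $D$ are well defined, and that the complementary Dirichlet data are genuinely zero, which is what annihilates all but one boundary integral. Finiteness of the resulting expressions is ensured by Remark \ref{sk-r-4} (equivalently the assumption $\Lambda_1<+\infty$ of Remark \ref{sk-r-2}), which excludes non-integrable diagonal behaviour in the kernels \eqref{sk-3-12} and \eqref{sk-3-13}. As an alternative derivation one could instead set $\phi=0$ and switch off one of the two Dirichlet data in the decomposition identities \eqref{sk-3-15} and \eqref{sk-3-16} of Lemma \ref{sk-l-5}; this route reaches the same formulas but drags in the auxiliary functionals $S_{l,r}$ and $S_{i,\Sigma}$ and is therefore less direct than the computation above.
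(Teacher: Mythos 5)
Your proof is correct and follows exactly the route the paper intends: the corollary is stated without a written proof as an immediate consequence of the integrated-by-parts form \eqref{sk-3-4}, the harmonicity and boundary data of $\phi^{\eta_i}$, $\phi^{\eta_\Sigma}$ from Lemma \ref{sk-l-4} and Definition \ref{sk-d-13}, and the definition of the $D$-operators in Definition \ref{sk-d-14}. The only minor quibble is that finiteness here needs nothing beyond smoothness of $N_i\phi^{\eta_i}$ up to the boundary (Lemma \ref{sk-l-4}); the diagonal-integrability discussion of Remarks \ref{sk-r-2} and \ref{sk-r-4} concerns the cross kernels \eqref{sk-3-12}--\eqref{sk-3-13}, which enter only your alternative route.
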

\begin{remark}\label{sk-r-5}
Note that, taking into account Definition \ref{sk-d-14}, from a formal point of view, it is possible to write out integral kernels corresponding to the $D$-operators. However, such a representation is rough due to the fact that on the diagonal such kernels contain a non-integrable density\footnote{For example, look at formula $(3.18)$ from Section $5.3$ of monograph \cite{sk-b-4}. The normal derivative of the kernel at the boundary contains a non-integrable density.}. This leads to the need to talk about "unnecessary" regularization. For this reason, we preserve the quadratic shapes on the boundaries in a mathematically clean form, represented by the formulas from Corollary \ref{sk-c-5}.
\end{remark}
\begin{lemma}\label{sk-l-6}
The kernel of the integral operator, which is inverse one to the sum of the operators $D(\Sigma,\mathcal{M}_l)+D(\Sigma,\mathcal{M}_r)$, which acts from $C^\infty(\Sigma,\mathbb{R})$ to $C^\infty(\Sigma,\mathbb{R})$, is equal to the function $G(\,\cdot\,,\,\cdot\,)$, restricted on $\Sigma\times\Sigma$, that is,
\begin{equation}\label{sk-3-18}
\Big(D(\Sigma,\mathcal{M}_l)(q_1)+D(\Sigma,\mathcal{M}_r)(q_1)\Big)G(q_1,q_2)=\delta(q_1,q_2),
\end{equation}
where $\delta(q_1,q_2)$ is the kernel of the unit operator on the submanifold $\Sigma$.
\end{lemma}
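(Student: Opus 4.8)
The plan is to test the asserted kernel identity against an arbitrary smooth density on $\Sigma$ and to reduce the claim to the classical jump relation for a single-layer potential built from the full Green's function $G$. Fix $\eta_\Sigma\in C^\infty(\Sigma,\mathbb{R})$ and introduce the function on the whole manifold
\begin{equation}\label{sk-plan-1}
W(p)=\int_\Sigma \mathrm{d}^{n-1}q_2\,G(p,q_2)\eta_\Sigma(q_2),\qquad p\in\mathcal{M}.
\end{equation}
Since the integral operator with kernel $G(\,\cdot\,,\,\cdot\,)$ restricted to $\Sigma\times\Sigma$, applied to $\eta_\Sigma$, is precisely $W|_\Sigma$, proving Lemma \ref{sk-l-6} is equivalent to showing that $\big(D(\Sigma,\mathcal{M}_l)+D(\Sigma,\mathcal{M}_r)\big)\,W|_\Sigma=\eta_\Sigma$ for every such $\eta_\Sigma$.

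First I would determine the distributional equation satisfied by $W$. Using the defining relation \eqref{sk-3-7}, pairing $A_pG(p,q_2)=\delta(p,q_2)$ with a test function vanishing on $Y$ yields $AW=\eta_\Sigma\,\delta_\Sigma$, where $\delta_\Sigma$ is the surface delta supported on $\Sigma$. In particular $AW=0$ pointwise on $\mathcal{M}\setminus\Sigma$, so $W$ is a single-layer potential: it is continuous across $\Sigma$ and smooth up to $\Sigma$ from either side (the density $\eta_\Sigma$ is smooth and $\Sigma$ is closed), while only its normal derivative jumps. The next step is to identify the two one-sided traces. Because $G(\,\cdot\,,q_2)$ vanishes on $Y$ and $q_2\in\Sigma$, the restriction $W|_{\mathcal{M}_i}$ solves $AW=0$ in the interior of $\mathcal{M}_i$, vanishes on $Y_i$, and equals $W|_\Sigma$ on $\Sigma$. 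By uniqueness of the Dirichlet problem for $A$ with $m^2>0$, which excludes nontrivial zero modes, this solution is exactly the harmonic extension $\phi^{\,W|_\Sigma}_i$ of Definition \ref{sk-d-13}. Hence, by Definition \ref{sk-d-14}, the outward normal derivative $N_{\Sigma,i}$ of $W$ taken from the $\mathcal{M}_i$ side equals $D(\Sigma,\mathcal{M}_i)\,W|_\Sigma$.

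It then remains to evaluate the jump. Comparing the distributional identity $AW=\eta_\Sigma\,\delta_\Sigma$ with the standard jump formula for $A$ across a hypersurface shows that the surface source equals the sum of the two outward normal derivatives, the two external normals pointing to opposite sides of $\Sigma$ as recorded in Corollary \ref{sk-c-4}. Combining this with the identification of the traces gives $\big(D(\Sigma,\mathcal{M}_l)+D(\Sigma,\mathcal{M}_r)\big)\,W|_\Sigma=\eta_\Sigma$. Writing $\eta_\Sigma(q_1)=\int_\Sigma\delta(q_1,q_2)\eta_\Sigma(q_2)\,\mathrm{d}^{n-1}q_2$ and using that $\eta_\Sigma$ is arbitrary, one strips off the density and arrives at \eqref{sk-3-18}.

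The main obstacle I anticipate is making the jump computation rigorous precisely because the source point $q_2$ sits on the gluing surface $\Sigma$ itself: the kernel $G(\,\cdot\,,q_2)$ is singular along $\Sigma$ at $q_2$, so the one-sided normal derivatives and the integration by parts underlying the jump formula must be handled distributionally rather than pointwise. The cleanest route is to keep $\eta_\Sigma$ fixed throughout, so that $W$ in \eqref{sk-plan-1} is the genuinely smooth-up-to-boundary single-layer potential, to verify $AW=\eta_\Sigma\,\delta_\Sigma$ by pairing with test functions, and only then to read off the normal-derivative jump; in this way one never manipulates $G(\,\cdot\,,q_2)|_\Sigma$ near the diagonal. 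One should also confirm that the conormal derivative naturally produced by Green's identity for $A$ coincides with the normal derivative $N_{\Sigma,i}$ of Definition \ref{sk-d-11}, which is exactly the boundary operator already appearing in the integration by parts \eqref{sk-3-2}.
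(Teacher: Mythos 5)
Your argument is correct. The paper itself does not prove Lemma \ref{sk-l-6} --- it only cites Carron \cite{sk-b-6} --- so there is no in-text proof to compare against; what you have written is the standard potential-theoretic derivation of this Dirichlet-to-Neumann gluing identity, and it is essentially the argument behind the cited result. The key steps all check out: the single-layer potential $W$ is continuous across $\Sigma$ and smooth up to $\Sigma$ from either side for smooth $\eta_\Sigma$; it vanishes on $Y$ and satisfies $AW=0$ off $\Sigma$, so by uniqueness of the Dirichlet problem (guaranteed by $\lambda\geqslant m^2>0$) its restriction to $\mathcal{M}_i$ is exactly $\phi_i^{W|_\Sigma}$, whence $N_{\Sigma,i}W=D(\Sigma,\mathcal{M}_i)W|_\Sigma$; and the distributional identity $AW=\eta_\Sigma\delta_\Sigma$ combined with the opposite orientation of the two external normals (Corollary \ref{sk-c-4}) gives $N_{\Sigma,l}W+N_{\Sigma,r}W=\eta_\Sigma$, since the sum of the two outward one-sided normal derivatives is minus the jump. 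Your closing caution about the conormal versus the geodesic normal is resolved by the divergence form \eqref{sk-3-1} together with the normalization of $\mathbf{n}$ with respect to $g$. One small point worth making explicit: what you prove is that $G|_{\Sigma\times\Sigma}$ is a right inverse of $D(\Sigma,\mathcal{M}_l)+D(\Sigma,\mathcal{M}_r)$, which is precisely the displayed identity \eqref{sk-3-18}; to read this as "the kernel of \emph{the} inverse operator" one also needs invertibility of the sum, which the paper supplies separately via the positivity of its spectrum for $m^2>0$ in the remark following the lemma.
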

\noindent The proof can be found in \cite{sk-b-6}. Additional information about the $D$-operator can be found in \cite{sk-b-7,sk-b-8,sk-b-9,sk-b-10}. Nevertheless, we note that the operator, taking into account the assumptions of this work, can be extended on $L^2$-space and, for $m^2>0$, has a positive spectrum (without zero eigenvalues), and the only point of accumulation is infinity.
\begin{lemma}\label{sk-l-7}
Let $i\in\{l,r\}$. Taking into account all the above, the difference $D(Y_i,\mathcal{M})(\cdot)-D(Y_i,\mathcal{M}_i)(\cdot)$ can be represented by an integral operator, the kernel of which, for each variable, is an integrable function on $Y_i$.
\end{lemma}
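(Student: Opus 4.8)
The plan is to reduce everything to the regularity of the single object $H:=G-G_i$ near $Y_i\times Y_i$: the non-integrable densities sitting on the diagonal of the two $D$-operators (Remark~\ref{sk-r-5}) are local and governed solely by the metric in a collar of $Y_i$, which is common to $\mathcal{M}$ and $\mathcal{M}_i$, so they must cancel in the difference. Concretely, I would first make the two Dirichlet-to-Neumann operators explicit. Setting $\eta_j=0$ in \eqref{sk-3-10} and $\eta_\Sigma=0$ in \eqref{sk-3-11}, the fields entering Definition~\ref{sk-d-14} are $\phi^{\eta_i}(p)=-\int_{Y_i}\mathrm{d}^{n-1}q\,\big(N_i(q)G(p,q)\big)\eta_i(q)$ and $\phi^{\eta_i}_i(p)=-\int_{Y_i}\mathrm{d}^{n-1}q\,\big(N_i(q)G_i(p,q)\big)\eta_i(q)$. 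Applying $N_i$ in the first variable (as an interior limit, cf. Definition~\ref{sk-d-11}) and subtracting, the operator $D(Y_i,\mathcal{M})-D(Y_i,\mathcal{M}_i)$ acquires the kernel
\[
K(q_1,q_2)=-N_i(q_1)N_i(q_2)\big(G(q_1,q_2)-G_i(q_1,q_2)\big),\qquad q_1,q_2\in Y_i.
\]
Thus it suffices to prove that $H=G-G_i$ is smooth in a neighbourhood of $Y_i\times Y_i$, the diagonal included.

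To see this, I would fix $p'\in Y_i$ and examine $H(\,\cdot\,,p')$ on $\mathcal{M}_i$. For $p$ in the interior the delta kernels in \eqref{sk-3-7} for $\mathcal{M}$ and $\mathcal{M}_i$ coincide, so $A(p)H(p,p')=\delta(p,p')-\delta(p,p')=0$; the singular sources cancel and $H(\,\cdot\,,p')$ solves the homogeneous equation. On $Y_i$ both Green's functions vanish by the Dirichlet condition, hence $H(\,\cdot\,,p')=0$ there; on $\Sigma$ one has $G_i=0$, hence $H(\,\cdot\,,p')=G(\,\cdot\,,p')\big|_{\Sigma}$. By Remark~\ref{sk-r-2} the distance between $Y$ and $\Sigma$ is at least $1/\Lambda_1>0$, so for $p'\in Y_i$ this $\Sigma$-datum is smooth (it is evaluated away from the diagonal of $G$). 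Consequently $H(\,\cdot\,,p')$ coincides with the solution of the Dirichlet problem on $\mathcal{M}_i$ with zero data on $Y_i$ and the above smooth data on $\Sigma$, and the regularity result of Lemma~\ref{sk-l-4}, applied on $\mathcal{M}_i$, together with uniqueness for $m^2>0$, gives $H(\,\cdot\,,p')\in C^\infty(\mathcal{M}_i)$ --- in particular smooth up to $Y_i$ and across the point $p=p'$, where the singularities of $G$ and $G_i$ have cancelled. Using the symmetry $H(p,p')=H(p',p)$ and repeating the argument in the second variable yields joint smoothness of $H$ on a neighbourhood of $Y_i\times Y_i$.

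The crux --- and the only genuinely delicate step --- is exactly this cancellation of the diagonal singularity, i.e. the fact that $H$ is a bona fide solution of the source-free equation rather than merely a difference of two parametrices. Once it is established, the kernel $K(q_1,q_2)=-N_i(q_1)N_i(q_2)H(q_1,q_2)$ is a finite number of normal derivatives of a function smooth up to $Y_i\times Y_i$, hence itself smooth on the compact manifold $Y_i\times Y_i$; being smooth it is bounded, and therefore integrable in each variable on $Y_i$, which is the assertion of the lemma. It is worth noting that the argument in fact yields the stronger conclusion $K\in C^\infty(Y_i\times Y_i)$, well beyond mere integrability.
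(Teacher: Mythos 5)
Your identification of the kernel as $-N_i(q_1)N_i(q_2)\big(G(q_1,q_2)-G_i(q_1,q_2)\big)$ coincides with the paper's \eqref{sk-3-23}, and your overall strategy --- proving regularity of $H=G-G_i$ near $Y_i\times Y_i$ from the cancellation of the delta sources --- is not the paper's main argument but exactly the alternative it mentions in a footnote to the proof, namely differentiating the gluing formula \eqref{sk-6-3} while invoking Remark \ref{sk-r-2}. The paper's main proof instead compares the local Hadamard/Seeley--DeWitt asymptotics of $G$ and $G_i$ near the diagonal (both constructed by the reflection method from the same local geometry near $Y_i$) and observes that the non-smooth, purely local terms cancel in the difference; your route, if completed, would buy the stronger conclusion that the kernel is smooth rather than merely integrable.

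As written, however, the regularity argument has a real gap: you fix $p'\in Y_i$, but the Dirichlet Green's functions vanish identically once one argument is placed on the boundary, so $H(\,\cdot\,,p')\equiv 0$, and the assertions that ``the $\Sigma$-datum is smooth'' and that ``the singularities of $G$ and $G_i$ have cancelled at $p=p'$'' are vacuous --- they say nothing about the object you actually need, which by Definition \ref{sk-d-11} is a double \emph{interior} limit: one differentiates $H(p_1,p_2)$ at interior points and only afterwards sends $p_1\to q_1$, $p_2\to q_2$. The argument must therefore be run with $p'$ in an interior collar of $Y_i$ (where, by Definition \ref{sk-d-8} and Remark \ref{sk-r-2}, $p'$ stays at a positive distance from $\Sigma$, so the $\Sigma$-datum $G(\,\cdot\,,p')\big|_\Sigma$ is genuinely smooth and nonzero), and one needs smoothness of $H$ \emph{jointly} in $(p_1,p_2)$ on a two-sided collar of $Y_i\times Y_i$; separate smoothness in each variable together with the symmetry $H(p_1,p_2)=H(p_2,p_1)$ does not by itself yield joint smoothness. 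The cleanest repair is the one the footnote suggests: by \eqref{sk-6-3}, $H(p_1,p_2)=\int_\Sigma\mathrm{d}^{n-1}q\int_\Sigma\mathrm{d}^{n-1}q'\,\big(N_{\Sigma,i}(q)G_i(p_1,q)\big)G(q,q')\big(N_{\Sigma,i}(q')G_i(q',p_2)\big)$, and for $p_1,p_2$ in a collar of $Y_i$ the integrand is jointly smooth in $(p_1,p_2)$ while $G\big|_{\Sigma\times\Sigma}$ is integrable (Remark \ref{sk-r-4}), so the two normal derivatives pass under the integral and the resulting kernel is continuous, hence integrable, on the compact set $Y_i\times Y_i$.
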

\begin{proof} First, note that the Green's functions $G(p_1,p_2)$ and $G_i(p_1,p_2)$ can be constructed using the reflection method, see \cite{sk-b-12} and the discussion in Section 2.3 of \cite{sk-16}. Thus, for the proof, it is sufficient to consider the standard decomposition near the diagonal for a Green's function on a compact smooth manifold without a boundary. The non-smooth component of such an asymptotic decomposition is local in nature\footnote{A summary of some known results related to the coefficients of local heat kernels can be found in \cite{sk-b-11,sk-b-13}.},  that is, it is determined by the coefficients of the Laplace operator, restricted to a small neighborhood containing the points $p_1,p_2$ and the geodesic connecting them, see \cite{15} and formulas (107) and (108) in \cite{sk-b-11}. It is constructed using the Seeley--DeWitt coefficients \cite{10-ss,111,8,110}, the Synge's world function \cite{104,1040}, as well as the Van Vleck--Morett determinant  \cite{105-ss}. In particular, the main order near the diagonal is proportional to $\ln(d(p_1,p_2))$ for $n=2$ and $\big(d(p_1,p_2)\big)^{2-n}$ for $n>2$.

Next, let $\eta_i\in C^\infty(Y_i,\mathbb{R})$. Consider the difference $N_{i}(q_1)\phi^{\eta_i}_{\phantom{i}}(q_1)-N_{i}(q_1)\phi^{\eta_i}_i(q_1)$ and use Lemma \ref{sk-l-4} and Definition \ref{sk-d-13}, then the kernel for the difference of the operators can be written as
\begin{equation}\label{sk-3-23}
-N_i(q_1)N_i(q_2)G(q_1,q_2)+N_i(q_1)N_i(q_2)G_i(q_1,q_2).
\end{equation}
Therefore, applying two derivatives to the Green's functions and moving to the boundary $Y_i$, we make sure that the non-smooth terms of the asymptotics are reduced, and the remaining part is an integrable function\footnote{Note that this proof based on the local behaviour can be replaced by differentiating the gluing  formula from \eqref{sk-6-3} for the Green's functions, taking into account Remark \ref{sk-r-2}.} in the $(n-1)$-dimensional space $Y_i$.
\end{proof}

\subsection{Effective action}
\label{sk:sec:def:3}

In this section, a special type of transformation is formulated, given on a set of classical actions corresponding \footnote{Within the main part of the text.} to the description of scalar theories. As a rule, such a transformation is called functional (or path) integration. Nevertheless, it is worth noting right away that at the moment no mathematical formalism has been built, so we have to work with a formal series of a special kind, the internal structure of which satisfies a number of properties of the standard integral, for example, integration by parts and changes of variables. Some discussions of such "integration" can be found in works \cite{sk-1,sk-2,3,sk-b-14}.

Let $\eta\in C^\infty(Y,\mathbb{R})$ such that $\eta=\eta_l\in C^\infty(Y_l,\mathbb{R})$ on $Y_l$ and $\eta=\eta_r\in C^\infty(Y_r,\mathbb{R})$ on $Y_r$. Let us consider the formal notation of the functional integral
\begin{equation}\label{sk-5-1}
e^{-W_{\mathrm{eff}}[\sqrt{\hbar}\eta;\mathcal{M}]/\hbar}=
\mathcal{N}^{-1}(\mathcal{M})
\int\limits_{\mathcal{H}(\sqrt{\hbar}\eta;\mathcal{M})}
\mathcal{D}\phi\,e^{-S_{\mathrm{cl}}[\phi;\mathcal{M}]/\hbar}
\end{equation}
and comment on its individual components. In the mentioned formula a set of functions $\mathcal{H}\big(\sqrt{\hbar}\eta;\mathcal{M}\big)$ is called the integration domain and contains functions, which on the boundary $\partial\mathcal{M}$ are equal to $\sqrt{\hbar}\eta$. It is impossible to add something more specific about the smoothness properties of such functions due to the lack of a general theory. The multiplier $\mathcal{N}(\mathcal{M})$ denotes normalization based on the assumption that in the absence of interaction and zero boundary conditions, equality is satisfied 
\begin{equation}\label{sk-5-3}
W_{\mathrm{eff}}[0;\mathcal{M}]\Big|_{\{t_k=0\}_{k=3}^{+\infty}}=0.
\end{equation}
In other words, with zero external data, such an object does not carry important information from a physical point of view. 
Further, the symbol $\mathcal{D}\phi$ denotes the "measure" of integration, which, generally speaking, depends on the manifold $\mathcal{M}$. It is assumed that linear change of variables can be done in the integral, so formula \eqref{sk-5-1} can be reduced to some "basic" integral
\begin{equation}\label{sk-5-2}
e^{-W_{\mathrm{eff}}[\sqrt{\hbar}\eta;\mathcal{M}]/\hbar}=
\mathcal{N}^{-1}(\mathcal{M})
\int\limits_{\mathcal{H}(0;\mathcal{M})}
\mathcal{D}\phi\,e^{-S_{\mathrm{cl}}[\sqrt{\hbar}\phi+\sqrt{\hbar}\phi^\eta;\mathcal{M}]/\hbar},
\end{equation}
where we made the change of variables of the form $\phi\to\sqrt{\hbar}\phi^\eta+\sqrt{\hbar}\phi$. This substitution leads to the use of the background field method, see \cite{12,102,103,24,25,26,23,Ivanov-Russkikh}, with the only difference that the field  $\sqrt{\hbar}\phi^\eta$, responsible for the shift, does not solve the quantum equation of motion, but is a solution to the classical problem from Lemma \ref{sk-l-4}.

Let us introduce two additional auxiliary objects into consideration
\begin{multline}\label{sk-5-6}
S(\phi+\phi^\eta;\mathcal{M})=
\exp\Big(-S_{\mathrm{int}}[\sqrt{\hbar}\phi+\sqrt{\hbar}\phi^\eta;\mathcal{M}]/\hbar\Big)=\\=
1+\sum_{k\geqslant3}\hbar^{k/2-1}
\Bigg(\prod_{i=1}^{k}\int_{\mathcal{M}}\mathrm{d}^np_{i}\Bigg)
S_{k}(p_1,\ldots,p_k)
\big(\phi+\phi^\eta\big)(p_1)\ldots\big(\phi+\phi^\eta\big)(p_{k}),
\end{multline}
and also
\begin{equation}\label{sk-5-5}
g(\psi;\mathcal{M})=\frac{1}{2}\int_{\mathcal{M}}\mathrm{d}^np_1\int_{\mathcal{M}}\mathrm{d}^np_1
\psi(p_1)G(p_1,p_2)\psi(p_2),
\end{equation}
where $\psi\in C^\infty_0(\mathcal{M},\mathbb{R})$.
\begin{lemma}\label{sk-l-8}
The density $S_{k}(p_1,\ldots,p_k)$ from  formula \eqref{sk-5-6} for each fixed index value contains a finite number of terms, each of which is represented as a product of delta-functionals and constants. In this case, the arguments of singular functionals are only the points $\{p_i\}_{i=1}^k$.
\end{lemma}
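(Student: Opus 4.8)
The plan is to expand the exponential in \eqref{sk-5-6} as a formal power series in the interaction, rewrite each local vertex as an integral kernel built purely from delta-functionals, and then show that collecting all contributions with a prescribed total number $k$ of field insertions leaves only finitely many terms of the required shape. First I would substitute the explicit form \eqref{sk-3-5} of $S_{\mathrm{int}}$ together with the rescaling $\phi\mapsto\sqrt{\hbar}\phi$, obtaining
\[
S_{\mathrm{int}}[\sqrt{\hbar}\phi+\sqrt{\hbar}\phi^\eta;\mathcal{M}]/\hbar=\sum_{j\geqslant3}t_j^{\phantom{1}}\hbar^{j/2-1}\int_{\mathcal{M}}\mathrm{d}^np\,\big(\phi+\phi^\eta\big)^j(p),
\]
so that a single interaction vertex of degree $j$ carries the weight $t_j\hbar^{j/2-1}$ and a single integration over $\mathcal{M}$.

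The decisive step is local. Because the monomial $\big(\phi+\phi^\eta\big)^j(p)$ is evaluated at one and the same point $p$, I would insert the kernel $\delta(\,\cdot\,,\,\cdot\,)$ of the identity operator from \eqref{sk-3-7} to split the repeated field into distinct integration variables,
\[
\int_{\mathcal{M}}\mathrm{d}^np\,\Phi^j(p)=\int\prod_{i=1}^{j}\mathrm{d}^np_i\,\Big(\prod_{i=1}^{j-1}\delta(p_i,p_{i+1})\Big)\Phi(p_1)\cdots\Phi(p_j),\qquad \Phi:=\phi+\phi^\eta.
\]
Thus each vertex is represented by a product of $j-1$ delta-functionals whose arguments are exactly the $j$ fresh integration points. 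This is the only place where the locality of $S_{\mathrm{int}}$ enters, and it is precisely what guarantees that no Green's functions or any other non-local kernels can appear inside $S_k$; only delta-functionals tied to the $p_i$ survive.

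Next I would expand $\exp(-S_{\mathrm{int}}/\hbar)=\sum_{N\geqslant0}\frac{(-1)^N}{N!}\big(S_{\mathrm{int}}/\hbar\big)^N$ and apply the previous rewriting to each of the $N$ factors. A term with vertex degrees $j_1,\dots,j_N$ produces $k=j_1+\cdots+j_N$ field insertions, a kernel equal to the product (concatenation) of the $N$ delta-chains, and an overall constant $\frac{(-1)^N}{N!}\prod_{a}t_{j_a}$ times a power of $\hbar$ (namely $\hbar^{1-N}$ once the prefactor $\hbar^{k/2-1}$ of \eqref{sk-5-6} is extracted) times the combinatorial factor counting how the ordered arguments $p_1,\dots,p_k$ are distributed among the vertices. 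Collecting all such contributions at fixed total degree $k$ defines $S_k$. Finiteness is then immediate from the constraint $j_a\geqslant3$: one has $1\leqslant N\leqslant\lfloor k/3\rfloor$, and for each admissible $N$ there are only finitely many compositions $k=j_1+\cdots+j_N$ into parts $\geqslant3$. Hence $S_k$ is a finite sum of terms, each a product of delta-functionals times a constant, with every delta argument lying in $\{p_1,\dots,p_k\}$, exactly as claimed.

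The only genuinely delicate point is bookkeeping rather than analysis: when terms of equal total degree $k$ but different vertex content are merged, one must track the combinatorial prefactors and the symmetrization of $S_k$ under permutations of $p_1,\dots,p_k$ consistently (the product $\Phi(p_1)\cdots\Phi(p_k)$ being totally symmetric). The formal-series manipulations—term-by-term exponentiation and interchange of the sums over $N$ and over vertex degrees—are harmless, since they are controlled order by order in $k$ and the coupling series is assumed convergent in Remark \ref{sk-r-1}.
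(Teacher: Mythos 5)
Your argument is correct and is exactly the content behind the paper's one-line proof ("it follows from the construction of the functional"): you expand the exponential, write each local vertex $\int_{\mathcal{M}}\mathrm{d}^np\,\Phi^j(p)$ as a chain of $j-1$ delta-functionals in fresh integration variables, and observe that for fixed $k$ only the finitely many compositions $k=j_1+\cdots+j_N$ with $j_a\geqslant3$ contribute. The only remark worth adding is that, as you correctly note, the residual factor $\hbar^{1-N}$ must be absorbed into the "constants" of $S_k$ for the grouping in \eqref{sk-5-6} to be consistent, which is the intended reading.
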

\begin{proof}
It follows from the construction of the functional, see \eqref{sk-5-6}.
\end{proof}
\begin{remark}\label{sk-r-7}
In the density $S_{k}(p_1,\ldots,p_k)$, delta-functionals do not appear with matching arguments (on the diagonal), since there are no divergences in the left part of \eqref{sk-5-6}.
\end{remark}
Taking into account the latest definitions, formula \eqref{sk-5-2} can be rewritten in an equivalent way
\begin{equation}\label{sk-5-8}
e^{-W_{\mathrm{eff}}[\sqrt{\hbar}\eta;\mathcal{M}]/\hbar}=e^{-S_{\mathrm{0}}[\phi^\eta;\mathcal{M}]}
\Bigg(\mathcal{N}^{-1}(\mathcal{M})
\int\limits_{\mathcal{H}(0;\mathcal{M})}
\mathcal{D}\phi\,e^{-S_{0}[\phi;\mathcal{M}]}S(\phi+\phi^\eta;\mathcal{M})
\Bigg).
\end{equation}
\begin{defa}\label{sk-d-16}
The functional integral \eqref{sk-5-8} should be understood as an formal series with respect to the parameter $\sqrt{\hbar}$ of the form
\begin{equation}\label{sk-5-7}
e^{-W_{\mathrm{eff}}[\sqrt{\hbar}\eta;\mathcal{M}]/\hbar}=e^{-S_{\mathrm{0}}[\phi^\eta;\mathcal{M}]}
\bigg(
S(\delta_\psi+\phi^\eta;\mathcal{M})e^{g(\psi;\mathcal{M})}\bigg)\bigg|_{\psi=0},
\end{equation}
where the functional derivative $\delta_\psi$ at the point $p_1$ is defined by equality $\delta_{\psi(p_1)}\psi(p_2)=\delta(p_1,p_2)$.
\end{defa}
\begin{proof}
The formula is obtained after taking out the $S$-functional in the form of functional derivatives acting on a linear additive from under the integral and calculating Gaussian integrals with the quadratic form $S_{0}[\phi;\mathcal{M}]$ and the linear additive, see \cite{sk-29-3}.
\end{proof}
\begin{remark}\label{sk-r-6}
The representation from \eqref{sk-5-7} is completely equivalent to writing through a sum of Feynman diagrams. In this paper, an alternative entry is chosen to save space and convenience, so as not to formulate the rules of diagrammatic technique.
\end{remark}
\begin{defa}\label{sk-d-15}
The functional $W_{\mathrm{eff}}[\eta;\mathcal{M}]$ is called a quantum (or effective) action corresponding to the classical action $S_{\mathrm{cl}}[\,\cdot\,;\mathcal{M}]$, which is considered on the set of functions that take the value $\eta\in C^\infty(Y,\mathbb{R})$ on the boundary.
\end{defa}
\begin{remark}\label{sk-r-8}
All definitions and arguments in this section easily extend to the submanifolds $\mathcal{M}_l$ and $\mathcal{M}_r$. To do this, it is enough to take into account changes of manifolds, boundary conditions and the corresponding Green's functions. Analogs of the functional from \eqref{sk-5-5} on the submanifolds will be notated as $g_l(\psi_l;\mathcal{M}_l)$ and $g_r(\psi_r;\mathcal{M}_r)$. They are constructed using the Green's functions $G_l$ and $G_r$ on the corresponding submanifolds.
\end{remark}

As it is knowm, when calculating the functional in parentheses from \eqref{sk-5-7}, the Wick's theorem is used, according to which all possible pairs of fields $\phi(p_i)\phi(p_j)$ from decomposition \eqref{sk-5-6} are replaced by the Green's functions $G(p_i,p_j)$. Thus, nonlinear combinations of the Green's functions appear under the sign of integration. Given the fact that the Green function has a singularity on the diagonal (for $p_i=p_j$), nonlinear combinations can lead to two types of divergences. In one case, the appearance of a non-integrable density is possible, and in the second case, the presence of a singular functional at a singular point.

As an example, we can consider a quartic model in three-dimensional space. It is clear that the combination of the Green's functions
\begin{equation}\label{sk-5-9}
\Big(G(p_i,p_j)\Big)^3\sim\frac{1}{\big(d(p_i,p_j)\big)^3}
\end{equation}
for $p_i\sim p_j$, where $p_i,p_j\in\mathcal{M}\setminus Y$, and therefore it is a non-integrable function. At the same time, there is a singular point on the diagonal $G(p_i,p_j)$ for $p_i=p_j$.

It turns out that both types of divergences can be regularized using a special "cutoff" in the coordinate representation. Moreover, such regularization is consistent with the process of gluing of manifolds and partitial functions, and the exact definition is given in Section  \ref{sk:sec:reg:3}. It is the description of the regularization process and the proof of consistency that the rest of the work will be devoted to.

\section{Cutoff regularization}
\label{sk:sec:reg}

This section is devoted to the formulation of a generalization of a cutoff regularization in the coordinate representation, which was previously successfully applied in the standard Euclidean space $\mathbb{R}^n$ with Cartesian coordinates $\{x_i\}_{i=1}^n$,  in the case of a smooth $n$-dimensional Riemannian manifold $\mathcal{M}$ with a boundary. Further we assume that $n$ is fixed.

\subsection{Deformation in the Euclidean case}
\label{sk:sec:reg:1}

Let us start with some useful facts about the mentioned regularization \cite{34,Iv-2024-1,Ivanov-Akac,Ivanov-Kharuk-2020,Ivanov-Kharuk-20222,Ivanov-Kharuk-2023,Iv-Kh-2024,Kh-2024,Ivanov-2022,Iv-2024}. To do this, we introduce a fundamental solution\footnote{The paper does not consider the case of $n=1$. Some comments on the introduction of a regularization for $n=1$ can be found in article \cite{Ivanov-2022}.}$^{,}$\footnote{Note that the Laplace operator and the corresponding fundamental solution depend on the dimension. Assuming $n$ is fixed, we omit this index within this section.} for the following free\footnote{That is, without any potentials.} Laplace operator 
\begin{equation}
A(x)=-\sum_{i=1}^n\partial_{x^i}^2
\end{equation}
in the form
\begin{equation}\label{sk-1-3}
G(x)=\frac{|x|^{2-n}}{(n-2)S_{n-1}},\,\,\,
S_{n-1}=\frac{2\pi^{n/2}}{\Gamma(n/2)}
\end{equation}
for the case $n>2$, and $G(x)=-\ln(|x|)/(2\pi)$ for $n=2$. Here $S_{n-1}$ is the area of a $(n-1)$-dimensional unit sphere. It is known that in the sense of generalized functions on the Schwartz class, see \cite{Gelfand-1964,Vladimirov-2002}, the equality $A(x)G(x-y)=\delta(x-y)$ is valid. Further, let $k\in\mathbb{N}$, and $\alpha=(\alpha_1,\ldots,\alpha_k)$ is a multi-index\footnote{Here the designation is used, in which $0\notin\mathbb{N}$.}, the components of which satisfy the following relations
\begin{equation}\label{sk-1-1}
\alpha_i\in(0,1]\,\,\,\mbox{for all}\,\,\,i\in\{1,\ldots,k\},\,\,\,\mbox{and also}\,\,\,
\sum_{i=1}^k\alpha_i\leqslant1.
\end{equation}

\begin{defa}\label{sk-d-2}
Let us consider an integral operator\footnote{Note that such an operator is defined on a wider class of functions. However, within the framework of this section, the specified set is quite sufficient to determine.} $\mathrm{H}_\alpha^\Lambda:C^\infty(\mathbb{R}^n,\mathbb{R})\to C^\infty(\mathbb{R}^n,\mathbb{R})$, which acts on the function $\phi\in C^\infty(\mathbb{R}^n,\mathbb{R})$ according to the following relation
\begin{equation}\label{sk-1-2}
\mathrm{H}_\alpha^\Lambda(\phi)(x)=
\int_{\mathrm{S}^{n-1}}\frac{\mathrm{d}^{n-1}\sigma(\hat{x}_k)}{S_{n-1}}\ldots
\int_{\mathrm{S}^{n-1}}\frac{\mathrm{d}^{n-1}\sigma(\hat{x}_1)}{S_{n-1}}\,
\phi\Bigg(x+\Lambda^{-1}\sum_{i=1}^k\hat{x}_i\alpha_i\Bigg),
\end{equation}
where $\hat{x}_i\in\mathrm{S}^{n-1}\subset\mathbb{R}^n$ is an element of a unit sphere centered at the origin, and integration is performed using the standard measure on the sphere. 
\end{defa}

\noindent It can be seen from the formula that the operator is a multiple averaging over spheres near the point $x\in\mathbb{R}^n$, while only the elements of the ball are involved in the averaging process $\{y\in\mathbb{R}^n:|x-y|\leqslant1/\Lambda\}$. Also, in the limit $\Lambda\to+\infty$, the function $\mathrm{H}_\alpha^\Lambda(\phi)$ tends to $\phi$, because the support of the kernel of the integral operator tends to zero.

\begin{defa}\label{sk-d-1}
The integral transformation $\mathrm{H}_\alpha^\Lambda(\phi)$ of a function $\phi$ is called its deformation, while the mapping $\mathrm{H}_\alpha^\Lambda$ is named a deforming operator.
\end{defa}

\noindent Note that the function $G(x)$ is integrable in $(n-1)$-dimensional space, therefore, the deforming operator $\mathrm{H}_\alpha^\Lambda$ can be applied to it. A function $\mathrm{H}_\alpha^\Lambda(G)$ for $n>2$ was investigated in the work \cite{Iv-2024}, and for $n=2$ was studied in \cite{sk-b-15}. Let us formulate some useful properties for further analysis in the form of lemma.

\begin{lemma}\label{sk-l-1}
Taking into account all of the above, the following relation holds
\begin{equation}\label{sk-1-4}
\mathrm{H}_\alpha^\Lambda(G)(x)=
\begin{cases} 
\Lambda^{n-2}\mathbf{f}\big(|x|^2\Lambda^2\big)+G(x/(|x|\Lambda)), &|x|\leqslant1/\Lambda;\\
~~~~~~~~~~~~~~~~~G(x), &|x|>1/\Lambda,
\end{cases}
\end{equation}
where $\mathbf{f}(\cdot)\in C([0,1],\mathbb{R})$. The latter function depends on the dimension parameter $n$ and the multi-index $\alpha$, and also has the property $\mathbf{f}(1)=0$.
\end{lemma}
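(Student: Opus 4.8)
The plan is to reduce the statement to a single averaging by exploiting the homogeneity of $G$, and then to treat the two regions separately: outside the ball of radius $1/\Lambda$ by the mean value property, and inside it by a direct integrability estimate. First I would read off from \eqref{sk-1-2} that $\mathrm{H}_\alpha^\Lambda$ is the $k$-fold composition $M_{r_1}\circ\cdots\circ M_{r_k}$ of elementary spherical means $M_r(\phi)(x)=S_{n-1}^{-1}\int_{\mathrm{S}^{n-1}}\phi(x+r\hat x)\,\mathrm{d}^{n-1}\sigma(\hat x)$ with radii $r_i=\alpha_i/\Lambda$; by \eqref{sk-1-1} the total reach satisfies $\sum_{i=1}^k r_i=\Lambda^{-1}\sum_{i=1}^k\alpha_i\leqslant1/\Lambda$. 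Changing variables and using $G(\mu z)=\mu^{2-n}G(z)$ from \eqref{sk-1-3} gives the scaling identity $\mathrm{H}_\alpha^\Lambda(G)(x)=\Lambda^{n-2}\,\mathrm{H}_\alpha^1(G)(\Lambda x)$, so it suffices to study $\Phi:=\mathrm{H}_\alpha^1(G)$, now a composition of means of radii $\alpha_i$ with $\sum_i\alpha_i\leqslant1$. Since $G$ is radial and each $M_r$ is rotation invariant, $\Phi$ is radial, say $\Phi(y)=\tilde F(|y|^2)$. As $G(x/(|x|\Lambda))=c_n\Lambda^{n-2}$ with $c_n=\big((n-2)S_{n-1}\big)^{-1}$ is constant in $x$, the asserted formula on $\{|x|\leqslant1/\Lambda\}$ is equivalent to defining $\mathbf{f}(t):=\tilde F(t)-c_n$; once I show $\tilde F(1)=c_n$ the normalisation $\mathbf{f}(1)=0$ is automatic, and the continuity of $\mathbf{f}$ on $[0,1]$ is exactly the continuity of $\Phi$ on the closed unit ball.

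For the outside region I would use that $G$ is harmonic on $\mathbb{R}^n\setminus\{0\}$ together with the mean value property of each $M_r$. Arguing by induction from the innermost average, if $u=M_{\alpha_{j+1}}\circ\cdots\circ M_{\alpha_k}(G)$ coincides with $G$ on $\{|z|>R\}$, $R=\sum_{i>j}\alpha_i$, then for $|y|>\alpha_j+R$ the closed ball $\overline{B(y,\alpha_j)}$ avoids the origin and lies in $\{|z|>R\}$, where $u=G$ is harmonic, so $M_{\alpha_j}(u)(y)=u(y)=G(y)$. Hence $\Phi(y)=G(y)$ for $|y|>\sum_i\alpha_i$, in particular for all $|y|>1$, and by continuity $\tilde F(1)=G(y)\big|_{|y|=1}=c_n$. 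This gives $\mathbf{f}(1)=0$ and, after the scaling identity, reproduces the lower branch $\mathrm{H}_\alpha^\Lambda(G)(x)=\Lambda^{n-2}G(\Lambda x)=G(x)$ for $|x|>1/\Lambda$.

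The remaining and main point is the continuity of $\Phi$ on the closed unit ball, i.e.\ the finiteness and continuity of the averaged singular integral where the pole of $G$ is actually sampled. Here I would observe that the innermost average $M_{\alpha_k}(G)$ already regularises the singularity: for $|x|\leqslant\alpha_k$ the integrand $|x+\alpha_k\hat x|^{2-n}$, as $\hat x$ runs over $\mathrm{S}^{n-1}$, has at worst a pole at the single point $\hat x=-x/\alpha_k$, near which it behaves like $\rho^{2-n}$ in the $(n-1)$-dimensional surface variable $\rho$; since $2-n>-(n-1)$ this is integrable, and dominated convergence shows that $M_{\alpha_k}(G)$ is bounded and continuous on all of $\mathbb{R}^n$ (in fact, being radial and harmonic in the open ball, it is the constant $c_n\alpha_k^{2-n}$ there, by Newton's theorem). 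Each subsequent average of a bounded continuous function is again bounded and continuous, so $\Phi$ is continuous on $\mathbb{R}^n$, giving $\mathbf{f}\in C([0,1],\mathbb{R})$. \textbf{The main obstacle is precisely this integrability-through-the-singularity step}; everything else is bookkeeping with homogeneity and the mean value property. The case $n=2$, where $G=-\ln(|x|)/(2\pi)$ is only quasi-homogeneous, follows by the same scheme with the logarithmic scaling law $G(\mu z)=G(z)-\ln(\mu)/(2\pi)$ in place of the power law, and is treated in \cite{sk-b-15}.
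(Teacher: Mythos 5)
Your argument is correct, and it is worth noting that the paper itself offers no proof of Lemma \ref{sk-l-1}, deferring instead to \cite{Iv-2024} (for $n>2$) and \cite{sk-b-15} (for $n=2$), where the iterated spherical averages of $G$ are computed explicitly and $\mathbf{f}$ is obtained in concrete form. Your route is different in character: you factor $\mathrm{H}_\alpha^\Lambda$ into elementary means $M_{\alpha_i/\Lambda}$, pull out the $\Lambda$-dependence via the homogeneity $G(\mu z)=\mu^{2-n}G(z)$ (which is precisely what makes $\mathbf{f}$ independent of $\Lambda$, a point the lemma tacitly relies on), obtain the outer branch $\mathrm{H}_\alpha^\Lambda(G)=G$ on $\{|x|>1/\Lambda\}$ by an induction on the mean value property of the harmonic $G$ away from the origin, and settle continuity across the singularity by the integrability of $\rho^{2-n}$ against the $(n-1)$-dimensional surface measure together with Newton's theorem for the innermost average. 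All the steps check out: the condition $\sum_i\alpha_i\leqslant 1$ from \eqref{sk-1-1} enters exactly where it should (so that $\Phi=G$ already for $|y|>\sum_i\alpha_i$, and in the borderline case $\sum_i\alpha_i=1$ the value $\mathbf{f}(1)=0$ follows from continuity of $\Phi$ plus $G\big|_{|y|=1}=c_n$), and the constancy of $G(x/(|x|\Lambda))=c_n\Lambda^{n-2}$ correctly identifies $\mathbf{f}(t)=\tilde F(t)-c_n$. What your qualitative approach buys is a short, self-contained proof of exactly what the lemma asserts; what it does not give, and what the cited computations do, is the explicit form of $\mathbf{f}$, which is needed in the loop calculations of \cite{Iv-2024} but not for the present paper. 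The only place I would tighten the write-up is the dominated-convergence claim for continuity of $M_{\alpha_k}(G)$ at $|x|=\alpha_k$, where the singular point of the integrand moves with $x$ and a uniform dominating function is not immediate; your parenthetical appeal to Newton's theorem (both explicit branches $c_n\alpha_k^{2-n}$ and $G(x)$ agree at $|x|=\alpha_k$) already closes this gap, so I would lead with that rather than with dominated convergence.
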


\begin{col}\label{sk-c-1}
The deformation $\mathrm{H}_\alpha^\Lambda(G)$ of the fundamental solution $G$ is a bounded function in any bounded domain of $\mathbb{R}^n$ for $n=2$ and is bounded in the entire space for $n>2$.
\end{col}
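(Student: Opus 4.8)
The plan is to read off both claims directly from the explicit piecewise representation \eqref{sk-1-4} established in Lemma \ref{sk-l-1}, treating the two regions $|x|\leqslant 1/\Lambda$ and $|x|>1/\Lambda$ separately and distinguishing the cases $n>2$ and $n=2$. Since all the genuine analytic work was carried out in the lemma, the corollary should reduce to elementary estimates on each piece together with a matching argument.

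First I would analyze the inner region $|x|\leqslant 1/\Lambda$, where \eqref{sk-1-4} gives $\mathrm{H}_\alpha^\Lambda(G)(x)=\Lambda^{n-2}\mathbf{f}\big(|x|^2\Lambda^2\big)+G\big(x/(|x|\Lambda)\big)$. The constraint $|x|\leqslant 1/\Lambda$ forces the argument $|x|^2\Lambda^2$ into the compact interval $[0,1]$, so the first term is controlled by the finite supremum of the continuous function $\mathbf{f}$ on $[0,1]$. For the second term, the decisive observation is that $x/(|x|\Lambda)$ has modulus exactly $1/\Lambda$, whence $G\big(x/(|x|\Lambda)\big)$ is the constant $\Lambda^{n-2}/\big((n-2)S_{n-1}\big)$ for $n>2$ and $\ln(\Lambda)/(2\pi)$ for $n=2$. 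In either case the inner expression is bounded on the whole ball $\{|x|\leqslant 1/\Lambda\}$, for every $n\geqslant 2$.

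Second I would treat the outer region $|x|>1/\Lambda$, where \eqref{sk-1-4} reduces to $\mathrm{H}_\alpha^\Lambda(G)(x)=G(x)$. For $n>2$ the fundamental solution $G(x)=|x|^{2-n}/\big((n-2)S_{n-1}\big)$ is a decreasing function of $|x|$, so on $\{|x|>1/\Lambda\}$ it is dominated by its value at $|x|=1/\Lambda$, namely $\Lambda^{n-2}/\big((n-2)S_{n-1}\big)$; combining this with the inner bound yields boundedness on all of $\mathbb{R}^n$. For $n=2$, by contrast, $G(x)=-\ln(|x|)/(2\pi)$ grows without bound as $|x|\to+\infty$, so only boundedness on bounded domains can survive: intersecting a bounded domain with $\{|x|>1/\Lambda\}$ confines $|x|$ to a finite interval bounded away from zero, on which $\ln(|x|)$ is bounded, which yields the restricted claim.

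There is essentially no analytic obstacle here; the only points deserving care are the matching of the two regions and the dichotomy between $n>2$ and $n=2$. The mild subtlety worth flagging is that the logarithmic growth of $G$ at infinity, absent when $n>2$ thanks to the decay $|x|^{2-n}$, is precisely what downgrades the conclusion from global boundedness to boundedness on bounded domains in the two-dimensional case. The compactness argument for $\mathbf{f}$ on $[0,1]$ and the constancy of $G$ on the sphere of radius $1/\Lambda$ are then entirely routine.
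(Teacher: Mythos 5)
Your proof is correct and follows exactly the route the paper intends: the corollary is stated as an immediate consequence of the piecewise representation in Lemma \ref{sk-l-1}, and your case analysis (compactness of $[0,1]$ for $\mathbf{f}$, constancy of $G$ on the sphere of radius $1/\Lambda$, monotone decay of $|x|^{2-n}$ for $n>2$ versus logarithmic growth for $n=2$) is precisely the elementary check the paper leaves to the reader.
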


Before discussing a more general Riemannian manifold, we make an additional observation. Note that the sphere integration from \eqref{sk-1-2} can be rewritten in an equivalent way. Let $\phi\in C^\infty(\mathbb{R}^n,\mathbb{R})$, then
\begin{align*}
\int_{\mathrm{S}^{n-1}}\frac{\mathrm{d}^{n-1}\sigma(\hat{x}_1)}{S_{n-1}}\,\phi(x+\hat{x}_1\alpha_1/\Lambda)=&
\int_{\mathrm{S}^{n-1}}\frac{\mathrm{d}^{n-1}\sigma(\hat{x}_1)}{S_{n-1}}
\int_{0}^{+\infty}\mathrm{d}r\,r^{n-1}\delta(r-1)\phi(x+r\hat{x}_1\alpha_1/\Lambda)\\
=&\int_{\mathbb{R}^n}\frac{\mathrm{d}^{n}x_1}{S_{n-1}(\alpha_1/\Lambda)^n}\,\delta(|x_1|\Lambda/\alpha_1-1)\phi(x+x_1)\\=&
\int_{\mathbb{R}^n}\mathrm{d}^{n}x_1\bigg(\frac{\delta(|x-x_1|\Lambda/\alpha_1-1)}{S_{n-1}(\alpha_1/\Lambda)^n}\bigg)\phi(x_1)\\
=&\int_{\mathbb{R}^n}\mathrm{d}^{n}x_1\,\frac{\omega_0(|x-x_1|\Lambda/\alpha_1)}{(\alpha_1/\Lambda)^n}\phi(x_1),
\end{align*}
where the last equality defines a generalized function $\omega_0(\cdot)$. It is clear that $\omega_0(|\cdot\,-\,\cdot|\Lambda/\alpha_1)/(\alpha_1/\Lambda)^n$ is the kernel of the integral operator from $C^\infty(\mathbb{R}^n,\mathbb{R})$ to $C^\infty(\mathbb{R}^n,\mathbb{R})$. The latter representation makes it possible to generalize the deforming operator in a natural way while preserving the result of Lemma \ref{sk-l-1}.

\begin{defa}\label{sk-d-3}
Let $j\in\mathbb{N}\cup\{0\}$. A kernel $\omega(|\cdot\,-\,\cdot|)$ of an integral operator from $C^\infty(\mathbb{R}^n,\mathbb{R})$ to $C^\infty(\mathbb{R}^n,\mathbb{R})$ is called $j$-acceptable if the following relation holds
\begin{equation}\label{sk-1-7}
\int_{\mathbb{R}^n}\mathrm{d}^{n}x_1\,\omega(|x-x_1|)G(x_1)\in C^j(\mathbb{R}^n,\mathbb{R}).
\end{equation}
At the same time, in the designation of 0-acceptable, the corresponding index will usually be omitted.
\end{defa}

\begin{lemma}\label{sk-l-2}
Let the assumptions of Definition \ref{sk-d-2} be fulfilled, and $\tilde{\omega}=\{\omega_i(|\cdot\,-\,\cdot|\Lambda/\alpha_i)/(\alpha_i/\Lambda)^n\}_{i=1}^k$ is a set of acceptable kernels of integral operators from $C^\infty(\mathbb{R}^n,\mathbb{R})$ to $C^\infty(\mathbb{R}^n,\mathbb{R})$, satisfying the relations
\begin{equation}\label{sk-1-6}
\mathop{\mathrm{supp}}_{\mathbb{R}^n}(\omega_i)\subset\{y\in\mathbb{R}^n:|y|\leqslant1\}\,\,\,\mbox{and}\,\,\,
\int_{\mathbb{R}^n}\mathrm{d}^{n}x\,\omega_i(|x|)=S_{n-1}\int_{0}^{+\infty}\mathrm{d}r\,r^{n-1}\omega_i(r)=1
\end{equation}
for all $i\in\{1,\ldots,k\}$. Then for a deformed function of the form
\begin{equation}\label{sk-1-5}
\mathrm{H}_{\tilde{\omega},\alpha}^\Lambda(G)(x)=
\int_{\mathbb{R}^n}\mathrm{d}^{n}x_k\,\frac{\omega_k(|x-x_k|\Lambda/\alpha_k)}{(\alpha_k/\Lambda)^n}\ldots
\int_{\mathbb{R}^n}\mathrm{d}^{n}x_1\,\frac{\omega_1(|x_2-x_1|\Lambda/\alpha_1)}{(\alpha_1/\Lambda)^n}\,
G(x_1),
\end{equation}
the result of Lemma \ref{sk-l-1} is correct.
\end{lemma}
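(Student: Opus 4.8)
The plan is to recognize the iterated operator in \eqref{sk-1-5} as a single convolution and then reuse the two mechanisms already behind Lemma \ref{sk-l-1}: the mean value property of the harmonic fundamental solution away from the origin, and a scaling reduction to the case $\Lambda=1$. First I would note that, since convolution is commutative and associative, $\mathrm{H}_{\tilde{\omega},\alpha}^\Lambda(G)$ is the convolution of $G$ against the combined kernel $\Omega^\Lambda=(\text{scaled }\omega_1)\ast\cdots\ast(\text{scaled }\omega_k)$. Because each factor is radial, normalized by \eqref{sk-1-6}, and supported in a ball of radius $\alpha_i/\Lambda$, the kernel $\Omega^\Lambda$ is again radial, satisfies $\int_{\mathbb{R}^n}\Omega^\Lambda=1$, and is supported in the ball of radius $\sum_i\alpha_i/\Lambda\leqslant1/\Lambda$. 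Moreover it obeys the scaling $\Omega^\Lambda(y)=\Lambda^n\Omega^1(\Lambda y)$, where $\Omega^1$ is the same kernel at $\Lambda=1$ (still depending on $\alpha$). These three structural properties — radial, normalized, compactly supported — are exactly what the proof of Lemma \ref{sk-l-1} used, so the conclusion \eqref{sk-1-4} should transfer.

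For $|x|>1/\Lambda$ the translated support of $\Omega^\Lambda(x-\cdot)$ misses the origin, so $G$ is harmonic there; decomposing the convolution into spherical averages of $G$ over $\partial B(x,r)$ and invoking the mean value property (each average equals $G(x)$, and the radial weights integrate to $1$) yields $\mathrm{H}_{\tilde{\omega},\alpha}^\Lambda(G)(x)=G(x)$, the second branch of \eqref{sk-1-4}. For $|x|\leqslant1/\Lambda$ I would change variables $y\mapsto y/\Lambda$, turning the convolution into $\int_{\mathbb{R}^n}\Omega^1(\Lambda x-z)\,G(z/\Lambda)\,\mathrm{d}^nz$. Here the homogeneity of $G$ enters: for $n>2$ one has $G(z/\Lambda)=\Lambda^{n-2}G(z)$, producing the prefactor $\Lambda^{n-2}$, while for $n=2$ the relation $G(z/\Lambda)=G(z)+(2\pi)^{-1}\ln\Lambda$ together with $\int\Omega^1=1$ produces the additive term $(2\pi)^{-1}\ln\Lambda$. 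In both situations the convolution reduces to the $\Lambda=1$ deformation $\mathrm{H}^1(G)(\Lambda x)$, and splitting off the boundary value $G(x/(|x|\Lambda))$ — the value of $G$ at radius $1/\Lambda$ — leaves a remainder that, being radial, is a function $\mathbf{f}$ of $|x|^2\Lambda^2$.

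The continuity of $\mathbf{f}$ is where the acceptability hypothesis of Definition \ref{sk-d-3} is used: the innermost convolution $\int\omega_1(|\cdot|\Lambda/\alpha_1)G$ lies in $C^0$ by \eqref{sk-1-7}, and convolving successively with the remaining compactly supported kernels of finite mass preserves continuity, so $\mathrm{H}^1(G)$ is continuous and $\mathbf{f}\in C([0,1],\mathbb{R})$. Evaluating at $|x|=1/\Lambda$ and comparing the two branches of \eqref{sk-1-4} then forces $\mathbf{f}(1)=0$, which is simply the exterior identity $\mathrm{H}^1(G)=G$ read off at the boundary radius $1$. The main obstacle I expect is precisely this regularity bookkeeping: one must verify that iterating convolutions against the possibly singular (generalized-function) kernels $\omega_i$ — such as the spherical measure $\omega_0$ of the motivating computation — still yields a genuinely continuous $\mathbf{f}$, i.e.\ that acceptability is strong enough to cancel the non-integrable singularity of $G$ on the diagonal; the dimensional split $n=2$ versus $n>2$ in the scaling step is a secondary but necessary care.
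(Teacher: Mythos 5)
Your argument is correct and follows essentially the same route as the paper: continuity is drawn from the acceptability hypothesis, the exterior identity $\mathrm{H}_{\tilde{\omega},\alpha}^\Lambda(G)(x)=G(x)$ for $|x|\geqslant1/\Lambda$ is obtained by decomposing each averaging into spherical means and invoking the mean value property together with the normalization \eqref{sk-1-6}, and the interior branch is then read off from radiality and scaling as in Lemma \ref{sk-l-1}. The only structural difference is that you first compose the $k$ kernels into a single radial, normalized, compactly supported kernel (which the paper postpones to Lemma \ref{sk-l-3}) while the paper instead peels off one averaging at a time via the single-step identity \eqref{sk-1-8}; the two orderings are equivalent.
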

\begin{proof} Continuity of the function $\mathrm{H}_{\tilde{\omega},\alpha}^\Lambda(G)(\cdot)$ follows from the acceptability of the kernels, so it is enough to check only the fact that $\mathrm{H}_{\tilde{\omega},\alpha}^\Lambda(G)(x)=G(x)$ for all $|x|\geqslant 1/\Lambda$. To do this, it is enough to show that 
\begin{equation}\label{sk-1-8}
\int_{\mathbb{R}^n}\mathrm{d}^{n}x_i\,\frac{\omega_i(|x-x_i|\Lambda/\alpha_i)}{(\alpha_i/\Lambda)^n}\,
G(x_i)=G(x)
\end{equation}
for all $i$ and $|x|\geqslant\alpha_i/\Lambda$. Let us change the variable $x_i\to\alpha_ix_i/\Lambda+x$ in the last integral and use the result of Lemma \ref{sk-l-1} in the form of the relation
\begin{equation}\label{sk-1-10}
\int_{\mathrm{S}^{n-1}}\mathrm{d}^{n-1}\sigma(\hat{x}_i)
G(x+\alpha_ir\hat{x}_i/\Lambda)=S_{n-1}G(x)
\end{equation}
for all $|x|\geqslant\alpha_ir/\Lambda$, where $r>0$. Then, under the specified assumptions about the support of the kernel, see \eqref{sk-1-6}, we obtain
\begin{align*}
\int_{\mathbb{R}^n}\mathrm{d}^{n}x_i\,\omega_i(|x_i|)\,
G(x+\alpha_ix_i/\Lambda)=&
\int_{0}^{1}\mathrm{d}r\,r^{n-1}\omega_i(r)
\int_{\mathrm{S}^{n-1}}\mathrm{d}^{n-1}\sigma(\hat{x}_i)
G(x+\alpha_ir\hat{x}_i/\Lambda)\\=&\,
\bigg(S_{n-1}\int_{0}^{+\infty}\mathrm{d}r\,r^{n-1}\omega_i(r)\bigg)
G(x)=\,G(x)
\end{align*}
for all $|x|\geqslant\alpha_i/\Lambda$, from which the result of the lemma follows.
\end{proof}

\begin{lemma}\label{sk-l-3}
Let the assumptions of Lemma \ref{sk-l-2} be fulfilled, then the kernel of the integral operator
\begin{equation}\label{sk-1-11}
\int_{\mathbb{R}^n}\mathrm{d}^{n}x_k\,\frac{\omega_k(|x-x_k|\Lambda/\alpha_k)}{(\alpha_k/\Lambda)^n}\ldots
\int_{\mathbb{R}^n}\mathrm{d}^{n}x_2\,\frac{\omega_1(|x_3-x_2|\Lambda/\alpha_2)}{(\alpha_2/\Lambda)^n}
\frac{\omega_1(|x_2-x_1|\Lambda/\alpha_1)}{(\alpha_1/\Lambda)^n}
\end{equation}
is an acceptable kernel $\Lambda^n\omega(|x-x_1|\Lambda)$ of an integral operator from $C^\infty(\mathbb{R}^n,\mathbb{R})$ to $C^\infty(\mathbb{R}^n,\mathbb{R})$ and it satisfies the following relations
\begin{equation}\label{sk-1-12}
\mathop{\mathrm{supp}}_{\mathbb{R}^n}(\omega)\subset\{y\in\mathbb{R}^n:|y|\leqslant1\}\,\,\,\mbox{and}\,\,\,
\int_{\mathbb{R}^n}\mathrm{d}^{n}x\,\omega(|x|)=1.
\end{equation}
\end{lemma}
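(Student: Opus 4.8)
The plan is to recognize that each of the $k$ factors in \eqref{sk-1-11} is the kernel of a \emph{convolution} (translation-invariant) operator, namely convolution against $k_i(y)=\omega_i(|y|\Lambda/\alpha_i)/(\alpha_i/\Lambda)^n$. Consequently the whole composite operator is convolution against the single kernel $\omega_{\mathrm{tot}}=k_k*\cdots*k_1$, and it remains only to verify the four asserted properties of $\omega_{\mathrm{tot}}$: that it has the radial, $\Lambda$-scaled shape $\Lambda^n\omega(|x-x_1|\Lambda)$, that it is supported in the ball of radius $1/\Lambda$, that it integrates to $1$, and that it is acceptable.

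First I would establish the \emph{shape}. Each $k_i$ is rotation invariant, hence so is the composite operator, and therefore its kernel $\omega_{\mathrm{tot}}$ depends only on $|x-x_1|$; writing $\omega_{\mathrm{tot}}(y)=\Lambda^n\omega(|y|\Lambda)$ then merely fixes a normalization. The \emph{support} claim follows from the elementary fact that the support of a convolution lies in the Minkowski sum of the supports: since each $k_i$ is supported in $\{|y|\le\alpha_i/\Lambda\}$, the product $\omega_{\mathrm{tot}}$ is supported in $\{|y|\le\sum_i\alpha_i/\Lambda\}\subset\{|y|\le1/\Lambda\}$ by \eqref{sk-1-1}, which is exactly $\mathop{\mathrm{supp}}(\omega)\subset\{|y|\le1\}$ after rescaling. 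The \emph{normalization} is the product rule for total masses: a direct change of variables in \eqref{sk-1-6} gives $\int k_i=1$ for each $i$, and since the integral of a convolution factorizes, $\int\omega_{\mathrm{tot}}=\prod_i\int k_i=1$, yielding \eqref{sk-1-12}.

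The real content is the \emph{acceptability} of $\omega_{\mathrm{tot}}$, and here I would simply reduce to the preceding lemma. Applying the composite operator \eqref{sk-1-11} to the fundamental solution $G$ produces precisely the deformed function $\mathrm{H}_{\tilde{\omega},\alpha}^\Lambda(G)$ of \eqref{sk-1-5}; by Lemma \ref{sk-l-2} this coincides with the piecewise expression of Lemma \ref{sk-l-1} and is in particular continuous. Thus $\int_{\mathbb{R}^n}\mathrm{d}^nx_1\,\omega_{\mathrm{tot}}(|x-x_1|)G(x_1)\in C^0(\mathbb{R}^n,\mathbb{R})$, which is exactly the definition of $0$-acceptability from Definition \ref{sk-d-3}. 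No explicit computation of $\omega$ is needed, since its defining properties are inherited structurally from the factors and from Lemma \ref{sk-l-2}.

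The step requiring the most care is justifying the convolution-algebra manipulations, namely associativity, the support-addition rule, and the factorization of the total mass, when the $\omega_i$ are allowed to be distributions (for instance the surface measure generating the original spherical averaging in \eqref{sk-1-2}). I would handle this by noting that every $k_i$ is a compactly supported finite Radon measure, for which convolution is associative and commutative and the support and mass rules hold verbatim; the radiality of $\omega_{\mathrm{tot}}$ then follows from the rotation invariance of each factor rather than from any pointwise formula, so that the argument never needs a closed form for the composite kernel.
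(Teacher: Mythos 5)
Your proof is correct and follows essentially the same route as the paper's: the paper establishes radiality, support, and normalization by the explicit changes of variables $x_i\to x_i+x_{i-1}$, $x_i\to\alpha_i x_i/\Lambda$ and the spherical symmetry of the kernels, which is exactly the convolution-algebra packaging (Minkowski sum of supports, factorization of total mass) you use, and it likewise derives acceptability from the components rather than from a closed form for $\omega$. Your two refinements --- justifying the manipulations for compactly supported Radon measures rather than functions, and obtaining acceptability by feeding $G$ through Lemma \ref{sk-l-2} --- tighten two points the paper leaves implicit, but do not change the argument.
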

\begin{proof} For $k=1$, the case is trivial, so let $k\geqslant2$. Note that the combination under consideration depends on $|x-x_1|$. This is a consequence of the series of transformations $x_i\to x_i+x_1$ for $i\in\{2,\ldots,k\}$ and the spherical symmetry of the kernels. Next, $\omega(|x-x_1|\Lambda)$ is acceptable due to the acceptability of all components, and the restriction on the support follows from the restrictions on the supports of individual components. To check the latter relation, consider the corresponding integral and perform an additional series of changes $x_i\to\alpha_ix_i/\Lambda$ for $i\in\{2,\ldots,k\}$
\begin{equation}
\int_{\mathbb{R}^n}\mathrm{d}^{n}x\,\omega(|x|)=
\int_{\mathbb{R}^n}\mathrm{d}^{n}x
\int_{\mathbb{R}^n}\mathrm{d}^{n}x_k\,\omega_k(|x-x_k|)
\int_{\mathbb{R}^n}\mathrm{d}^{n}x_2\,\omega_2(|x_3-x_2|)
\omega_1(|x_2|).
\end{equation}
Then the statement follows after substituting $x\to x+x_k$ and $x_i\to x_i+x_{i-1}$ for $i\in\{3,\ldots,k\}$, leading to the factorization, and the usage of the normalizations from \eqref{sk-1-6}.
\end{proof}
\begin{defa}\label{sk-d-4}
The symbol $\Omega_j(\mathbb{R}^n)$ denotes a set of all $j$-acceptable kernels $\omega(|\cdot\,-\,\cdot|)$ of integral operators from $C^\infty(\mathbb{R}^n,\mathbb{R})$ to $C^\infty(\mathbb{R}^n,\mathbb{R})$, satisfying the constraints on the support and normalization from \eqref{sk-1-12}.
\end{defa}
\begin{col}\label{sk-c-2}
Examples of kernels from Definition \ref{sk-d-2} and Lemma \ref{sk-l-2} belong to $\Omega_0(\mathbb{R}^n)$.
\end{col}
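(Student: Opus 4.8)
The plan is to verify directly the three defining conditions of membership in $\Omega_0(\mathbb{R}^n)$ for each of the two families of kernels. Recall from Definitions \ref{sk-d-3} and \ref{sk-d-4} that a kernel $\omega(|\cdot\,-\,\cdot|)$ lies in $\Omega_0(\mathbb{R}^n)$ precisely when it is $0$-acceptable, i.e.\ its convolution with $G$ is continuous as in \eqref{sk-1-7}, when its support is contained in the closed unit ball, and when it is normalized to unit integral, see \eqref{sk-1-12}. Thus for both families the support and normalization conditions reduce to bookkeeping, and the only analytically nontrivial point is the $0$-acceptability.

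First I would treat the spherical kernel underlying Definition \ref{sk-d-2}, namely $\omega_0(r)=\delta(r-1)/S_{n-1}$, regarded as the kernel $\omega_0(|\cdot\,-\,\cdot|)$. Its support is the unit sphere $\{|y|=1\}$, which lies in the closed unit ball, and the radial computation $\int_{\mathbb{R}^n}\mathrm{d}^nx\,\omega_0(|x|)=S_{n-1}\int_0^{+\infty}\mathrm{d}r\,r^{n-1}\delta(r-1)/S_{n-1}=1$ gives the normalization. For acceptability, I would observe that the convolution $\int_{\mathbb{R}^n}\mathrm{d}^nx_1\,\omega_0(|x-x_1|)G(x_1)$ is exactly the single spherical average $\mathrm{H}_\alpha^\Lambda(G)$ from \eqref{sk-1-2} specialized to $k=1$, $\alpha_1=1$, $\Lambda=1$. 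By Lemma \ref{sk-l-1} this function equals $\mathbf{f}(|x|^2)+G(x/|x|)$ for $|x|\leqslant 1$ and $G(x)$ for $|x|>1$, with $\mathbf{f}\in C([0,1],\mathbb{R})$ and $\mathbf{f}(1)=0$; the condition $\mathbf{f}(1)=0$ is exactly what guarantees that the two branches agree on the sphere $|x|=1$, so the convolution is continuous and $\omega_0$ is $0$-acceptable.

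For the second family, the kernels $\omega_i$ appearing in Lemma \ref{sk-l-2}, all three conditions are already built into the hypotheses of that lemma. Indeed the supports are constrained to the unit ball and the integrals are normalized to $1$ by \eqref{sk-1-6}, while acceptability is assumed outright in the statement. Hence each $\omega_i$ satisfies Definition \ref{sk-d-4} verbatim and lies in $\Omega_0(\mathbb{R}^n)$, so there is nothing further to prove beyond matching notation.

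I expect the only real obstacle to be the continuity of the spherical average of $G$ across the interface $|x|=1$, where the averaging sphere passes through the singular point of the fundamental solution. This is precisely the issue already resolved by the boundary value $\mathbf{f}(1)=0$ in Lemma \ref{sk-l-1}: because $G\sim|x_1|^{2-n}$ is integrable against the $(n-1)$-dimensional spherical measure, the singularity produces no jump. Invoking Lemma \ref{sk-l-1} therefore lets me avoid redoing this estimate, and all remaining steps are direct substitutions of the definitions.
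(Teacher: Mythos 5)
Your verification is correct and matches the paper's (implicit) justification: the corollary is stated without proof precisely because the support and normalization conditions are immediate from \eqref{sk-1-1} and \eqref{sk-1-6} together with the radial computation, while $0$-acceptability is supplied by the continuity statement of Lemma \ref{sk-l-1}. The only point you leave tacit is that for $k>1$ in Definition \ref{sk-d-2} one should invoke Lemma \ref{sk-l-3} to pass from the individual spherical kernels to the composed kernel of the full averaging operator (whose support lies in the unit ball because $\sum_i\alpha_i\leqslant1$), but that is exactly the bookkeeping the paper intends.
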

\begin{defa}\label{sk-d-5}
From now on, the deforming operator is an integral operator $\mathrm{H}_\omega^\Lambda$ from $C^\infty(\mathbb{R}^n,\mathbb{R})$ to $C^\infty(\mathbb{R}^n,\mathbb{R})$ with a kernel $\Lambda^n\omega(|x-x_1|\Lambda)\in\Omega_0(\mathbb{R}^n)$, where $\Lambda>0$. A function
$\phi\in C^\infty(\mathbb{R}^n,\mathbb{R})$, 
transformed with the usage of this operator,
of the form
\begin{equation}\label{sk-1-13}
\phi_\omega^\Lambda(x)\equiv
\mathrm{H}_\omega^\Lambda(\phi)(x)=\int_{\mathbb{R}^n}\mathrm{d}^nx_1\,\Lambda^n\omega(|x-x_1|\Lambda)\phi(x_1)
\end{equation}
is called the deformation of the function $\phi$ with the kernel $\omega$ and the regularizing parameter $\Lambda$.
\end{defa}

\subsection{Generalization of the deformation}
\label{sk:sec:reg:2}

Let all the assumptions from Section \ref{sk:sec:def} be fulfilled. Let us extend the definition of the averaging operator to the case of a smooth Riemannian manifold. Note that at short distances, the main approximation of a Green's function for the Dirichlet problem with zero boundary conditions has the form of the fundamental solution from \eqref{sk-1-3} for the Euclidean case. However, unlike $\mathbb{R}^n$, the manifold $\mathcal{M}$ may have a boundary, so we should specify separately how the averaging occurs near the boundary. As before, the function $d(\,\cdot\,,\,\cdot\,)$, defined on $\mathcal{M}\times\mathcal{M}$, denotes the geodesic distance.

Before proceeding to the formulation of definitions, we note that kernels of integral operators in the case of a smooth manifold with a boundary in general position may depend on the points of the manifold itself (individually), and not only on the distance between the points, as it was in the case of flat metric. The dependence of a kernel from Definition \ref{sk-d-4} follows from the invariance with respect to shifts and rotations, which may be absent in the case of a general position.

\begin{defa}\label{sk-d-20}
Let $\Lambda>0$ and $p\in\mathcal{M}$, then $\mathcal{B}_{1/\Lambda}(p,\mathcal{M})=\{p_1\in\mathcal{M}:d(p,p_1)\leqslant1/\Lambda\}$.
\end{defa}
\begin{col}\label{sk-c-6}
If $Y\neq\emptyset$ and there exists a $q\in Y$ such that $d(p,q)<1/\Lambda$, then the set $\mathcal{B}_{1/\Lambda}(p,\mathcal{M})$ is a truncated ball containing part of the border $Y$.
\end{col}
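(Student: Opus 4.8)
The statement is essentially a direct consequence of the triangle inequality together with the smoothness of the boundary, so the plan is short. First I would observe that $q\in\mathcal{B}_{1/\Lambda}(p,\mathcal{M})$: indeed $q\in Y=\partial\mathcal{M}\subset\mathcal{M}$ and by hypothesis $d(p,q)<1/\Lambda$, so $q$ satisfies the defining condition from Definition~\ref{sk-d-20}. Hence the ball already meets the boundary in at least one point, which is the starting observation.

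Next, to show that the ball contains a genuine \emph{part} of $Y$ rather than an isolated boundary point, I would set $\varepsilon=1/\Lambda-d(p,q)$, which is strictly positive thanks to the strict inequality in the hypothesis. For any $q'\in Y$ with $d(q,q')<\varepsilon$, the triangle inequality for the geodesic distance gives
\begin{equation}
d(p,q')\leqslant d(p,q)+d(q,q')<d(p,q)+\varepsilon=1/\Lambda,
\end{equation}
so $q'\in\mathcal{B}_{1/\Lambda}(p,\mathcal{M})$. Since $Y$ is a smooth $(n-1)$-dimensional submanifold (Section~\ref{sk:sec:def}), the set $\{q'\in Y:d(q,q')<\varepsilon\}$ is a nonempty relatively open neighbourhood of $q$ in $Y$, i.e.\ a full $(n-1)$-dimensional piece of the boundary lying inside the ball. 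This is precisely the portion of $Y$ captured by $\mathcal{B}_{1/\Lambda}(p,\mathcal{M})$.

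Finally, for the qualifier \emph{truncated} I would argue that the ball cannot be a complete geodesic ball. If $p$ lay at distance at least $1/\Lambda$ from all of $Y$, the set $\mathcal{B}_{1/\Lambda}(p,\mathcal{M})$ would be a full geodesic ball whose relative boundary (the geodesic sphere of radius $1/\Lambda$) sits entirely in the interior $\mathcal{M}\setminus Y$. Here, by the previous paragraph, the ball already contains an open neighbourhood of $q$ in $Y$, so its relative boundary in $\mathcal{M}$ meets $Y$; the ball is cut off by the manifold boundary exactly along this part of $Y$, which is what "truncated ball containing part of the border" is meant to convey.

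The only mildly delicate point, and the one I would flag as the main obstacle, is that \emph{truncated} is an informal descriptor, so the real content worth verifying is that the captured part of $Y$ is genuinely $(n-1)$-dimensional rather than a single point. This is guaranteed by the strictness $d(p,q)<1/\Lambda$ (which produces the positive slack $\varepsilon$) together with the smoothness of $Y$ assumed throughout; no finer properties of the geodesic distance near the boundary, nor any behaviour of geodesics upon reaching $Y$, are required.
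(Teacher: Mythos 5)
Your proof is correct and is just the natural formalization of what the paper treats as immediate from Definition~\ref{sk-d-20}: the paper states this corollary without proof, and your triangle-inequality argument (using the strict inequality to get a positive slack $\varepsilon$ and hence a relatively open, $(n-1)$-dimensional piece of $Y$ inside the ball) is exactly the reasoning being implied. No gap; the only informal part is the word \emph{truncated}, which you correctly identify as a descriptor rather than a claim requiring further verification.
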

\begin{remark}\label{sk-r-10}
	Further, we will assume that the parameter $\Lambda$ is so large that for any $p\in\mathcal{M}$ and any $p_1\in\mathcal{B}_{1/\Lambda}(p,\mathcal{M})$ there is a unique geodesic contained in $\mathcal{B}_{1/\Lambda}(p,\mathcal{M})$.
\end{remark}
\begin{defa}\label{sk-d-18}
Let $j\in\mathbb{N}\cup\{0\}$ and $\Lambda>0$. A kernel $\omega(\,\cdot\,,\,\cdot\,;\Lambda)$ of an integral operator from $C^\infty(\mathcal{M},\mathbb{R})$ to $C^j(\mathcal{M},\mathbb{R})$ is called $j$-acceptable if the relation holds
\begin{equation}\label{sk-7-1}
\int_{\mathcal{M}}\mathrm{d}^{n}p_1\,\omega(\,\cdot\,,p_1;\Lambda)G(p_1,p_2)\in C^j(\mathcal{M}_\Lambda,\mathbb{R})\cap
C(\mathcal{M},\mathbb{R})
\end{equation}
for all $p_2\in\mathcal{M}$, and limit transitions
\begin{equation}\label{sk-7-1-1}
A(p)\int_{\mathcal{M}}\mathrm{d}^{n}p_1\,\omega(p,p_1;\Lambda)G(p_1,p_2)\to\delta(p,p_2)
\end{equation}
and
\begin{equation}\label{sk-7-1-2}
\omega(p,p_1;\Lambda)\to\delta(p,p_1)
\end{equation}
are also valid for $\Lambda\to+\infty$ on the class $C^\infty(\mathcal{M},\mathbb{R})$.
At the same time, the index will be omitted in the name of $0$-acceptable ones. A set of $j$-acceptable kernels additionally satisfying the relations
\begin{equation}\label{sk-7-3}
\mathop{\mathrm{supp}}_{\mathcal{M}}(\omega(p,\,\cdot\,;\Lambda))\subset\mathcal{B}_{1/\Lambda}(p,\mathcal{M})
\,\,\,\mbox{for all}\,\,\,p\in\mathcal{M},\,\,\,\mbox{and also}\,\,\,
\int_{\mathcal{M}}\mathrm{d}^{n}p_1\,\omega(p,p_1;\Lambda)=1,
\end{equation}
is notated as $\Omega_{j}^\Lambda(\mathcal{M})$.
\end{defa}
\begin{remark}\label{sk-r-9}
In the integrals from \eqref{sk-7-1} and \eqref{sk-7-3} we actually integrate over the subset $\mathcal{B}_{1/\Lambda}(p,\mathcal{M})\subset\mathcal{M}$. The integrals themselves should be understood in the sense of partitioning into subsets using the corresponding coordinate charts from Sections \ref{sk:sec:def:1} and \ref{sk:sec:def:4}.
\end{remark}
\begin{defa}\label{sk-d-19}
A deforming operator on the Riemannian manifold $\mathcal{M}$ is the integral operator $\mathrm{H}_\omega^\Lambda$, which acts from $C^\infty(\mathcal{M},\mathbb{R})$ to $C^j(\mathcal{M},\mathbb{R})$, with a kernel $\omega(\,\cdot\,,\,\cdot\,;\Lambda)\in\Omega_j^\Lambda(\mathcal{M})$, where the parameter $\Lambda$ satisfies Remark \ref{sk-r-10}, as well as $\Lambda>\Lambda_1$ taking into account Remark \ref{sk-r-2}. A function $\phi\in C^\infty(\mathcal{M},\mathbb{R})$ transformed with the usage of this operator
	\begin{equation}\label{sk-7-2}
		\phi_\omega^\Lambda(p)\equiv
		\mathrm{H}_\omega^\Lambda(\phi)(p)=\int_{\mathcal{M}}\mathrm{d}^{n}p_1\,\omega(p,p_1;\Lambda)
		\phi(p_1)
	\end{equation}
	is called a deformation of the function $\phi$ with the kernel $\omega$ and the regularizing parameter $\Lambda$.
\end{defa}

As an example, we consider the kernel of averaging over a sphere by analogy with the operator from \eqref{sk-1-2} for $k=1$, which was previously studied for the case of the Euclidean space $\mathbb{R}^n$. Indeed, then relation \eqref{sk-7-1} is rewritten as
\begin{equation}\label{sk-7-5}
\int_{\mathcal{M}}\mathrm{d}^{n}p_1\,\omega_0(p,p_1;\Lambda)
\phi(p_1)=B^{-1}_{1/\Lambda}(p,\mathcal{M})
\int_{\partial\mathcal{B}_{1/\Lambda}(p,\mathcal{M})}\mathrm{d}^{n-1}p_1\,\phi(p_1),
\end{equation}
where the normalization factor is
\begin{equation}\label{sk-7-6}
B_{1/\Lambda}(p,\mathcal{M})=
\int_{\partial\mathcal{B}_{1/\Lambda}(p,\mathcal{M})}\mathrm{d}^{n-1}p_1.
\end{equation}
Considering the fact that the main term of the asymptotics at small distances for the function $G(\,\cdot\,,\,\cdot\,)$ has the same form as in the Euclidean case, see \eqref{sk-1-3}, then it can be argued that averaging with such a kernel is acceptable and the kernel belongs to the set $\Omega_0^\Lambda(\mathcal{M})$.
\begin{remark}\label{sk-r-12}
All definitions of this section can be extended similarly to the case of the submanifolds $\mathcal{M}_l$ and $\mathcal{M}_r$.
\end{remark}
\begin{lemma}\label{sk-l-9}
For each $\omega(\,\cdot\,,\,\cdot\,;\Lambda)\in\Omega_0^\Lambda(\mathcal{M})$ there are such $\omega_l(\,\cdot\,,\,\cdot\,;\Lambda)\in\Omega_0^\Lambda(\mathcal{M}_l)$ and $\omega_r(\,\cdot\,,\,\cdot\,;\Lambda)\in\Omega_0^\Lambda(\mathcal{M}_r)$ that for all $i\in\{l,r\}$, $p\in\mathcal{M}_{i,\Lambda}$, and $p_1\in\mathcal{M}_{i}$, the relation holds
\begin{equation}\label{sk-7-7}
\omega(p,p_1;\Lambda)-\omega_i(p,p_1;\Lambda)=0.
\end{equation}
\end{lemma}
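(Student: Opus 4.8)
The plan is to build $\omega_i$ by keeping $\omega$ (restricted to $\mathcal{M}_i$) wherever the averaging ball does not reach $\Sigma$, and replacing it by an intrinsic acceptable kernel on $\mathcal{M}_i$ inside the thin collar of width $1/\Lambda$ around $\Sigma$, gluing the two pieces with a cutoff in the distance-to-$\Sigma$ variable.

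First I would record the geometric fact that for $p\in\mathcal{M}_{i,\Lambda}$ the open ball $\mathcal{B}_{1/\Lambda}(p,\mathcal{M})$ is contained in $\mathcal{M}_i$ and disjoint from $\Sigma$. By Definition \ref{sk-d-7} applied to $\mathcal{M}_i$, such a $p$ lies at distance $\geqslant 1/\Lambda$ from $\partial\mathcal{M}_i\supset\Sigma$; using the uniqueness of geodesics inside balls of radius $1/\Lambda$ from Remark \ref{sk-r-10}, any point of the opposite piece $\mathcal{M}_{i'}$ is reachable from $p$ only along a geodesic crossing $\Sigma$, which forces its distance to $p$ to be $\geqslant 1/\Lambda$. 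Hence $\mathcal{B}_{1/\Lambda}(p,\mathcal{M})=\mathcal{B}_{1/\Lambda}(p,\mathcal{M}_i)$ there, and more generally this coincidence holds whenever $d(p,\Sigma)\geqslant 1/\Lambda$. On this region $\omega(p,\,\cdot\,;\Lambda)$ is supported in $\mathcal{M}_i$ and the normalization integral in \eqref{sk-7-3} over $\mathcal{M}$ equals the one over $\mathcal{M}_i$, so the restriction $\omega(p,\,\cdot\,;\Lambda)\big|_{\mathcal{M}_i}$ already meets the support and normalization constraints of $\Omega_0^\Lambda(\mathcal{M}_i)$ and, in particular, coincides with $\omega$ on $\mathcal{M}_{i,\Lambda}$, which is exactly the identity \eqref{sk-7-7}.

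Second I would treat the collar $\{p\in\mathcal{M}_i:d(p,\Sigma)<1/\Lambda\}$, where part of the support of $\omega(p,\,\cdot\,;\Lambda)$ leaks into $\mathcal{M}_{i'}$ and plain restriction destroys the normalization. Here I would fix any reference kernel $\tilde\omega_i\in\Omega_0^\Lambda(\mathcal{M}_i)$ — for instance the truncated-sphere average of the type \eqref{sk-7-5}, whose existence on $\mathcal{M}_i$ (a smooth manifold with boundary $Y_i\cup\Sigma$) is guaranteed by the example following \eqref{sk-7-6} together with Remark \ref{sk-r-12} — and set $\omega_i$ equal to the renormalized convex combination
\[
\omega_i(p,p_1;\Lambda)=\frac{\chi(\Lambda\,d(p,\Sigma))\,\omega(p,p_1;\Lambda)+\bigl(1-\chi(\Lambda\,d(p,\Sigma))\bigr)\,\tilde\omega_i(p,p_1;\Lambda)}{\displaystyle\int_{\mathcal{M}_i}\mathrm{d}^np_1'\,\Bigl(\chi(\Lambda\,d(p,\Sigma))\,\omega(p,p_1';\Lambda)+\bigl(1-\chi(\Lambda\,d(p,\Sigma))\bigr)\,\tilde\omega_i(p,p_1';\Lambda)\Bigr)},
\]
where $\chi$ is smooth, equals $1$ for arguments $\geqslant 1$ and vanishes near the origin. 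For $d(p,\Sigma)\geqslant 1/\Lambda$ this collapses to $\omega\big|_{\mathcal{M}_i}$ (the denominator being $1$ by the previous step), so the two definitions match and the identity on $\mathcal{M}_{i,\Lambda}$ persists; everywhere the numerator is supported in $\mathcal{B}_{1/\Lambda}(p,\mathcal{M}_i)$ and the denominator is the positive, continuous mass of the averaging kernel landing in $\mathcal{M}_i$, so $\omega_i$ satisfies the support and normalization requirements \eqref{sk-7-3}.

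It then remains to verify the acceptability conditions \eqref{sk-7-1}, \eqref{sk-7-1-1}, \eqref{sk-7-1-2} for $\omega_i$ against $G_i$, and this is where the main obstacle lies: transferring acceptability from the pair $(\omega,G)$ on $\mathcal{M}$ to $(\omega_i,G_i)$ on $\mathcal{M}_i$. Writing $G_i=G+(G_i-G)$ and using that, by the reflection-method argument in the proof of Lemma \ref{sk-l-7}, the difference $G_i-G$ is smooth on $\mathcal{M}_i\times\mathcal{M}_i$ (the diagonal singularities cancel), the smoothing $\int\omega(p,p_1)G_i(p_1,p_2)\,\mathrm{d}^np_1$ differs from $\int_{\mathcal{M}}\omega\,G$ — continuous since $\omega\in\Omega_0^\Lambda(\mathcal{M})$ — only by integrals of $\omega$ against smooth or locally integrable kernels, and is therefore itself continuous; the $\tilde\omega_i$-piece is continuous because $\tilde\omega_i\in\Omega_0^\Lambda(\mathcal{M}_i)$, and division by the continuous strictly positive denominator preserves continuity, giving \eqref{sk-7-1}. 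The limit transitions follow because as $\Lambda\to+\infty$ the collar on which $\omega_i\neq\omega\big|_{\mathcal{M}_i}$ shrinks to $\Sigma$ while both $\omega$ and $\tilde\omega_i$ are normalized averages whose support collapses onto the diagonal, so $\omega_i\to\delta$ and its smoothed Laplacian tends to $\delta$. The delicate point throughout is checking that the leakage integral $\int_{\mathcal{M}_{i'}}\omega(p,p_1)G(p_1,p_2)\,\mathrm{d}^np_1$ stays continuous up to $\Sigma$; this holds because $G$ is locally integrable and the two arguments sit on opposite sides of $\Sigma$, so no common diagonal singularity is hit.
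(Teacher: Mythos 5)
Your construction is essentially the paper's: the paper simply defines $\omega_i(p,p_1;\Lambda)=\omega(p,p_1;\Lambda)\bigl(\int_{\mathcal{M}_i}\mathrm{d}^np_2\,\omega(p,p_2;\Lambda)\bigr)^{-1}$, i.e.\ the renormalized restriction, and observes that for $p\in\mathcal{M}_{i,\Lambda}$ the support condition \eqref{sk-7-3} forces the denominator to equal $1$, so nothing changes there and \eqref{sk-7-7} holds. This is exactly your formula with $\chi\equiv1$; the extra blending with a reference kernel $\tilde\omega_i$ in the collar is not needed for the statement being proved, since the identity is only asserted on $\mathcal{M}_{i,\Lambda}$ and the renormalized restriction already satisfies the support and normalization constraints everywhere. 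Your additional verification of the acceptability conditions via $G_i=G+(G_i-G)$ and the smoothness of the difference is a genuine service that the paper omits entirely, and your first geometric paragraph is the same observation the paper invokes implicitly.
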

\begin{proof}
Let $p,p_1\in\mathcal{M}_i$. Let us define functions on submanifolds as follows
\begin{equation}\label{sk-7-8}
\omega_i(p,p_1;\Lambda)=\omega(p,p_1;\Lambda)\bigg(
\int_{\mathcal{M}_i}\mathrm{d}^{n-1}p_2\,\omega(p,p_2;\Lambda)\bigg)^{-1},
\end{equation}
which, taking into account \eqref{sk-7-3}, leads only to a change in the normalization in the following domain $p\in\mathcal{M}_i\setminus\mathcal{M}_{i,\Lambda}$.
\end{proof}

\subsection{Regularization of functionals}
\label{sk:sec:reg:3}

In this section, a deformation of the classical action from Definition \ref{sk-d-6} is formulated. This approach, firstly, leads to the regularization of the perturbative decomposition for the quantum (effective) action from \eqref{sk-5-7}, and, secondly, is compatible with the process of gluing of manifolds and the corresponding partition functions, see Section \ref{sk:sec:app}.

First of all, we pay attention to some additional considerations regarding the introduction of regularization. It is clear that it should get rid of the divergences discussed at the end of Section \ref{sk:sec:def:3}, transforming them into functions depending on the regularizing parameter $\Lambda$ with singularities at $\Lambda\to+\infty$. However, a natural question arises: "At what stage should the deformation be introduced?". Our answer, based primarily on a large number of calculations in standard models from quantum field theory, is that the deformation should be introduced at the stage of determining the classical action. Various arguments can be given here, including the feasibility of the connection beetwen quantum equations of motion and quantum actions before and after the deformation, but we pay attention to only one specific reason. Although it is less formalized, at the same time it is strategic in nature.

As it is known, the functional integral does not currently have a strict mathematical description, except, of course, in some very simple cases. For this reason, when studying quantum field models, one has to work with perturbative decompositions, for which nothing else is required except the "normalization" and the ability to calculate standard Gaussian integrals. Given this, a general view of solving direct and inverse problems in mathematical physics can be extended to the functional integral method. We can undertand this transform as a kind of "black box" into which we sent signals and receive responses. Only, unlike classical physics problems, the role of the signal is played by a classical action (taking into account boundary conditions), and the role of the response is the correcponding quantum action. That is, we have the following diagram $S_{\mathrm{cl}}\to\blacksquare\to W_{\mathrm{eff}}$.

Using general logic to extend the properties of the classical integral to a wider class of objects, as well as using a comparison of the resulting quantum (effective) actions with experiments, in theoretical physics it was possible to construct a "copy"\footnote{The word "copy" means a mathematical version of the "black box", which produces the same results as "black box". In our case, from a combinatorial point of view.} for "black box", which, for well known classical models, produces "correct" effective actions. Based on such considerations, we can say that the resulting mathematical model was built empirically.

It is important to note here that for all known classical actions, the procedure for obtaining the corresponding effective actions is the same, up to a multiplier. This leads to the idea that in the process of introducing a regularization, while there is no strict mathematical description, the "black box" cannot be deformed. So, it is the classical action that needs to be changed. Moreover, as practice shows, the deformation of the functional integration process has to be selected separately for each model, while the process of introducing regularization through deforming classical actions applies to all models in the same way.

In addition, and this is very important, the introduction of regularization by deforming the process of the functional integration is tied to the use of perturbative decompositions, which thus become a significant limiting factor. This means that, after constructing the mathematical formalism for the functional integration, the issue related to the introduction of regularization will have to be solved anew, because models should be studied not only by decomposition in a small parameter.

\begin{remark}\label{sk-r-13}
A regularization of the effective action  $W_{\mathrm{eff}}$ should be introduced by deforming the classical action $S_{\mathrm{cl}}$.
\end{remark}

Next, the following question arises: "How exactly should the classical action be deformed?". This problem must be solved based on the need to preserve one or another property of the classical theory. At the same time, properties are understood not only as symmetries, which are important for describing physics, but also as a set of classical mathematical formalism that can be applied within the framework of theory. To clarify, let us look at an example of a quartic scalar theory in Euclidean space $\mathbb{R}^4$. In this case, an operator in the quadratic form and its fundamental solution in the momentum representation have the form
\begin{equation}\label{sk-8-1}
-\partial_{x_\mu}\partial_{x^\mu}+m^2,\,\,\,\frac{1}{|k|^2+m^2}.
\end{equation}
As is known, ultraviolet divergences appear due to the "slow" decrease of the fundamental solution at $|k|\to+\infty$, therefore one of the popular\footnote{The most popular regularization method is the dimensional one, which is achieved by transition to a space with non-integer dimension. Since it contains a number of significant open mathematical questions and is based, among other things, on a number of "agreements", it will not be used as an example in this paper.} deformations is the use of "higher" derivatives. It consists in the transition from the standard Laplace operator to a higher-order operator, for which the fundamental solution decreases "enough" quickly. For example,  it is possible to consider the following variant
\begin{equation}\label{sk-8-2}
\big(\partial_{x_\mu}\partial_{x^\mu}\big)/\Lambda^2-\partial_{x_\mu}\partial_{x^\mu}+m^2,\,\,\,\frac{1}{|k|^4/\Lambda^2+|k|^2+m^2}.
\end{equation}
In this case, the deformation is removed by the transition $\Lambda\to+\infty$. It is clear that from the point of view of regularization of Feynman diagrams, the desired result has been achieved. Nevertheless, in the process of such a transition, the opportunity to apply almost the entire formalism of classical mathematical physics concerning the standard Laplace operator was lost. It is important to note that the theory devoted to higher-order operators is much poorer. Moreover, in order to set the correct task for the received operator, additional conditions are needed, the use of which raises a question from a physical point of view. Additionally, we note that during the process of such deformation of a quadratic form, not only the operator itself is deformed, but also the "integration domain"\footnote{As a rule, this change is ignored, guided only by the desire to regularize the diagrams faster and to calculate the resulting integral.}$^{,}$\footnote{In the papers \cite{Iv-2024-1,Ivanov-Akac,Ivanov-Kharuk-2020,Ivanov-Kharuk-20222,Ivanov-Kharuk-2023,Iv-Kh-2024,Kh-2024}, such changes were taken into account, and the regularization used fully corresponded to all the considerations of this section.}.  This leads to an unnecessary\footnote{Unclear from a mathematical point of view.} changing the "black box", and that is bad. Therefore, one more remark can be formulated that narrows down the class of deformations.

\begin{remark}\label{sk-r-14}
A deformation of the classical action $S_{\mathrm{cl}}$ should not affect the operator in the quadratic form, that is, the functional $S_0$. Thus, only the term $S_{\mathrm{int}}$ responsible for interaction should be changed.
\end{remark}

In order to definitively formulate the rules, we conduct another discussion. Suppose that the operator in the quadratic form remains the same, and the fundamental solution has the form of \eqref{sk-8-2}. Let us rewrite it as follows
\begin{equation}\label{sk-8-3}
\frac{1}{|k|^4/\Lambda^2+|k|^2+m^2}=
\bigg(\frac{|k|^2+m^2}{|k|^4/\Lambda^2+|k|^2+m^2}\bigg)^{1/2}\frac{1}{|k|^2+m^2}\bigg(\frac{|k|^2+m^2}{|k|^4/\Lambda^2+|k|^2+m^2}\bigg)^{1/2}.
\end{equation}
The last partition means that the corresponding deformation can be achieved by changing the component $S_{\mathrm{int}}$ by averaging the fields according to the formula
\begin{equation}\label{sk-8-4}
\phi(x)\to\int_{\mathbb{R}^4}\mathrm{d}^4k\,e^{ik_\mu(x-y)^\mu}\bigg(\frac{|k|^2+m^2}{|k|^4/\Lambda^2+|k|^2+m^2}\bigg)^{1/2}\phi(y).
\end{equation}
In this case, the deformation is removed by the transition $\Lambda\to+\infty$, as it was before. This means that in the process of deformation, locality in the functional responsible for interaction is lost. Taking into account the previous remarks regarding the change of the "integration domain", we cannot provide examples of deformation without loss of locality. However, the type of non-locality can be controlled by making the theory almost local.
\begin{defa}\label{sk-d-21}
Let $\phi\in C^\infty(\mathcal{M},\mathbb{R})$, $p\in\mathcal{M}$, and $\Lambda>0$ is enough large. A deformation is called quasi-local if, during its introduction, the field $\phi(p)$ is transformed using the kernel of an integral operator $k(p,\,\cdot\,;\Lambda)$,
support of which is in a ball whose radius tends to zero as $\Lambda\to+\infty$ uniformly over the first argument.
\end{defa}
Now, after all the preliminary remarks, we proceed to the formulation of a regularization, which me and my co-author N.V.Kharuk have repeatedly and successfully applied in various theories in the Euclidean space $\mathbb{R}^n$, see \cite{34,Iv-2024-1,Ivanov-Akac,Ivanov-Kharuk-2020,Ivanov-Kharuk-20222,Ivanov-Kharuk-2023,Iv-Kh-2024,Kh-2024,Ivanov-2022,Iv-2024}. Since the main task in the process of previously performed calculations concerned the study of multi-loop integrals and comparison with known results, insufficient attention was paid to the discussion of the regularization process. It is for this reason that this section has been made so detailed.
\begin{defa}\label{sk-d-22}
Let $\Lambda>\Lambda_1$ taking into account Remarks \ref{sk-r-2} and \ref{sk-r-10}, $\omega\in\Omega_0^\Lambda(\mathcal{M})$, and $\mathrm{H}_\omega^\Lambda$ is the corresponding deforming operator from Definition \ref{sk-d-19}. Then, within the framework of this paper, the regularization of the effective action $W_{\mathrm{eff}}^{\phantom{1}}\to W_{\mathrm{eff}}^{\Lambda}$ consists in a deformation of the classical action $S_{\mathrm{cl}}^{\phantom{1}}\to S_{\mathrm{cl}}^{\Lambda}$ of the following type
\begin{equation}\label{sk-8-5}
S_{\mathrm{cl}}^{\phantom{1}}[\,\cdot\,,\mathcal{M}]=S_{\mathrm{0}}^{\phantom{1}}[\,\cdot\,,\mathcal{M}]+S_{\mathrm{int}}^{\phantom{1}}[\,\cdot\,,\mathcal{M}]\xrightarrow{\mbox{reg.}}
S_{\mathrm{cl}}^{\Lambda}[\,\cdot\,,\mathcal{M}]=
S_{\mathrm{0}}^{\phantom{1}}[\,\cdot\,,\mathcal{M}]+
S_{\mathrm{int}}^{\Lambda}[\,\cdot\,,\mathcal{M}],
\end{equation}
where
\begin{equation}\label{sk-8-6}
S_{\mathrm{int}}^{\Lambda}[\,\cdot\,,\mathcal{M}]=S_{\mathrm{int}}^{\phantom{1}}[\mathrm{H}_\omega^\Lambda(\,\cdot\,),\mathcal{M}_\Lambda].
\end{equation}
\end{defa}
\begin{col}\label{sk-c-7}
Taking into account Definition \ref{sk-d-21}, the deformation \eqref{sk-8-5} is quasi-local.
\end{col}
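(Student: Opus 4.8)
The plan is to verify the statement by a direct reduction to definitions, since Corollary~\ref{sk-c-7} simply asserts that the specific construction of Definition~\ref{sk-d-22} satisfies the criterion spelled out in Definition~\ref{sk-d-21}. First I would observe that, by Definition~\ref{sk-d-22} and in agreement with Remark~\ref{sk-r-14}, the passage $S_{\mathrm{cl}}\to S_{\mathrm{cl}}^{\Lambda}$ keeps the quadratic part $S_0$ intact and alters only the interaction term, replacing it by $S_{\mathrm{int}}^{\Lambda}[\,\cdot\,;\mathcal{M}]=S_{\mathrm{int}}[\mathrm{H}_\omega^\Lambda(\,\cdot\,);\mathcal{M}_\Lambda]$. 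Consequently, the sole operation acting on the field $\phi$ itself during the introduction of the regularization is its replacement by the deformation $\phi_\omega^\Lambda=\mathrm{H}_\omega^\Lambda(\phi)$ from formula~\eqref{sk-7-2}. This reduces the task to exhibiting, for this one transformation, an integral kernel of the type required in Definition~\ref{sk-d-21}.

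Next I would make the identification $k(p,\,\cdot\,;\Lambda)=\omega(p,\,\cdot\,;\Lambda)$, reading $\omega$ off from \eqref{sk-7-2} as the kernel through which the value $\phi(p)$ is smeared. Since the deforming operator is built from a kernel $\omega\in\Omega_0^\Lambda(\mathcal{M})$, the support condition \eqref{sk-7-3} applies and gives $\mathop{\mathrm{supp}}_{\mathcal{M}}(\omega(p,\,\cdot\,;\Lambda))\subset\mathcal{B}_{1/\Lambda}(p,\mathcal{M})$ for every $p\in\mathcal{M}$, where by Definition~\ref{sk-d-20} the set $\mathcal{B}_{1/\Lambda}(p,\mathcal{M})$ is the geodesic ball of radius $1/\Lambda$ centred at $p$. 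Because this radius equals $1/\Lambda$ independently of the base point $p$, it tends to zero as $\Lambda\to+\infty$ uniformly over the first argument, which is precisely the requirement of Definition~\ref{sk-d-21}.

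I do not expect a genuine obstacle here, as the argument is essentially an unpacking of the support property built into the class $\Omega_0^\Lambda(\mathcal{M})$. The only point deserving care is the uniformity clause: it must be traced to the fact that the radius $1/\Lambda$ of the supporting ball is fixed and does not depend on $p$, rather than being an estimate that could degenerate, say, near the boundary $\partial\mathcal{M}$. It is also worth remarking, to forestall confusion, that the second change effected by the deformation — the replacement of the integration domain $\mathcal{M}$ by the truncated manifold $\mathcal{M}_\Lambda$ in \eqref{sk-8-6} — plays no role in this corollary, since Definition~\ref{sk-d-21} concerns only the manner in which the field value at a point is averaged, and not the region over which the interaction functional is subsequently integrated.
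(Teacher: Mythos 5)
Your argument is correct and is exactly the (implicit) justification the paper intends: the paper states Corollary \ref{sk-c-7} without proof, treating it as an immediate unpacking of the support condition \eqref{sk-7-3} for kernels in $\Omega_0^\Lambda(\mathcal{M})$, which confines $\omega(p,\,\cdot\,;\Lambda)$ to the geodesic ball $\mathcal{B}_{1/\Lambda}(p,\mathcal{M})$ of radius $1/\Lambda$ independent of $p$, hence shrinking uniformly as $\Lambda\to+\infty$. Your additional remarks on the uniformity clause and on the irrelevance of the domain truncation $\mathcal{M}\to\mathcal{M}_\Lambda$ are accurate and do not change the route.
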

Returning to the discussions from Section \ref{sk:sec:def:3}, we can write out a perturbative decomposition for the regularized effective action in the form
\begin{equation}\label{sk-8-8}
	e^{-W_{\mathrm{eff}}^\Lambda[\sqrt{\hbar}\eta;\mathcal{M}]/\hbar}=e^{-S_{\mathrm{0}}[\phi^\eta;\mathcal{M}]}
	\bigg(
	S(\mathrm{H}_\omega^\Lambda(\delta_\psi+\phi^\eta);\mathcal{M}_\Lambda)e^{g(\psi;\mathcal{M})}\bigg)\bigg|_{\psi=0},
\end{equation}
where
\begin{multline}\label{sk-8-7}
	S(\mathrm{H}_\omega^\Lambda(\phi+\phi^\eta);\mathcal{M}_\Lambda)=
	1+\sum_{k\geqslant3}\hbar^{k/2-1}
	\Bigg(\prod_{i=1}^{k}\int_{\mathcal{M}_\Lambda}\mathrm{d}^np_{i}\Bigg)
	S_{k}(p_1,\ldots p_k)\times\\\times
	\mathrm{H}_\omega^\Lambda(\phi+\phi^\eta)(p_1)\ldots\mathrm{H}_\omega^\Lambda(\phi+\phi^\eta)(p_{k}),
\end{multline}
and the functional $g(\psi;\mathcal{M})$ is determined by relation \eqref{sk-5-5}. It can be seen from the explicit analytical representation that the regularization has been reduced to replacing fields with averaged analogues and deforming of the manifold from the integration domain $\mathcal{M}\to\mathcal{M}_\Lambda$.
\begin{lemma}\label{sk-l-10}
The decomposition from \eqref{sk-8-8} in each order by a small parameter $\sqrt{\hbar}$ does not contain divergences.
\end{lemma}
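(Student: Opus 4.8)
The plan is to work order by order in $\sqrt{\hbar}$ and show that each coefficient is a finite sum of absolutely convergent integrals. First I would fix a power of $\hbar$ in the expansion \eqref{sk-8-8}--\eqref{sk-8-7} and observe that only finitely many terms contribute to it: since every vertex of type $k\geqslant3$ carries a factor $\hbar^{k/2-1}\geqslant\hbar^{1/2}$, a fixed order admits at most boundedly many vertices, and a single vertex of type $k$ requires $k/2-1$ not to exceed the order, so $k$ itself is bounded. Hence there are boundedly many fields, boundedly many Wick contractions, and finitely many diagrams. Next I would describe the structure of one such term after evaluating the functional derivatives $\delta_\psi$ against $e^{g(\psi;\mathcal{M})}$ at $\psi=0$: by Wick's theorem each pair of quantum legs produces one propagator, each uncontracted leg produces a smeared background field $\mathrm{H}_\omega^\Lambda(\phi^\eta)$, and the density $S_k$ from Lemma \ref{sk-l-8} contributes the constants $t_k$ together with delta-functionals tying the integration points inside each vertex together.

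The heart of the argument is the boundedness of the regularized propagator. Because every field in \eqref{sk-8-7} enters through the deforming operator, a contraction of two legs sitting at vertices $p_i,p_j\in\mathcal{M}_\Lambda$ yields the doubly-smeared kernel $\int_{\mathcal{M}}\int_{\mathcal{M}}\omega(p_i,q_i;\Lambda)\,\omega(p_j,q_j;\Lambda)\,G(q_i,q_j)\,\mathrm{d}^nq_i\,\mathrm{d}^nq_j$ rather than $G(p_i,p_j)$ itself. I would invoke the acceptability of $\omega\in\Omega_0^\Lambda(\mathcal{M})$, namely \eqref{sk-7-1}, which is the manifold analogue of Lemma \ref{sk-l-1}: a single smearing already turns the singular Green's function into a function continuous on $\mathcal{M}$ and of class $C^j$ on $\mathcal{M}_\Lambda$, and the second smearing preserves this, so the doubly-smeared propagator is bounded and continuous on the compact set $\mathcal{M}_\Lambda\times\mathcal{M}_\Lambda$, including on the diagonal. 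This simultaneously disposes of both divergence mechanisms recalled at the end of Section \ref{sk:sec:def:3}: products of propagators that were non-integrable near $p_i=p_j$ are now products of bounded functions, and a self-contraction at one vertex, which previously produced the ill-defined value $G(p,p)$, now produces the finite diagonal value of the smeared kernel.

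To finish, I would treat the delta-functionals of $S_k$ as merely identifying integration variables inside each vertex; by Remark \ref{sk-r-7} they never occur with coinciding arguments, so no factor $\delta(p,p)$ is ever generated. After resolving them, the term becomes an integral over a finite product of copies of $\mathcal{M}_\Lambda$ of a product of bounded continuous factors — doubly-smeared propagators and smooth background fields $\mathrm{H}_\omega^\Lambda(\phi^\eta)$, the latter bounded since $\phi^\eta\in C^\infty(\mathcal{M})$ and $\mathcal{M}$ is compact. A bounded integrand over the compact domain $\mathcal{M}_\Lambda$ gives a finite integral, and a finite sum of finite terms is finite, which is the claim.

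The main obstacle I expect is making the boundedness of the doubly-smeared propagator on the diagonal fully rigorous on a curved manifold with boundary, where one cannot simply quote the explicit flat formula \eqref{sk-1-4}. The delicate point is the behaviour near $\partial\mathcal{M}$, where $\mathcal{B}_{1/\Lambda}(p,\mathcal{M})$ becomes a truncated ball (Corollary \ref{sk-c-6}) and where $\mathcal{M}$ must be distinguished from $\mathcal{M}_\Lambda$; here the assumption $\Lambda>\Lambda_1$ and the restriction of all vertices to $\mathcal{M}_\Lambda$ are essential, so that the regularity guaranteed by \eqref{sk-7-1} on $\mathcal{M}_\Lambda$ applies to every propagator endpoint. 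Checking that the leading short-distance asymptotics of $G$ matches the Euclidean fundamental solution closely enough for acceptability to hold — which is where the Seeley--DeWitt expansion enters — is the technical core underlying this step.
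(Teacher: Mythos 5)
Your proposal is correct and follows essentially the same route as the paper: after Wick's theorem every contraction produces the doubly-smeared kernel $G^\Lambda(p_i,p_j)$, whose regularity follows from the acceptability condition $\omega\in\Omega_0^\Lambda(\mathcal{M})$, and finiteness then follows because all vertex integrations run over the truncated manifold $\mathcal{M}_\Lambda$ separated from the boundary. Your version merely spells out the bookkeeping (finitely many terms per order in $\sqrt{\hbar}$, the role of the delta-functionals via Remark \ref{sk-r-7}) that the paper's two-line proof leaves implicit.
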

\begin{proof} The perturbative decomposition from \eqref{sk-8-8} for the effective action is obtained by applying the Wick's theorem, which actually cinsists in replacing all possible pairs $\phi(p_i)\phi(p_j)$ on the Green's function $G(p_i,p_j)$. Since in the regularized case, each field $\phi(p_i)$ is replaced by an averaged analog $\mathrm{H}_\omega^\Lambda(\phi)(p_i)$, then, after applying the Wick's theorem, we obtain the regularized functions of the form
\begin{equation}\label{sk-8-9}
G^\Lambda(p_i,p_j)=\int_{\mathcal{M}}\mathrm{d}^{n}\hat{p}_1\,\omega(p_i,\hat{p}_1;\Lambda)
\int_{\mathcal{M}}\mathrm{d}^{n}\hat{p}_2\,\omega(p_j,\hat{p}_2;\Lambda)G(\hat{p}_1,\hat{p}_2)=
\sum_\lambda\mathrm{H}_\omega^\Lambda(\psi_\lambda)(p_i)\frac{1}{\lambda}\mathrm{H}_\omega^\Lambda(\psi_\lambda)(p_j),
\end{equation}
where the decomposition from \eqref{sk-3-8} was used. As a result, the statement of the lemma follows from the facts that the kernel of the deforming operator  $\omega$ is from $\Omega_0^\Lambda(\mathcal{M})$, and the integration into \eqref{sk-8-7} occurs over the narrowed set (separated from the boundary).
\end{proof}
\begin{defa}\label{sk-d-23}
A regularized partition function, corresponding to the classical action $S_{\mathrm{cl}}[\,\cdot\,;\mathcal{M}]$, which is considered on a set of functions taking the value $\eta\in C^\infty(Y,\mathbb{R})$ on the boundary, is equal to
\begin{equation}\label{sk-5-10}
Z(\eta;\mathcal{M}_\Lambda)=e^{-W_{\mathrm{eff}}^\Lambda[\sqrt{\hbar}\eta;\mathcal{M}]/\hbar},
\end{equation}
where the boundary value $\eta$ and the deformed manifold $\mathcal{M}_\Lambda$ from the $S$-functional are highlighted as the main arguments. In addition, the partition function depends on the parameters of the classical action, the manifold $\mathcal{M}$, on which the boundary value problem is posed, and the deforming operator $\mathrm{H}_\omega^\Lambda$. However, the last arguments will usually be omitted.
\end{defa}
\begin{remark}\label{sk-r-15}
Formula \eqref{sk-8-8} can be similarly written for the submanifolds. From now on, we fix the choice of a kernel $\omega\in\Omega_0^\Lambda(\mathcal{M})$ for certainty. Also we select and fix kernels $\omega_l\in\Omega_0^\Lambda(\mathcal{M}_l)$ and $\omega_r\in\Omega_0^\Lambda(\mathcal{M}_r)$ on the submanifolds, taking into account Lemma \ref{sk-l-9}. The corresponding deformed Green's functions on $\mathcal{M}_l$ and $\mathcal{M}_r$ are notated as $G^\Lambda_l(\,\cdot\,,\,\cdot\,)$ and $G^\Lambda_r(\,\cdot\,,\,\cdot\,)$, respectively.
\end{remark}

\section{Application to the gluing process}
\label{sk:sec:app}

\subsection{Definitions and relations}
\label{sk:sec:app-1}
In this section, we assume that the requirements from the previous sections have been fulfilled. Consider a gluing of two manifolds $\mathcal{M}_l$ and $\mathcal{M}_r$ along the hypersurface $\Sigma$, as well as a transition from the regularized effective actions $W_{\mathrm{eff}}^\Lambda[\eta_l+\eta_\Sigma;\mathcal{M}_l]$ and $W_{\mathrm{eff}}^\Lambda[\eta_r+\eta_\Sigma;\mathcal{M}_r]$ and the corresponding partition functions, defined on the submanifolds $\mathcal{M}_l$ and $\mathcal{M}_r$, to the regularized effective action $W_{\mathrm{eff}}^\Lambda[\eta_l+\eta_r;\mathcal{M}]$ on the whole manifold $\mathcal{M}$ by an additional functional integration over all boundary functions $\eta_\Sigma$.

Considering the fact that the smooth submanifolds $\mathcal{M}_l$ and $\mathcal{M}_r$ were constructed by cutting the smooth manifold $\mathcal{M}=\mathcal{M}_l\cup_\Sigma\mathcal{M}_r$, then from a geometric point of view all the objects are well defined. Therefore, we can proceed immediately to functionals defined on these manifolds. There are two such functions: the classical action and the effective (quantum) one.

\begin{defa}\label{sk-d-17}
By gluing of two partition functions $Z(\eta_l+\eta_\Sigma;\mathcal{M}_{l,\Lambda})$ and $Z(\eta_r+\eta_\Sigma;\mathcal{M}_{r,\Lambda})$, corresponding to the effective actions $W_{\mathrm{eff}}^\Lambda[\sqrt{\hbar}(\eta_l+\eta_\Sigma);\mathcal{M}_l]$ and $W_{\mathrm{eff}}^\Lambda[\sqrt{\hbar}(\eta_r+\eta_\Sigma);\mathcal{M}_r]$, on the submanifolds $\mathcal{M}_l$ and $\mathcal{M}_r$ we call a functional integral of the form
\begin{align}\label{sk-6-1}
\tilde{Z}=&
\int_{\mathcal{H}_0(\Sigma)}\mathcal{D}\eta_\Sigma^{\phantom{1}}\,e^{-W_{\mathrm{eff}}^\Lambda[\sqrt{\hbar}(\eta_l+\eta_\Sigma);\mathcal{M}_l]/\hbar}
e^{-W_{\mathrm{eff}}^\Lambda[\sqrt{\hbar}(\eta_r+\eta_\Sigma);\mathcal{M}_r]/\hbar}\\\label{sk-6-21}=&
\int_{\mathcal{H}_0(\Sigma)}\mathcal{D}\eta_\Sigma^{\phantom{1}}\,
Z(\eta_l+\eta_\Sigma;\mathcal{M}_{l,\Lambda})
Z(\eta_r+\eta_\Sigma;\mathcal{M}_{r,\Lambda}),
\end{align}
which should be understood in the sense of a perturbative expansion in the small parameter $\sqrt{\hbar}$, defined by the equality
\begin{equation}\label{sk-6-2}
\bigg(e^{S_{\mathrm{0}}[\phi^{\delta_\eta}_l;\mathcal{M}_l]+S_{\mathrm{0}}[\phi^{\delta_\eta}_r;\mathcal{M}_r]}e^{-W_{\mathrm{eff}}^\Lambda[\sqrt{\hbar}(\eta_l+\delta_\eta);\mathcal{M}_l]/\hbar}
e^{-W_{\mathrm{eff}}^\Lambda[\sqrt{\hbar}(\eta_r+\delta_\eta);\mathcal{M}_r]/\hbar}e^{g_0(\eta;\Sigma)}\bigg)\bigg|_{\eta=0}.
\end{equation}
Here the quadratic form is defined by the formula
\begin{equation}\label{sk-6-6}
g_0(\eta;\Sigma)=\frac{1}{2}\int_{\Sigma}\mathrm{d}^{n-1}q_1\int_{\Sigma}\mathrm{d}^{n-1}q_2\,
\eta(q_1)G(q_1,q_2)\eta(q_2),
\end{equation}
and can be built using the kernel $G(q_1,q_2)$ of an integral operator, which is inverse operator to the one from the quadratic form $S_{\mathrm{0}}[\phi^{\eta}_l;\mathcal{M}_l]+S_{\mathrm{0}}[\phi^{\eta}_r;\mathcal{M}_r]$ according to Lemma \ref{sk-l-6}.
\end{defa}
\begin{remark}\label{sk-r-16}
The designations of the measure and the domain of integration in \eqref{sk-6-1} are symbolic, see Section \ref{sk:sec:def:3}.
\end{remark}
\begin{remark}\label{sk-r-17}
The first multiplier in formula \eqref{sk-6-2} subtracts the corresponding quadratic form from the effective actions, with the usage of which the averaging is carried out.
\end{remark}
\begin{col}\label{sk-c-8}
Taking into account Definitions \ref{sk-d-16} and \ref{sk-d-17}, the function from \eqref{sk-6-1} can be represented as
\begin{multline}\label{sk-6-8}
e^{-S_{\mathrm{0}}[\phi^{\eta_l}_l;\mathcal{M}_l]-S_{\mathrm{0}}[\phi^{\eta_r}_r;\mathcal{M}_r]}e^{S_{l,\Sigma}(\eta_l,\delta_\eta;\mathcal{M}_l)+S_{r,\Sigma}(\eta_r,\delta_\eta;\mathcal{M}_r)}\times\\\times
S(\mathrm{H}_\omega^\Lambda(\delta_{\psi_l}+\phi^{\delta_\eta}_l+\phi^{\eta_l}_l);\mathcal{M}_{l,\Lambda})
S(\mathrm{H}_\omega^\Lambda(\delta_{\psi_r}+\phi^{\delta_\eta}_r+\phi^{\eta_r}_r);\mathcal{M}_{r,\Lambda})\times\\\times
e^{g_l(\psi_l;\mathcal{M}_l)}
e^{g_r(\psi_r;\mathcal{M}_r)}
e^{g_0(\eta;\Sigma)}\Big|_{\psi_l=0,\,\psi_r=0,\,\eta=0},
\end{multline}
where the definitions for quadratic forms from the last line are written out according to Remark \ref{sk-r-8} and formula \eqref{sk-6-6}.
\end{col}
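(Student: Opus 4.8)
The plan is to substitute the explicit form \eqref{sk-8-8} of the regularized effective action into the perturbative definition \eqref{sk-6-2} of the glued object and to let the free-action prefactor cancel against the quadratic forms generated on $\Sigma$. Applying \eqref{sk-8-8} on each submanifold $\mathcal{M}_i$, $i\in\{l,r\}$ (see Remark \ref{sk-r-15}), with boundary data equal to $\eta_i$ on $Y_i$ and to the formal argument $\delta_\eta$ on $\Sigma$, gives
\begin{equation*}
e^{-W_{\mathrm{eff}}^\Lambda[\sqrt{\hbar}(\eta_i+\delta_\eta);\mathcal{M}_i]/\hbar}
=e^{-S_0[\phi_i^{\eta_i+\delta_\eta};\mathcal{M}_i]}
\Big(S(\mathrm{H}_\omega^\Lambda(\delta_{\psi_i}+\phi_i^{\eta_i+\delta_\eta});\mathcal{M}_{i,\Lambda})\,e^{g_i(\psi_i;\mathcal{M}_i)}\Big)\Big|_{\psi_i=0}.
\end{equation*}

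First I would invoke the linearity of the boundary value problem recorded in Remark \ref{sk-r-3}, which lets me replace $\phi_i^{\eta_i+\delta_\eta}$ by $\phi_i^{\eta_i}+\phi_i^{\delta_\eta}$ both inside the $S$-functional and inside the free action. Inside the $S$-functional this directly yields the argument $\mathrm{H}_\omega^\Lambda(\delta_{\psi_i}+\phi_i^{\eta_i}+\phi_i^{\delta_\eta})$ that appears in the middle line of \eqref{sk-6-8}. For the free-action prefactor I would then apply the splitting \eqref{sk-3-16} of Lemma \ref{sk-l-5} with $\phi_i=0$ and with $\delta_\eta$ in the role of $\eta_\Sigma$, namely
\begin{equation*}
S_0[\phi_i^{\eta_i+\delta_\eta};\mathcal{M}_i]
=S_0[\phi_i^{\eta_i};\mathcal{M}_i]+S_0[\phi_i^{\delta_\eta};\mathcal{M}_i]-S_{i,\Sigma}(\eta_i,\delta_\eta;\mathcal{M}_i).
\end{equation*}

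Exponentiating this identity and multiplying the two factors $i=l$ and $i=r$ produces $e^{-S_0[\phi_l^{\eta_l};\mathcal{M}_l]-S_0[\phi_r^{\eta_r};\mathcal{M}_r]}$, the cross terms $e^{S_{l,\Sigma}(\eta_l,\delta_\eta;\mathcal{M}_l)+S_{r,\Sigma}(\eta_r,\delta_\eta;\mathcal{M}_r)}$, and a residual factor $e^{-S_0[\phi_l^{\delta_\eta};\mathcal{M}_l]-S_0[\phi_r^{\delta_\eta};\mathcal{M}_r]}$. The residual factor is exactly annihilated by the prefactor $e^{S_0[\phi_l^{\delta_\eta};\mathcal{M}_l]+S_0[\phi_r^{\delta_\eta};\mathcal{M}_r]}$ standing in front of \eqref{sk-6-2}, which is precisely the content of Remark \ref{sk-r-17}. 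Collecting what survives together with the Gaussian weights $e^{g_l(\psi_l;\mathcal{M}_l)}e^{g_r(\psi_r;\mathcal{M}_r)}e^{g_0(\eta;\Sigma)}$ and the evaluation at $\psi_l=\psi_r=\eta=0$ reproduces \eqref{sk-6-8} factor by factor.

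The step that requires the most care — and which I regard as the main obstacle — is the legitimacy of applying the classical splitting \eqref{sk-3-16} and the linearity of Remark \ref{sk-r-3} when the boundary datum on $\Sigma$ is the functional derivative $\delta_\eta$ rather than a genuine element of $C^\infty(\Sigma,\mathbb{R})$. I would resolve this by reading each equality strictly order by order in the formal series in $\sqrt{\hbar}$: every occurrence of $\delta_\eta$ ultimately acts on the smooth Gaussian weight $e^{g_0(\eta;\Sigma)}$ and is evaluated at $\eta=0$, so all manipulations reduce to finitely many differentiations of smooth kernels, for which \eqref{sk-3-16} holds verbatim. It is also worth checking that the quadratic form $S_0[\phi_i^{\delta_\eta};\mathcal{M}_i]$ thrown off by the splitting is exactly the one built from the operators $D(\Sigma,\mathcal{M}_i)$ of Corollary \ref{sk-c-5}; this guarantees that its cancellation against the prefactor is clean and that the leftover $g_0$ is the kernel inverse to $D(\Sigma,\mathcal{M}_l)+D(\Sigma,\mathcal{M}_r)$ by Lemma \ref{sk-l-6}, as needed for the subsequent integration over $\eta_\Sigma$.
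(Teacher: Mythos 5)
Your proposal is correct and follows exactly the route the paper intends for this corollary (which it states without a written-out proof): substitute the perturbative representation \eqref{sk-8-8} on each submanifold into \eqref{sk-6-2}, use the linearity of Remark \ref{sk-r-3} and the splitting \eqref{sk-3-16} with $\phi_i=0$, and cancel the residual $e^{-S_0[\phi_l^{\delta_\eta};\mathcal{M}_l]-S_0[\phi_r^{\delta_\eta};\mathcal{M}_r]}$ against the prefactor of \eqref{sk-6-2} as anticipated in Remark \ref{sk-r-17}. Your closing observations — that all identities involving $\delta_\eta$ are to be read order by order in the formal series against $e^{g_0(\eta;\Sigma)}$, and that the discarded quadratic form is the one built from $D(\Sigma,\mathcal{M}_i)$ whose sum is inverted by $G|_{\Sigma\times\Sigma}$ per Lemma \ref{sk-l-6} — are exactly the right consistency checks.
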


\begin{lemma}\label{sk-l-11}
Let $i,j\in\{l,r\}$, $p_1\in\mathcal{M}_{i,\Lambda}$, and $p_2\in\mathcal{M}_{j,\Lambda}$. Also let $\Lambda>\Lambda_1$ taking into account Remarks \ref{sk-r-2} and \ref{sk-r-10}.
Let us choose acceptable kernels $\{\omega,\omega_l,\omega_r\}$ for the deforming operator according to Remark \ref{sk-r-15}. Then the deformed Green's function from \eqref{sk-8-9} has the following representation 
\begin{equation}\label{sk-6-5}
G^\Lambda(p_1,p_2)=G_i^\Lambda(p_1,p_2)\delta_{ij}+
G^\Lambda_{ij}(p_1,p_2),
\end{equation}
where the auxiliary function is defined by the equality
\begin{multline}\label{sk-6-7}
G^\Lambda_{ij}(p_1,p_2)=
\int_{\Sigma}\mathrm{d}^{n-1}q_1\int_{\Sigma}\mathrm{d}^{n-1}q_2
\bigg(\int_{\mathcal{M}_i}\mathrm{d}^{n}p_3\,\omega_i(p_1,p_3;\Lambda)
N_{\Sigma,i}(q_1)G_i(p_3,q_1)\bigg)\times\\\times
G(q_1,q_2)
\bigg(\int_{\mathcal{M}_j}\mathrm{d}^{n}p_4\,\omega_j(p_2,p_4;\Lambda)
N_{\Sigma,j}(q_2)G_j(q_2,p_4)\bigg).
\end{multline}
\end{lemma}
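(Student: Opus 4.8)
The plan is to reduce the assertion to an \emph{undeformed} gluing identity for the Green's functions and then average both arguments with the deforming operator, exploiting that $p_1$ and $p_2$ are separated from $\Sigma$ by at least $1/\Lambda$.

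First I would establish the undeformed gluing relation \eqref{sk-6-3}: for interior points $p_1\in\mathcal{M}_i$ and $p_2\in\mathcal{M}_j$,
\[
G(p_1,p_2)=G_i(p_1,p_2)\delta_{ij}+\int_\Sigma\int_\Sigma\big(N_{\Sigma,i}(q_1)G_i(p_1,q_1)\big)G(q_1,q_2)\big(N_{\Sigma,j}(q_2)G_j(q_2,p_2)\big)\,\mathrm{d}^{n-1}q_1\,\mathrm{d}^{n-1}q_2.
\]
To derive it, fix $p_2$ and view $G(\,\cdot\,,p_2)$ restricted to $\mathcal{M}_i$ as the solution of $A u=\delta_{ij}\delta(\,\cdot\,,p_2)$ with zero data on $Y_i$ and the (a priori unknown) trace $f=G(\,\cdot\,,p_2)|_\Sigma$ on $\Sigma$. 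By Lemma \ref{sk-l-4} and Definition \ref{sk-d-13} this gives $G(\,\cdot\,,p_2)=G_i(\,\cdot\,,p_2)\delta_{ij}+\phi^{f}_i$ on $\mathcal{M}_i$, where $\phi^f_i(p_1)=-\int_\Sigma\big(N_{\Sigma,i}(q_1)G_i(p_1,q_1)\big)f(q_1)\,\mathrm{d}^{n-1}q_1$. To identify $f$ I would apply $N_{\Sigma,i}$ to this representation, rewrite $N_{\Sigma,i}\phi^f_i=D(\Sigma,\mathcal{M}_i)f$ via Definition \ref{sk-d-14}, sum over $i\in\{l,r\}$, and use the $C^1$-matching of $G(\,\cdot\,,p_2)$ across $\Sigma$ (so that $N_{\Sigma,l}G+N_{\Sigma,r}G=0$, since $\mathbf{n}^\nu_{\Sigma,l}=-\mathbf{n}^\nu_{\Sigma,r}$ by Corollary \ref{sk-c-4}). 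This yields $\big(D(\Sigma,\mathcal{M}_l)+D(\Sigma,\mathcal{M}_r)\big)f=-N_{\Sigma,j}G_j(\,\cdot\,,p_2)$, and Lemma \ref{sk-l-6} inverts the left-hand operator with kernel $G|_{\Sigma\times\Sigma}$, so that $f(q_1)=-\int_\Sigma G(q_1,q_2)N_{\Sigma,j}(q_2)G_j(q_2,p_2)\,\mathrm{d}^{n-1}q_2$; substituting $f$ back reproduces the displayed identity.

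Next I would average both arguments. For $p_1\in\mathcal{M}_{i,\Lambda}$ the ball $\mathcal{B}_{1/\Lambda}(p_1,\mathcal{M})$ lies entirely inside $\mathcal{M}_i$ and meets neither $\Sigma$ nor $Y_i$, so by Lemma \ref{sk-l-9} the kernel $\omega(p_1,\,\cdot\,;\Lambda)$ agrees with $\omega_i(p_1,\,\cdot\,;\Lambda)$ on the support of the averaging, and likewise for $p_2$ with $\omega_j$. Inserting the undeformed identity into \eqref{sk-8-9} and interchanging the finite-range integrations (Fubini), the diagonal term becomes $\delta_{ij}\int_{\mathcal{M}_i}\omega_i(p_1,\hat p_1;\Lambda)\int_{\mathcal{M}_i}\omega_i(p_2,\hat p_2;\Lambda)G_i(\hat p_1,\hat p_2)=\delta_{ij}G_i^\Lambda(p_1,p_2)$, while in the boundary term the two averagings act separately on the factors $N_{\Sigma,i}(q_1)G_i(\hat p_1,q_1)$ and $N_{\Sigma,j}(q_2)G_j(q_2,\hat p_2)$, producing precisely the bracketed integrals of \eqref{sk-6-7}. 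This gives \eqref{sk-6-5} and \eqref{sk-6-7}.

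The main obstacle is the derivation of the undeformed gluing identity, specifically justifying the $C^1$-matching of $G(\,\cdot\,,p_2)$ across $\Sigma$ and invoking the invertibility from Lemma \ref{sk-l-6} on the appropriate function spaces; this is also where Remark \ref{sk-r-2} (non-integrable behaviour absent off the diagonal, since $p_2\notin\Sigma$) is used. Once that identity is available, the passage to the deformed version is bookkeeping: everything rests on the localization guaranteed by $p_1\in\mathcal{M}_{i,\Lambda}$ and $p_2\in\mathcal{M}_{j,\Lambda}$ together with Lemma \ref{sk-l-9}, which ensures that the averaging never crosses $\Sigma$ and hence respects the split into submanifold Green's functions.
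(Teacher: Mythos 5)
Your proposal is correct, and its second half---averaging both arguments of the undeformed gluing identities, using that for $p_1\in\mathcal{M}_{i,\Lambda}$ the support of $\omega(p_1,\,\cdot\,;\Lambda)$ stays inside $\mathcal{M}_i$ so that Lemma \ref{sk-l-9} lets you replace $\omega$ by $\omega_i$---is exactly what the paper does. The difference lies in the first half: the paper simply imports the undeformed relations \eqref{sk-6-3} and \eqref{sk-6-4} as known results (Theorem $2.1$ of \cite{sk-b-6} and Proposition $4.2$ of \cite{sk-16}), whereas you rederive them from the paper's own toolkit. Your derivation is sound: writing $G(\,\cdot\,,p_2)|_{\mathcal{M}_i}=G_i(\,\cdot\,,p_2)\delta_{ij}+\phi^f_i$ with $f=G(\,\cdot\,,p_2)|_\Sigma$ is uniqueness of the Dirichlet problem plus Definition \ref{sk-d-13}; the jump condition $N_{\Sigma,l}G+N_{\Sigma,r}G=0$ follows from interior smoothness of $G(\,\cdot\,,p_2)$ away from $p_2$ together with Corollary \ref{sk-c-4}; and Lemma \ref{sk-l-6} then identifies $f(q_1)=-\int_\Sigma G(q_1,q_2)N_{\Sigma,j}(q_2)G_j(q_2,p_2)\,\mathrm{d}^{n-1}q_2$, with the signs working out to the stated identity. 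What your route buys is self-containedness (the lemma is proved from Lemmas \ref{sk-l-4} and \ref{sk-l-6} and Definition \ref{sk-d-14} alone, and it makes transparent exactly where Lemma \ref{sk-l-6} and the orientation convention enter); what the paper's citation buys is brevity and the reassurance that the functional-analytic details of inverting $D(\Sigma,\mathcal{M}_l)+D(\Sigma,\mathcal{M}_r)$ on the correct spaces---the one point you yourself flag as the main obstacle---are handled in the references. No gap in either version.
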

\begin{proof} Let us use two well-known relations for Green's functions, see Theorem $2.1$ in \cite{sk-b-6} and Proposition $4.2$ in \cite{sk-16}. The first equality holds
\begin{equation}\label{sk-6-3}
G(p_1,p_2)=G_i(p_1,p_2)+
\int_{\Sigma}\mathrm{d}^{n-1}q_1\int_{\Sigma}\mathrm{d}^{n-1}q_2
\Big(N_{\Sigma,i}(q_1)G_i(p_1,q_1)\Big)
G(q_1,q_2)
\Big(N_{\Sigma,i}(q_2)G_i(q_2,p_2)\Big)
\end{equation}
for all $p_1,p_2\in\mathcal{M}_i$ such that $p_1\neq p_2$. It is performed outside the diagonal, since with matching arguments the right and left sides may have a singular behaviour. The second equality holds
\begin{equation}\label{sk-6-4}
G(p_1,p_2)=
\int_{\Sigma}\mathrm{d}^{n-1}q_1\int_{\Sigma}\mathrm{d}^{n-1}q_2
\Big(N_{\Sigma,l}(q_1)G_l(p_1,q_1)\Big)
G(q_1,q_2)
\Big(N_{\Sigma,r}(q_2)G_r(q_2,p_2)\Big)
\end{equation}
for all $p_1\in\mathcal{M}_l$ and $p_2\in\mathcal{M}_r$ such that $p_1\neq p_2$. Thus, the mentioned statement in \eqref{sk-6-5} follows from the last relations after applying the deforming operator $\mathrm{H}_\omega^\Lambda$ to them by both variables, taking into account equality \eqref{sk-7-7} from Lemma \ref{sk-l-9}.
\end{proof}

\subsection{The main statement}
\label{sk:sec:app-2}
\begin{theorem}\label{sk-t-1}
Taking into account all the above, the gluing of two regularized effective actions from \eqref{sk-6-1} is equal to 
\begin{equation}\label{sk-6-13}
\tilde{Z}=e^{-S_{\mathrm{0}}[\phi^{\eta_l+\eta_r};\mathcal{M}]}
S(\mathrm{H}_\omega^\Lambda(\delta_{\psi}+\phi^{\eta_l}+\phi^{\eta_r});\mathcal{M}_{l,\Lambda}\cup\mathcal{M}_{r,\Lambda})
e^{g(\psi;\mathcal{M})}\Big|_{\psi=0}=Z(\eta_l+\eta_r;\mathcal{M}_{l,\Lambda}\cup\mathcal{M}_{r,\Lambda}).
\end{equation}
At the same time, a gluing formula for the partition functions (the effective actions) takes the form
\begin{align}\label{sk-6-22}
Z(\eta_l+\eta_r;\mathcal{M}_\Lambda)&=
\lim_{\mathcal{M}_{l,\Lambda}\cup\mathcal{M}_{r,\Lambda}\to\mathcal{M}_\Lambda}
\int_{\mathcal{H}_0(\Sigma)}\mathcal{D}\eta_\Sigma^{\phantom{1}}\,
Z(\eta_l+\eta_\Sigma;\mathcal{M}_{l,\Lambda})
Z(\eta_r+\eta_\Sigma;\mathcal{M}_{r,\Lambda})\\\label{sk-6-27}
&\mathop{=}^{\mathrm{def}}\,
\Big\langle
Z(\eta_l+\,\cdot\,;\mathcal{M}_{l,\Lambda}),
Z(\eta_r+\,\cdot\,;\mathcal{M}_{r,\Lambda})\Big\rangle.
\end{align}
\end{theorem}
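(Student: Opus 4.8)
The plan is to start from the explicit representation \eqref{sk-6-8} supplied by Corollary \ref{sk-c-8} and to perform the Gaussian integration over the boundary field on $\Sigma$, which is encoded by the derivatives $\delta_\eta$ together with the weight $e^{g_0(\eta;\Sigma)}$ whose covariance is, by Lemma \ref{sk-l-6}, the function $G$ restricted to $\Sigma\times\Sigma$. First I would sort the $\delta_\eta$-dependence of \eqref{sk-6-8} into two families: the linear boundary sources $S_{l,\Sigma}(\eta_l,\delta_\eta;\mathcal{M}_l)+S_{r,\Sigma}(\eta_r,\delta_\eta;\mathcal{M}_r)$ in the exponential prefactor, and the averaged fields $\mathrm{H}_\omega^\Lambda(\phi^{\delta_\eta}_i)$ hidden inside the two $S$-functionals. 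Since $\delta_{\psi_l},\delta_{\psi_r}$ are contracted by $e^{g_l(\psi_l;\mathcal{M}_l)}$ and $e^{g_r(\psi_r;\mathcal{M}_r)}$ with covariances $G_l,G_r$, the whole object is, order by order in $\sqrt{\hbar}$, a finite Gaussian computation in the three independent sources $(\psi_l,\psi_r,\eta)$.

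The core step is the reconstruction of the propagator. A self-contraction $\delta_{\psi_i}$--$\delta_{\psi_i}$ through $g_i$, averaged at both ends by $\mathrm{H}_\omega^\Lambda$, yields $G_i^\Lambda$, while a contraction of $\mathrm{H}_\omega^\Lambda(\phi^{\delta_\eta}_i)$ with $\mathrm{H}_\omega^\Lambda(\phi^{\delta_\eta}_j)$ through $g_0$ yields exactly the kernel $G^\Lambda_{ij}$ of \eqref{sk-6-7}: indeed $\phi^{\delta_\eta}_i$ carries the factor $N_{\Sigma,i}G_i$ from Definition \ref{sk-d-13}, and $g_0$ inserts the middle kernel $G(q_1,q_2)$, the two minus signs cancelling. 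Adding the diagonal and off-diagonal pieces and invoking the gluing identity \eqref{sk-6-5} of Lemma \ref{sk-l-11}, the effective two-point function on $\mathcal{M}_{l,\Lambda}\cup\mathcal{M}_{r,\Lambda}$ becomes the single deformed Green's function $G^\Lambda(p_1,p_2)=G_i^\Lambda\delta_{ij}+G^\Lambda_{ij}$; in particular the $i=j$ reflection term $G^\Lambda_{ii}$ is produced automatically. Consequently the two $S$-functionals together with the three Gaussian weights merge into the single $S$-functional evaluated with the one weight $e^{g(\psi;\mathcal{M})}$ appearing on the right of \eqref{sk-6-13}.

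Next I would settle the linear data. Using \eqref{sk-3-13} and Definition \ref{sk-d-13}, each source can be rewritten as $S_{i,\Sigma}(\eta_i,\delta_\eta;\mathcal{M}_i)=-\int_\Sigma\mathrm{d}^{n-1}q\,(N_{\Sigma,i}(q)\phi^{\eta_i}_i(q))\,\delta_{\eta(q)}$, so the two sources impose the external forcing $-\sum_i N_{\Sigma,i}\phi^{\eta_i}_i$ on $\Sigma$. Completing the square in the $\eta$-Gaussian shifts its mean to $\tau=(D(\Sigma,\mathcal{M}_l)+D(\Sigma,\mathcal{M}_r))^{-1}(-\sum_i N_{\Sigma,i}\phi^{\eta_i}_i)$, which by Lemma \ref{sk-l-6} and the continuity (vanishing normal-derivative jump) of the full extension across $\Sigma$ is precisely the trace on $\Sigma$ of the single harmonic field $\phi^{\eta_l+\eta_r}$ on $\mathcal{M}$. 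This shift does two things at once: inside the merged $S$-functional it promotes the background from $\phi^{\eta_i}_i$ (which vanishes on $\Sigma$) to the restriction of $\phi^{\eta_l+\eta_r}$, and it leaves a residual boundary quadratic form which, combined with the explicit prefactor $e^{-S_0[\phi^{\eta_l}_l;\mathcal{M}_l]-S_0[\phi^{\eta_r}_r;\mathcal{M}_r]}$ through Corollary \ref{sk-c-5} and the full-manifold identity \eqref{sk-3-15}, assembles into $e^{-S_0[\phi^{\eta_l+\eta_r};\mathcal{M}]}$.

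The hard part will be executing these shifts simultaneously and without double counting, checking that the $\eta$-mediated corrections convert each submanifold datum into its full-manifold counterpart: $G_i$ into the restriction of $G$ to $\mathcal{M}_i\times\mathcal{M}_i$ (here Lemma \ref{sk-l-7} guarantees the relevant kernel differences are genuinely non-singular at the boundaries so the manipulations are licit), the background $\phi^{\eta_i}_i$ into the restriction of $\phi^{\eta_l+\eta_r}$, and the two submanifold boundary forms into the single full one, all while the deformed integration domains glue as $\mathcal{M}_{l,\Lambda}\cup\mathcal{M}_{r,\Lambda}$; the relations \eqref{sk-6-3}--\eqref{sk-6-4} are the bridge that makes each of these identifications exact. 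Once the propagator, the background field, and the quadratic prefactor are each matched to the right-hand side of \eqref{sk-6-13}, the expression equals $Z(\eta_l+\eta_r;\mathcal{M}_{l,\Lambda}\cup\mathcal{M}_{r,\Lambda})$, which is the first displayed equality. Finally, \eqref{sk-6-22}--\eqref{sk-6-27} follow by passing to the limit $\mathcal{M}_{l,\Lambda}\cup\mathcal{M}_{r,\Lambda}\to\mathcal{M}_\Lambda$ and reading off the displayed bracket as the definition of the gluing pairing.
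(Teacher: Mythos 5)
Your proposal is correct and follows the same overall route as the paper: both start from the representation \eqref{sk-6-8}, treat the expression order by order in $\sqrt{\hbar}$ as a Gaussian computation in the three sources $(\psi_l,\psi_r,\eta)$, and reconstruct the full deformed propagator $G^\Lambda=G_i^\Lambda\delta_{ij}+G_{ij}^\Lambda$ exactly as in the paper's first stage via Lemma \ref{sk-l-11}. Where you differ is in the bookkeeping of the boundary data: the paper enumerates the remaining Wick pairings case by case (field--boundary-source pairings in its Stage 2, boundary-source--boundary-source pairings in its Stage 3, using \eqref{sk-6-3}--\eqref{sk-6-4} and the shift identity \eqref{sk-6-26} to convert $\phi^{\eta_i}_i$ into $\phi^{\eta_l}+\phi^{\eta_r}$ and to assemble \eqref{sk-6-20}), whereas you complete the square in the $\eta$-Gaussian and identify the shifted mean $\tau=(D(\Sigma,\mathcal{M}_l)+D(\Sigma,\mathcal{M}_r))^{-1}(-\sum_i N_{\Sigma,i}\phi^{\eta_i}_i)$ with the trace of $\phi^{\eta_l+\eta_r}$ on $\Sigma$ through the vanishing of the normal-derivative jump. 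The two organizations are equivalent Gaussian manipulations, but yours is more conceptual (it makes visible \emph{why} the glued background is the harmonic extension on $\mathcal{M}$), while the paper's pairing-by-pairing version is more directly verifiable term by term and is what actually certifies relations such as \eqref{sk-6-19}--\eqref{sk-6-20}; your residual quadratic form does assemble into $-S_0[\phi^{\eta_l+\eta_r};\mathcal{M}]$ via \eqref{sk-3-16}, \eqref{sk-3-22} and the cancellation of the $\Sigma$-boundary terms from Corollary \ref{sk-c-4}. The only substantive thinness is the final limit \eqref{sk-6-22}: the paper justifies it by constructing a monotone family $\theta(s)$ of domains interpolating between $\mathcal{M}_{l,\Lambda}\cup\mathcal{M}_{r,\Lambda}$ and $\mathcal{M}_\Lambda$ and arguing continuity of each regularized coefficient in $s$, whereas you simply assert the limit; you should supply that (short) continuity argument to close the proof.
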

\begin{proof} It is necessary to show the equality of the formal series \eqref{sk-6-8} and \eqref{sk-6-13} with respect to the small parameter $\sqrt{\hbar}$. Starting with the first one, we convert it to the declared form. First of all, it is worth paying attention to the fact that the series from \eqref{sk-6-8} is obtained by applying the Wick's theorem over three fields and, therefore, is determined by field pairings, see \cite{sk-b-16,sk-29-3}. The main idea of this proof is to replace all possible combinations of pairings with equivalent ones, thus moving to a new simpler representation. 
	
\underline{Stage 1.} The first type of pairing corresponds to a situation where both fields belong to the functional $S(\,\cdot\,\,;\,\cdot\,)$. In this case, according to the Wick's theorem, there are two basic nonzero pairings
\begin{equation}\label{sk-6-9}
\delta_{\eta(q_1)}\delta_{\eta(q_2)}\to G(q_1,q_2),\,\,\,
\delta_{\psi_i(p_1)}\delta_{\psi_j(p_2)}\to G_{i}^{\phantom{'}}(p_1,p_2)\delta_{ij}^{\phantom{1}}.
\end{equation}
Given the fact that the boundary value of $\eta$ uniquely constructs a continuation inside the submanifold, see Lemma \ref{sk-l-4} and Definition \ref{sk-d-13}, the first transition can be rewritten in the following equivalent way
\begin{equation}\label{sk-6-10}
\phi^{\delta_{\eta}}_i(p_1)\phi^{\delta_{\eta}}_j(p_2)\to \int_{\Sigma}\mathrm{d}^{n-1}q_1\int_{\Sigma}\mathrm{d}^{n-1}q_2
\Big(N_{\Sigma,i}^{\phantom{1}}(q_1)G_i^{\phantom{'}}(p_1,q_1)\Big)
G(q_1,q_2)
\Big(N_{\Sigma,j}^{\phantom{1}}(q_2)G_j^{\phantom{'}}(q_2,p_2)\Big).
\end{equation}
Next, adding the deforming operators $\mathrm{H}_\omega^\Lambda$, we get
\begin{equation}\label{sk-6-11}
	\mathrm{H}_{\omega_i}^\Lambda(\phi^{\delta_{\eta}}_i)(p_1)\mathrm{H}_{\omega_j}^\Lambda(\phi^{\delta_{\eta}}_j)(p_2)\to
	G_{ij}^\Lambda(p_1,p_2),\,\,\,
\mathrm{H}_{\omega_i}^\Lambda(\delta_{\psi_i})(p_1)\mathrm{H}_{\omega_j}^\Lambda(\delta_{\psi_j})(p_2)\to
G_{i}^\Lambda(p_1,p_2)\delta_{ij}^{\phantom{1}}.
\end{equation}
The latter relation can be transformed by considering a linear combination of pairings. Indeed, taking into account the result of Lemma \ref{sk-l-11}, since the points $p_1$ and $p_2$ in the $S$-functionals belong to the deformed submanifolds, that is, $p_1\in\mathcal{M}_{i,\Lambda}$ and $p_2\in\mathcal{M}_{j,\Lambda}$, the following transition holds
\begin{equation}\label{sk-6-12}
\Big(\mathrm{H}_{\omega_i}^\Lambda(\delta_{\psi_i})(p_1)+\mathrm{H}_{\omega_i}^\Lambda(\phi^{\delta_{\eta}}_i)(p_1)\Big)\Big(\mathrm{H}_{\omega_j}^\Lambda(\delta_{\psi_j})(p_2)+\mathrm{H}_{\omega_j}^\Lambda(\phi^{\delta_{\eta}}_j)(p_2)\Big)\to G^\Lambda(p_1,p_2).
\end{equation}
This transition is important in our case, since the $S$-functionals depend precisely on the sum $\mathrm{H}_{\omega_i}^\Lambda(\delta_{\psi_i})+\mathrm{H}_{\omega_i}^\Lambda(\phi^{\delta_{\eta}}_i)$. Thus, taking into account Lemma \ref{sk-l-9}, it can be seen that the last transition can be represented in the equivalent way
\begin{equation}\label{sk-6-14}
\mathrm{H}_\omega^\Lambda(\delta_{\psi})(p_1)\mathrm{H}_\omega^\Lambda(\delta_{\psi})(p_2)\to G^\Lambda(p_1,p_2),
\end{equation}
meaning already the pairing over fields with the quadratic form $g(\psi;\mathcal{M})$, which corresponds to the problem on the manifold $\mathcal{M}$.

\underline{Stage 2.} The next type of pairing corresponds to the situation when the first field belongs to the functional $S(\,\cdot\,\,;\,\cdot\,)$, while the second one follows from the action on the boundary $S_{j,\Sigma}(\eta_j,\delta_\eta;\mathcal{M}_j)$, see \eqref{sk-3-13}. By conducting absolutely similar arguments, we make sure that the transition holds
\begin{multline*}\label{sk-6-15}
\mathrm{H}_{\omega_i}^\Lambda(\phi^{\delta_{\eta}}_i)(p_1)
S_{j,\Sigma}(\eta_j,\delta_\eta;\mathcal{M}_j)\to\\
\to
-\int_{\Sigma}\mathrm{d}^{n-1}q_1\int_{\Sigma}\mathrm{d}^{n-1}q_2
\bigg(\int_{\mathcal{M}_i}\mathrm{d}^{n}p_3\,\omega_i(p_1,p_3;\Lambda)N_{\Sigma,i}^{\phantom{1}}(q_1)G_i^{\phantom{'}}(p_3,q_1)\bigg)\times\\\times
G(q_1,q_2)\bigg(
\int_{Y_j}\mathrm{d}^{n-1}q_3
\Big(N_{\Sigma,j}^{\phantom{1}}(q_2)N_{j}^{\phantom{1}}(q_3)G_j^{\phantom{'}}(q_2,q_3)\Big)\eta_j(q_3)\bigg).
\end{multline*}
Note that the last integral can be rewritten using formulas \eqref{sk-6-3} and \eqref{sk-6-4} in the form
\begin{equation*}
-\int_{Y_j}\mathrm{d}^{n-1}q_3
\int_{\mathcal{M}_i}\mathrm{d}^{n}p_3\,\omega_i(p_1,p_3;\Lambda)
\Big(N_{j}^{\phantom{1}}(q_3)\big(G(p_3,q_3)-\delta_{ij}^{\phantom{'}}G_j^{\phantom{'}}(p_3,q_3)\big)\Big)\eta_j(q_3)=\mathrm{H}_{\omega_i}^\Lambda(\phi^{\eta_j}_{\phantom{1}}-\delta_{ij}\phi^{\eta_j}_{j})(p_1).
\end{equation*}
Therefore, given the inclusion $p_1\in\mathcal{M}_{i,\Lambda}$, the transition is valid for the sum of pairings
\begin{equation}\label{sk-6-17}
\mathrm{H}_{\omega_i}^\Lambda(\phi^{\delta_{\eta}}_i)(p_1)
\Big(S_{l,\Sigma}(\eta_l,\delta_\eta;\mathcal{M}_l)+S_{r,\Sigma}(\eta_r,\delta_\eta;\mathcal{M}_r)
\Big)
\to
\mathrm{H}_\omega^\Lambda(\phi^{\eta_l}_{\phantom{1}}+\phi^{\eta_r}_{\phantom{1}}-\phi^{\eta_i}_{i})(p_1).
\end{equation}
This means that the argument in $S$-functionals is shift to the last function, turning  $\phi^{\eta_i}_{i}$ into the sum $\phi^{\eta_l}_{\phantom{1}}+\phi^{\eta_r}_{\phantom{1}}$, constructed for the manifold $\mathcal{M}$. The last fact about the shift follows from a well-known auxiliary statement from quantum field theory, which, within the framework of the calculations under consideration, can be formulated as follows: let $F(\phi)$ be a functional decomposable in a series according to the argument $\phi$, then the shift of the argument can be rewritten as
\begin{equation}\label{sk-6-26}
F(\phi+\phi_0)=
\exp\bigg(\int_{\mathcal{M}}\mathrm{d}^np\,\phi_0(p)\delta_{\psi(p)}\bigg)
F(\phi+\psi)\bigg|_{\psi=0}.
\end{equation}
Such equality is proved by simple series expansion and calculation of derivatives. Finally, it remains only to pay attention that any pairing \eqref{sk-6-17} can be represented as an action of a functional derivative on an auxiliary field.

\underline{Stage 3.} The latter type of pairing includes only boundary actions, see \eqref{sk-3-13}. Following the general logic and using relations \eqref{sk-6-3} and \eqref{sk-6-4}, we write out the answer for the possible pair
\begin{multline*}\label{sk-6-18}
S_{i,\Sigma}(\eta_i,\delta_\eta;\mathcal{M}_i)
S_{j,\Sigma}(\eta_j,\delta_\eta;\mathcal{M}_j)
	\to
\int_{Y_i}\mathrm{d}^{n-1}q_1\int_{Y_j}\mathrm{d}^{n-1}q_2\,\eta_i(q_1)\times\\\times
\Big(N_{i}^{\phantom{1}}(q_1)N_{j}^{\phantom{1}}(q_2)
\big(G(q_1,q_2)-\delta_{ij}G_i(q_1,q_2)\big)\Big)\eta_j(q_2).
\end{multline*}
It is clear that the last sum, taking into account the results of Lemmas \ref{sk-l-5} and \ref{sk-l-7}, can be rewritten as $S_{l,r}(\eta_l,\eta_r;\mathcal{M})$, if $i\neq j$, and in the form
\begin{equation}\label{sk-6-19}
-2S_0[\phi^{\eta_j}_{\phantom{1}};\mathcal{M}]+2S_0[\phi^{\eta_j}_j;\mathcal{M}_j^{\phantom{'}}]
\end{equation}
for the case $i=j$. Finally, transferring the reasoning to the exponents and performing the partition into pairs for linear functions, we obtain the following relation
\begin{equation}\label{sk-6-20}
e^{S_{l,\Sigma}(\eta_l,\delta_\eta;\mathcal{M}_l)+S_{r,\Sigma}(\eta_r,\delta_\eta;\mathcal{M}_r)}
e^{g_0(\eta;\Sigma)}\Big|_{\eta=0}=e^{S_{l,r}(\eta_l,\eta_r;\mathcal{M})-S_0[\phi^{\eta_l};\mathcal{M}]-S_0[\phi^{\eta_r};\mathcal{M}]+S_0[\phi^{\eta_l}_l;\mathcal{M}_l]+S_0[\phi^{\eta_r}_r;\mathcal{M}_r]}.
\end{equation}

\underline{Stage 4.} After calculating all possible situations, it is necessary to rebuild the formula taking into account the proposed substitutions
\begin{multline}\label{sk-6-23}
	e^{-S_0[\phi^{\eta_l};\mathcal{M}]-S_0[\phi^{\eta_r};\mathcal{M}]}e^{S_{l,r}(\eta_l,\eta_r;\mathcal{M})}\times\\\times
	S(\mathrm{H}_\omega^\Lambda(\delta_{\psi}+\phi^{\eta_l}+\phi^{\eta_r});\mathcal{M}_{l,\Lambda})
	S(\mathrm{H}_\omega^\Lambda(\delta_{\psi}+\phi^{\eta_l}+\phi^{\eta_r});\mathcal{M}_{r,\Lambda})
	e^{g(\psi;\mathcal{M})}\Big|_{\psi=0}.
\end{multline}
Next, using Definition \eqref{sk-5-6} for the $S$-functional, we obtain the relation
\begin{equation}\label{sk-6-24}
S(\mathrm{H}_\omega^\Lambda(\delta_{\psi}+\phi^{\eta_l}+\phi^{\eta_r});\mathcal{M}_{l,\Lambda})
	S(\mathrm{H}_\omega^\Lambda(\delta_{\psi}+\phi^{\eta_l}+\phi^{\eta_r});\mathcal{M}_{r,\Lambda})=
S(\mathrm{H}_\omega^\Lambda(\delta_{\psi}+\phi^{\eta_l}+\phi^{\eta_r});\mathcal{M}_{l,\Lambda}\cup\mathcal{M}_{r,\Lambda}).
\end{equation}
Then, taking into account the decomposition from \eqref{sk-3-15}, equality \eqref{sk-8-8} and Definition \ref{sk-d-23}, formula \eqref{sk-6-23} turns into the declared equality from \eqref{sk-6-13}. 

\underline{Stage 5.} Relation \eqref{sk-6-22} is the result of a limit transition. Now we show that this transition is possible. Let $\theta(\cdot):[0,1]\to\mathcal{M}_\Lambda$ be a continuous (relative to the volume of the manifold) mapping of the line segment into a set of smooth manifolds such that
\begin{equation}\label{sk-6-25}
\theta(0)=\mathcal{M}_{l,\Lambda}\cup\mathcal{M}_{r,\Lambda},\,\,\,
\theta(1)=\mathcal{M}_{\Lambda}\,\,\,\mbox{and}\,\,\,
\theta(s_1)\subset\theta(s_2)
\end{equation}
for all $0\leqslant s_1<s_2\leqslant1$. It follows from the regularization construction that each coefficient in the expansion of the function $\Theta(s)=Z(\eta_l+\eta_r;\theta(s))$ by the parameter $\sqrt{\hbar}$ is regularized for all $s\in[0,1]$. Moreover, each coefficient of the expansion for $\Theta(\cdot)$ is a continuous function on $[0,1]$ due to the finite number of terms and the fact that $\theta(\cdot)$ plays the role of an integration domain in the integral of a continuous function. Thus, the transition is possible.
\end{proof}

\subsection{Corollaries}
\label{sk:sec:app-3}

\begin{col}\label{sk-c-9}
The statement of Theorem \ref{sk-t-1} is also true in the case when $\{t_k\}_{k=0}^2$ from \eqref{sk-2-5} are not equal to zero.
\end{col}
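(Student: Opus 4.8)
The plan is to show that the proof of Theorem \ref{sk-t-1} never actually used the lower bound $k\geqslant3$ in the sum defining $S_{\mathrm{int}}$, so that admitting nonzero $t_0,t_1,t_2$ only enlarges the collection of interaction vertices without touching the structure on which the argument rests. Concretely, Stages 1--4 depend on exactly two facts: the additivity of the interaction functional over disjoint integration domains, which is what produces the factorization \eqref{sk-6-24}, and the Wick--pairing bookkeeping, which in turn rests only on the Green's function gluing identities \eqref{sk-6-3}, \eqref{sk-6-4} and on Lemma \ref{sk-l-11}. Neither is sensitive to the valence of the vertices in the densities $S_k$. First I would note that, since $\mathcal{M}_{l,\Lambda}$ and $\mathcal{M}_{r,\Lambda}$ are disjoint, one still has $S_{\mathrm{int}}[\,\cdot\,;\mathcal{M}_{l,\Lambda}\cup\mathcal{M}_{r,\Lambda}]=S_{\mathrm{int}}[\,\cdot\,;\mathcal{M}_{l,\Lambda}]+S_{\mathrm{int}}[\,\cdot\,;\mathcal{M}_{r,\Lambda}]$ with the $k=0,1,2$ terms included, so \eqref{sk-6-24} is unchanged; the constant $k=0$ contribution $t_0\,\mathrm{Vol}(\mathcal{M}_\Lambda)$ is field--independent and splits additively under the cut, hence is automatically consistent.

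Next I would verify the two points that genuinely require attention. The boundary functionals $S_{i,\Sigma}$ entering Stages 2 and 3 originate from $S_0$ through Lemma \ref{sk-l-5}, not from $S_{\mathrm{int}}$, so those stages are literally unchanged by the new couplings. For the regularization itself (Lemma \ref{sk-l-10}), the only new graphs are the self--contraction of the bivalent vertex and contractions between univalent vertices; by \eqref{sk-8-9} these equal $t_2\int_{\mathcal{M}_\Lambda}\mathrm{d}^np\,G^\Lambda(p,p)$ and $t_1^2\iint_{\mathcal{M}_\Lambda}G^\Lambda$, which are finite because $G^\Lambda$ is bounded on the diagonal by Corollary \ref{sk-c-1}. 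Thus no new divergences appear, the deformed Green's function still obeys \eqref{sk-6-5}, and the pairing replacements of Stages 1--2 carry over verbatim.

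The hard part will be the finiteness of each order in $\sqrt{\hbar}$, which underlies the continuity argument of Stage 5. A vertex of valence $k$ carries weight $\hbar^{k/2-1}$, so the $k=2$ vertex has weight $\hbar^{0}$, and a univalent vertex capping a leg of a cubic vertex acts as an effective bivalent insertion; either way arbitrarily many such insertions can be placed on a single propagator line without raising the loop order, so a naive count would give infinitely many terms at a fixed power of $\hbar$. I would resolve this by invoking the convergence hypothesis of Remark \ref{sk-r-1}, which is precisely what guarantees that the resulting series is well defined, and equivalently by resumming the bivalent chain: the geometric series of $t_2$--insertions on a line sums to the modified regularized propagator $\big((G^\Lambda)^{-1}+2t_2\big)^{-1}$, i.e.\ the same construction with the mass redefinition $m^2\mapsto m^2+2t_2$ anticipated in the footnote to \eqref{sk-2-5}, while the linear $t_1$ term is removed by a corresponding shift of the background field $\phi^\eta$. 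Because each insertion is integrated over $\mathcal{M}_\Lambda$, and this domain decomposes additively as $\mathcal{M}_{l,\Lambda}\cup\mathcal{M}_{r,\Lambda}$, the resummation commutes with the cut along $\Sigma$; the modified propagator therefore satisfies the same gluing relation \eqref{sk-6-5}, the reduction lands in the situation already covered by Theorem \ref{sk-t-1} (now with a possibly non-positive $m^2$), and Stages 4 and 5 go through unchanged to yield \eqref{sk-6-13} and \eqref{sk-6-22}.
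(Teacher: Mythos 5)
Your core argument is the same as the paper's: the proof of Theorem \ref{sk-t-1} is a diagram-by-diagram (pairing-by-pairing) identity that never uses the valence restriction $k\geqslant3$ in $S_{\mathrm{int}}$, only the additivity of the $S$-functional over $\mathcal{M}_{l,\Lambda}\cup\mathcal{M}_{r,\Lambda}$ and the gluing relations for the (deformed) Green's functions. The paper disposes of the corollary in exactly one sentence to this effect, so your Stages on factorization, on the boundary functionals coming from $S_0$, and on the finiteness of the new tadpole and bubble contractions via the boundedness of $G^\Lambda$ on the diagonal are a faithful, more explicit version of the intended proof.

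Where you go beyond the paper I would flag one step. Your resummation of the $t_2$-chain into $\big((G^\Lambda)^{-1}+2t_2\big)^{-1}$ and the identification with the regularized theory at mass $m^2+2t_2$ is not an identity here: by Definition \ref{sk-d-22} and Remark \ref{sk-r-14} only $S_{\mathrm{int}}$ is deformed, so the quadratic interaction vertex $t_2\int_{\mathcal{M}_\Lambda}\big(\mathrm{H}_\omega^\Lambda\phi\big)^2$ carries the averaging kernel and the restriction to $\mathcal{M}_\Lambda$, whereas a mass shift inside $S_0$ would not. The two theories differ at order $1/\Lambda$, which is precisely the regime the regularization is designed to control, so the reduction "to the situation already covered by Theorem \ref{sk-t-1} with shifted $m^2$" does not literally hold (the footnote to \eqref{sk-2-5} performs that reduction at the level of the \emph{undeformed} classical action only). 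Fortunately you do not need it: the worry about infinitely many terms at a fixed power of $\sqrt{\hbar}$ is resolved by grading the formal series additionally by the powers of $t_0,t_1,t_2$ (equivalently, by the number of vertices), so that each multi-degree contains finitely many pairings and the replacements of Stages 1--4 are performed on finite sums; the $k=0$ term factors out as $e^{-t_0\mathrm{Vol}(\mathcal{M}_\Lambda)/\hbar}$ and is additive under the cut, as you note. With that bookkeeping in place of the resummation, your proof is correct and coincides in substance with the paper's.
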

\begin{col}\label{sk-c-10}
The statement of Theorem \ref{sk-t-1}  remains true for the case when the coefficients $\{t_k\}$ depend on the point of the manifold (in a smooth way) and, moreover, have the form of a local differential operator. In the latter case, an averaging kernel of suitable smoothness should be used, see Definition \ref{sk-d-18}.
\end{col}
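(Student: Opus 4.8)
The plan is to retrace the five stages of the proof of Theorem \ref{sk-t-1}, checking that the only place where the scalar, constant nature of the couplings $t_k$ is used is through the density $S_k$ of the $S$-functional in \eqref{sk-5-6}, and that all required modifications reduce to a single regularity statement about the deformed Green's function. First I would record how the interaction changes: writing $t_k=t_k(p)$ for a smooth scalar coupling, or more generally letting $t_k$ be a local differential operator of some finite order $m$ acting at the vertex, the functional $S_{\mathrm{int}}$ of \eqref{sk-3-5} retains the shape of a single integral of a density supported at one point. Consequently Lemma \ref{sk-l-8} generalizes verbatim: each $S_k(p_1,\ldots,p_k)$ is still a finite sum of products of (now possibly differentiated) delta-functionals whose arguments are drawn from $\{p_i\}_{i=1}^k$. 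In particular the $S$-functional continues to factorize over disjoint integration domains, so the identity \eqref{sk-6-24}, which is the structural heart of Stage 4, is left untouched.

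Next I would observe that Stages 1--3 involve only the Wick pairings of the fields, that is, the quadratic form $S_0$ together with its Green's functions, the gluing relations \eqref{sk-6-3} and \eqref{sk-6-4}, and their deformed counterpart from Lemma \ref{sk-l-11}. None of these objects depends on $S_{\mathrm{int}}$, so the substitutions \eqref{sk-6-12}, \eqref{sk-6-17} and \eqref{sk-6-20} carry over unchanged; the differential operators coming from $t_k$ simply ride along, acting on the external vertex points. Since a differential operator is infinitesimally local and, for $p$ in the interior of $\mathcal{M}_{i,\Lambda}$, the deforming kernel coincides with $\omega_i$ together with all its derivatives by Lemma \ref{sk-l-9}, applying the vertex operator at such a point on $\mathcal{M}$ agrees with applying it on $\mathcal{M}_i$. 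Hence the bookkeeping of Stages 1--3 and the passage to \eqref{sk-6-23} is preserved, and the algebraic identity $\tilde{Z}=Z(\eta_l+\eta_r;\mathcal{M}_{l,\Lambda}\cup\mathcal{M}_{r,\Lambda})$ of \eqref{sk-6-13} follows by the same combinatorics.

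The one genuinely new point, and the main obstacle, is the finiteness statement underlying Lemma \ref{sk-l-10}. When $t_k$ carries derivatives of order up to $m$, the pairings now produce expressions $\partial^a_{p_1}\partial^b_{p_2}G^\Lambda(p_1,p_2)$ with $|a|+|b|\le m$, evaluated in particular at coincident vertex arguments. To keep every coefficient of the $\sqrt{\hbar}$-expansion finite I would demand an $m$-acceptable kernel $\omega\in\Omega_m^\Lambda(\mathcal{M})$ in the sense of Definition \ref{sk-d-18}, so that the deformed Green's function lies in $C^m$ near the diagonal. The decomposition \eqref{sk-6-5} then survives termwise differentiation: the vertices lie in $\mathcal{M}_{i,\Lambda}$, hence away from $\Sigma$, and both $G_i^\Lambda$ and $G^\Lambda_{ij}$ inherit the requisite smoothness from the $m$-acceptability of the triple $\{\omega,\omega_l,\omega_r\}$ fixed as in Remark \ref{sk-r-15}. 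The hardest part is precisely verifying that differentiating the gluing relations \eqref{sk-6-3}--\eqref{sk-6-4}, and thereby the kernel \eqref{sk-6-7}, introduces no uncontrolled diagonal contribution; this is exactly what the choice of smoothness class in Definition \ref{sk-d-18} is designed to rule out, so I would discharge it by the same local-asymptotics argument used in Lemma \ref{sk-l-7}, now applied to $m$ derivatives rather than one.

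With finiteness secured, the limit transition of Stage 5 goes through verbatim, since each expansion coefficient again reduces to an integral of a continuous integrand over the domain $\theta(s)$ and is therefore continuous in $s$. This establishes \eqref{sk-6-22} in the deformed setting and completes the extension to smooth point-dependent and operator-valued couplings.
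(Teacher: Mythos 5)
Your proposal is correct and follows essentially the same route as the paper: the paper's own proof is the one-line observation that the internal structure of the coefficients of $S_{\mathrm{int}}$ was never used in proving Theorem \ref{sk-t-1}, which is precisely what your stage-by-stage check verifies, with the smoothness requirement on the kernel (your $m$-acceptability point) already built into the statement of the corollary via the reference to Definition \ref{sk-d-18}. Your write-up is simply a more detailed justification of that same observation.
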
 
\begin{proof}
In the process of proving the theorem, the internal structure for the coefficients of the functional $S_{\mathrm{int}}$ responsible for the interaction was not used.
\end{proof}
\begin{col}\label{sk-c-11}
If the model allows multiplicative renormalization, that is, the transition from the regularized action to a renormalized one can be carried out by redefining the constants of the classical action and adding local additional terms, then taking into account Corollary \ref{sk-c-10}, relation \eqref{sk-6-22} extends to the renormalized partition function (the quantum actions).
\end{col}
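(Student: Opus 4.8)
The plan is to reduce the statement to Corollary \ref{sk-c-10} at each fixed $\Lambda$ and then to control the renormalization limit $\Lambda\to+\infty$. First I would make precise what the hypothesis supplies: multiplicative renormalizability here means that the passage from the regularized theory to a finite one is achieved by promoting the coupling constants to $\Lambda$-dependent bare values $t_k\mapsto t_k(\Lambda)$ and by adjoining finitely many local counterterms, that is, smooth (possibly differential-operator-valued) densities integrated against the field. By Remark \ref{sk-r-14} and the construction of the regularization, all of these modifications act only on the interaction functional $S_{\mathrm{int}}$ and leave the quadratic form $S_0$, the Green's functions, and the deforming operator untouched. Consequently the bare interaction is exactly of the type admitted in Corollary \ref{sk-c-10}: point-dependent coefficients of local-operator form.

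The key structural observation I would record next is that, because the renormalization prescription is local, the bare data restrict consistently under cutting along $\Sigma$. The bare interaction density on $\mathcal{M}$ is built from the same local rule as on $\mathcal{M}_l$ and $\mathcal{M}_r$, so the additive decomposition $S_{\mathrm{int}}^{\Lambda}[\,\cdot\,,\mathcal{M}_{l,\Lambda}\cup\mathcal{M}_{r,\Lambda}]=S_{\mathrm{int}}^\Lambda[\,\cdot\,,\mathcal{M}_{l,\Lambda}]+S_{\mathrm{int}}^\Lambda[\,\cdot\,,\mathcal{M}_{r,\Lambda}]$ used in Stage 4 of the proof of Theorem \ref{sk-t-1} survives verbatim with bare coefficients. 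Hence, for every fixed $\Lambda>\Lambda_1$, Corollary \ref{sk-c-10} applies and the gluing identity \eqref{sk-6-22} holds for the regularized partition functions carrying the bare couplings and counterterms.

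It then remains to pass to the renormalization limit $\Lambda\to+\infty$ on both sides of \eqref{sk-6-22}. On the left the limit converges, by the renormalizability hypothesis, order by order in $\sqrt{\hbar}$ to the renormalized partition function on $\mathcal{M}$. For the right-hand side I would use that the gluing pairing $\langle\cdot,\cdot\rangle$ of Definition \ref{sk-d-17} is, in each fixed order of $\sqrt{\hbar}$, a finite sum of Gaussian contractions against the single fixed kernel $G(\cdot,\cdot)|_{\Sigma\times\Sigma}$ supplied by Lemma \ref{sk-l-6}; this operation is polynomial, hence continuous, in the perturbative coefficients of the two submanifold partition functions. Therefore the pairing commutes with the order-by-order limit, and the right-hand side converges to the pairing of the renormalized submanifold partition functions, which is precisely the claimed extension of \eqref{sk-6-22}.

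The main obstacle I anticipate is the legitimacy of interchanging the renormalization limit with the gluing operation, together with the geometric limit $\mathcal{M}_{l,\Lambda}\cup\mathcal{M}_{r,\Lambda}\to\mathcal{M}_\Lambda$ already present inside \eqref{sk-6-22}. The difficulty is that the individual bare couplings $t_k(\Lambda)$ diverge as $\Lambda\to+\infty$, so convergence can only hold for the assembled combinations appearing in each order. What must genuinely be checked is that the local counterterms rendering the two submanifold partition functions finite are exactly the restrictions of those rendering the full-manifold function finite, so that no additional divergence is produced by the boundary integration over $\eta_\Sigma$ near $\Sigma$ as the collar of width $1/\Lambda$ closes. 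Granting this compatibility of the local subtractions with the cut, convergence on the submanifolds together with continuity of the pairing yields convergence of the glued object to the correct renormalized limit, completing the argument.
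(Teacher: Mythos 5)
Your first two paragraphs coincide with the paper's own (one-sentence) proof: the renormalized theory is again a theory of the form covered by Corollary \ref{sk-c-10} --- bare couplings $t_k(\Lambda)$ and local counterterms sit entirely inside $S_{\mathrm{int}}$, leave $S_0$ and the Green's functions untouched, and restrict additively under the cut along $\Sigma$ --- so the gluing identity \eqref{sk-6-22} holds verbatim with the bare data at every fixed $\Lambda$. That is exactly the paper's argument ("re-designation of the coupling constants and the result of Corollary \ref{sk-c-10}"), and up to that point your proposal is a faithful, more explicit rendering of it. Where you go beyond the paper is in the last two paragraphs: the paper stops at the fixed-$\Lambda$ statement and does not attempt to interchange the renormalization limit $\Lambda\to+\infty$ with the gluing pairing of Definition \ref{sk-d-17}; your observation that the pairing is, order by order in $\sqrt{\hbar}$, a finite polynomial contraction against the fixed kernel of Lemma \ref{sk-l-6} and hence continuous in the perturbative coefficients is a reasonable way to supply that missing step, and the caveat you isolate --- that the local subtractions on $\mathcal{M}_l$ and $\mathcal{M}_r$ must be the restrictions of those on $\mathcal{M}$, so that the collar near $\Sigma$ produces no new divergence --- is precisely the point the paper absorbs into the locality hypothesis of Corollary \ref{sk-c-10} without further comment (see also the discussion of renormalization in the Conclusion). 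In short: your argument contains the paper's proof as its core and adds a limit analysis the paper treats as implicit; the one item you flag as "to be checked" is indeed the only substantive assumption, and it is the same assumption the paper makes.
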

\begin{proof}
It is achieved by re-designation of the coupling constants and using the result of Corollary \ref{sk-c-10}.
\end{proof}

\section{Conclusion}
\label{sk:sec:dis}

In this paper, the regularization was constructed for the partition function corresponding to the scalar model on a smooth compact Riemannian manifold in general position and in arbitrary dimension. Such regularization is a generalization of the cutoff one in the coordinate representation and is introduced with the usage of a deforming operator in the classical action by averaging the fields in the part responsible for the interaction. The main Theorem \ref{sk-t-1} shows that this approach is consistent with the process of gluing of manifolds and the corresponding partition functions. Attention is also drawn to important corollaries, see Section \ref{sk:sec:app-3}, in particular, concerning renormalized partition functions.

\subsection{Remarks}

\hspace{\parindent}\textbf{On the dimension.}
Relation \eqref{sk-6-22} is true for any dimension. At the same time, \underline{it does not matter} whether the model is renormalizable or not. The equality \eqref{sk-6-13} is feasible regardless of how strong the singularities are contained in the partition function.

\vspace{2mm}
\textbf{On the mass parameter.} Taking into account Corollary \ref{sk-c-9}, the operator in \eqref{sk-3-1} can be defined for another value of the mass parameter $m_1^2$. With such a change, it is important to control only the condition that the first eigenvalue of the operator is positive. The difference of the mass parameters $m^2_{\phantom{1}}-m_1^2$ should be added to the term from \eqref{sk-2-5} responsible for the interaction.

\vspace{2mm}
\textbf{On the applicability of the regularization.} Separately, attention should be paid to the fact that the regularization formulated in Section \ref{sk:sec:reg:3} can be applied to a wider class of models, including vector, gauge and spinor fields. The recipe for introducing regularization is the same and simple: it is necessary to remove a neighborhood of the boundary and average the fields in the interaction using a $j$-acceptable kernel with a suitable value of $j$, see Definition \ref{sk-d-18}.

\vspace{2mm}
\textbf{On the generalization of Theorem to other models.} The gluing procedure for the regularized partition functions also allows a generalization. For example, in the case when the field has a set of indices, that is,  $\phi(p)\to\phi^a(p)$, and the operator is a matrix-valued one $A(p)\to A^{ab}(p)$. The main limiting factor in discussing this generalization is the possibility of correctly posed spectral problem on the manifold, similar to \eqref{sk-3-6}, on the cut surface, similar to Lemma \ref{sk-l-6}, as well as the solvability of the boundary value problem, as it was in \eqref{sk-3-9}. All these conditions affect only the free theory and are not related to the appearance of divergences.

\vspace{2mm}
\textbf{On the regularization of the determinant.} In addition, we comment on the topic related to the regularization of determinants. In Section \ref{sk:sec:def:3} it was indicated that the considered functional integral from \eqref{sk-5-1} contains such a normalization that in the absence of the interaction and under zero boundary conditions, the integral is equal to one. Such a requirement is natural, including from a physical point of view. Indeed, in the absence of external signals, a free system is of no interest.

At the same time, in formula \eqref{sk-5-1}, the determinant, or rather a division of two determinants, is well defined. Let $\mathcal{A}(\eta)$ denote the operator \eqref{sk-3-1} on the manifold $\mathcal{M}$, given on the set of functions taking on the boundary a value  $\eta\in C^\infty(\mathcal{M},\mathbb{R})$. Then, in the absence of interaction, the integral from \eqref{sk-5-1} is rewritten as
\begin{align*}
	e^{-W_{\mathrm{eff}}[\sqrt{\hbar}\eta;\mathcal{M}]/\hbar}\Big|_{\{t_k=0\}_{k\geqslant0}}=&\,\,
	\mathcal{N}^{-1}(\mathcal{M})
	\int\limits_{\mathcal{H}(\sqrt{\hbar}\eta;\mathcal{M})}
	\mathcal{D}\phi\,e^{-S_{\mathrm{0}}[\phi;\mathcal{M}]/\hbar}\\
	=&	\int\limits_{\mathcal{H}(\eta;\mathcal{M})}
	\mathcal{D}\phi\,e^{-S_{\mathrm{0}}[\phi;\mathcal{M}]}\bigg/
		\int\limits_{\mathcal{H}(0;\mathcal{M})}
	\mathcal{D}\phi\,e^{-S_{\mathrm{0}}[\phi;\mathcal{M}]}\\
	=&\,\,\sqrt{\frac{\det(\mathcal{A}(0))}{\det(\mathcal{A}(\eta))}}
	=e^{-S_0[\phi^{\eta};\mathcal{M}]}.
\end{align*}
Thus, the relation is defined correctly. Nevertheless, we note that the determinants for operators with zero boundary conditions, depending only on the geometry of the manifold, taking into account a suitable regularization, also satisfy the gluing relation, see \cite{sk-b-17,sk-b-18} and Section $4.1$ in \cite{sk-16}. In the framework of this work devoted to quantum field issues, this topic was not of interest and was not considered, since the divergences for the "empty" problem are not related to "standard" ultraviolet divergences and are the result of an incorrect choice of normalization.

\vspace{2mm}
\textbf{On the issue of renormalization.} This item is devoted to discussing the type of divergences that arise in the partition function. The fact is that the regularized partition function can be understood in two different ways. Either as a functional defined on the fields $\eta$ on the boundary, that is, in the form of a formal series
\begin{equation}\label{sk-9-1}
Z(\eta;\mathcal{M}_\Lambda)=e^{-S_0[\phi^{\eta};\mathcal{M}]}\Bigg(1+\sum_{k=1}^{+\infty}
\bigg(\prod_{i=1}^{k}\int_{Y}\mathrm{d}^{n-1}q_i\bigg)Z_{1,k}(q_1,\ldots,q_2)\eta(q_1)\ldots\eta(q_k)\Bigg),
\end{equation}
or as a functional defined on the fields $\phi^{\eta}$, having a given behavior on the boundary and extended inside the manifold $\mathcal{M}$, that is, in the form
\begin{equation}\label{sk-9-2}
	Z(\eta;\mathcal{M}_\Lambda)=e^{-S_0[\phi^{\eta};\mathcal{M}]}\Bigg(1+\sum_{k=1}^{+\infty}
	\bigg(\prod_{i=1}^{k}\int_{\mathcal{M}_\Lambda}\mathrm{d}^np_i\bigg)Z_{2,k}(p_1,\ldots,p_2)\phi^{\eta}(p_1)\ldots\phi^{\eta}(p_k)\Bigg).
\end{equation}
In this case, representation \eqref{sk-9-1} is obtained from \eqref{sk-9-2} by substituting \eqref{sk-3-10}. Thus, the densities $Z_{1,k}$ and $Z_{2,k}$ are related to each other. However, the transition for a multiplier with a free action is not equivalent. Indeed, the partitions function according to formula \eqref{sk-8-8} contains an action of the form  $S_0[\phi^\eta;\mathcal{M}]$, which according to the decomposition from \eqref{sk-3-5} contains the functionals \eqref{sk-3-20}. The latter objects are well defined, however, if we substitute an explicit form of the integral kernel for the operators from \eqref{sk-3-19}, then non-integrable densities will arise. In this case, the simplest well-defined functional will lose convergence, and additional divergences will appear in theory.

Given the mutual ambiguity of  $\eta\leftrightarrow\phi^\eta$, it is proposed to consider the decomposition from \eqref{sk-9-2} as the correct representation. In this case, the renormalization process takes on a more elegant form. Indeed, with the use of \eqref{sk-9-1}, additional divergences appeared during gluing and the renormalization process was not clear, because some additional renormalization procedure was needed that corresponds to the gluing mapping. 

If we use \eqref{sk-9-2}, everything turns out neatly. It is sufficient to have a multiplicative renormalization of the model on a compact smooth manifold (without discussing gluing). Then, from the consistency of regularization with the gluing process, it follows that the renormalized objects are also consistent with it, and the gluing process itself does not introduce new divergences.

\subsection{Acknowledgements}

\hspace{\parindent}The work is supported by the Ministry of Science and Higher Education of the Russian
Federation in the framework of a scientific project under agreement from 14.06.2024, grant № 075-15-2024-631. Also, the author is the winner of the "Young Mathematics of Russia" contest conducted by the Theoretical Physics and Mathematics Advancement Foundation "BASIS" and would like to thank its sponsors and jury.

\vspace{2mm}
The author thanks N.Yu.Reshetikhin for recommending the link \cite{sk-16}. The author is grateful to N.V.Kharuk for numerous useful comments and for working together on related topics. The author is also grateful to K.A.Ivanov for creating stimulating conditions and a delightful smile.

\vspace{2mm}
\textbf{Historical background to the title page.} Nikolai Maksimovich Kalita was born on December 19, 1925 in the village of Karkhovka, Chernihiv Oblast of the Ukrainian SSR in a peasant family. His parents were natives of the same village Maxim Titovich Kalita, who was born on January 21, 1899 and died during the Great Patriotic War at the front, and Domna Vasilyevna Kalita (Kostenko), who was born on October 10, 1900 and met old age in Girmanka. Their lineage can be traced back to the metrical books relating to the Ioanno-Bogoslovskaya Church. According to the entries in N.M.'s work book, he served in the Soviet Army from December 11, 1944 to October 13, 1978 and retired with the rank of major. He married Lydia Matveevna Mukhopad on September 2, 1951 in Lviv, who was born on March 18 (April 17, according to documents), 1929 in the khutor of Babakovka in the Bilopillia Raion of the Sumy Oblast. Later, they had two children: Svetlana (28.01.1953, Horodok, Lviv Oblast) and Natalia (06.05.1956, Lutsk, Volyn Oblast). After retiring in the 80s, he and his family moved to Donetsk, where he lived until his death on December 8, 2000. His wife L.M. also died in Donetsk on February 9, 2015.

Nikolai M. Kalita had a significant influence on the author's worldview and is an example of an intelligent, decent and kind person for him.

\vspace{2mm}
\textbf{Data availability statement.} Data sharing not applicable to this article as no datasets were generated or analysed during the current study.

\vspace{2mm}
\textbf{Conflict of interest statement.} The author states that there is no conflict of interest.

\end{document}